\def\namedlabel#1#2{\begingroup
    #2%
    \def\@currentlabel{#2}%
    \phantomsection\label{#1}\endgroup
}
\newcommand{\supp}{\mathsf{supp}}
\newcommand{\poly}{\mathsf{poly}}
\newcommand{\eps}{\varepsilon}
\newcommand{\ent}{\mathsf{H}}
\newcommand{\F}{\mathbb{F}}
\newcommand{\Enc}{\mathsf{Enc}}
\newcommand{\ED}{\mathsf{ED}}
\newcommand{\HD}{\mathsf{HD}}
\newcommand{\LCS}{\mathsf{LCS}}
\newcommand{\cost}{\mathsf{cost}} 
\newcommand{\obj}{\mathsf{obj}} 
\newcommand{\idk}{\mathbbm{1}}
\newtheorem{lemma}{Lemma}[section]
\newtheorem{theorem}[lemma]{Theorem}
\newtheorem{claim}[lemma]{Claim}
\newtheorem{construction}[lemma]{Construction}
\newtheorem{definition}[lemma]{Definition}
\newtheorem{corollary}[lemma]{Corollary}
\newtheorem{remark}[lemma]{Remark}
\renewcommand{\P}{\mathrm{Pr}}
\begin{document}

\allowdisplaybreaks

\title{Efficient Linear and Affine Codes for Correcting Insertions/Deletions}
\author{
Kuan Cheng\thanks{Center on Frontiers of Computing Studies, Computer Science Department, Peking University. {\tt ckkcdh@pku.edu.cn.} Supported in part by a start-up funding of Peking University, a Simons Investigator Award (\#409864, David Zuckerman) and NSF Award CCF-1617713.  } 
\and Venkatesan Guruswami\thanks{Computer Science Department, Carnegie Mellon University. {\tt venkatg@cs.cmu.edu.} Research supported in part by NSF grant CCF-1814603.}  
\and Bernhard Haeupler\thanks{Computer Science Department, Carnegie Mellon University. {\tt haeupler@cs.cmu.edu.} Supported in part by NSF awards CCF-1814603, CCF-1910588, NSF CAREER award CCF-1750808 and a Sloan Research Fellowship.} 
\and Xin Li\thanks{Department of Computer Science,
Johns Hopkins University. {\tt lixints@cs.jhu.edu.} Supported by NSF Award CCF-1617713 and NSF CAREER Award CCF-1845349.} }

\date{}

\maketitle
\thispagestyle{empty}


\begin{abstract}

    This paper studies \emph{linear} and \emph{affine} error-correcting codes for correcting synchronization errors such as insertions and deletions. We call such codes linear/affine insdel codes.
    
    \smallskip
Linear codes that can correct even a single deletion are limited to have information rate at most $1/2$ (achieved by the trivial 2-fold repetition code). Previously, it was (erroneously) reported that more generally no non-trivial linear codes correcting $k$ deletions exist, i.e., that the $(k+1)$-fold repetition codes and its rate of $1/(k+1)$ are basically optimal
for any $k$. We disprove this and show the existence of 
binary linear codes of length $n$ and rate just below $1/2$ capable of correcting $\Omega(n)$ insertions and deletions. 
This identifies rate $1/2$ as a sharp threshold for recovery from deletions for linear codes, and reopens the quest for a better understanding of the capabilities of linear codes for correcting insertions/deletions.

 \smallskip
 We prove novel outer bounds and existential inner bounds for the rate vs. (edit) distance trade-off of linear insdel codes. We complement our existential results with an efficient synchronization-string-based transformation that converts any asymptotically-good linear code for Hamming errors into an asymptotically-good linear code for insdel errors. Lastly, we show that the $\frac{1}{2}$-rate limitation does not hold for affine codes by giving an explicit affine code of rate $1-\epsilon$ which can efficiently correct a constant fraction of insdel errors.

\end{abstract}
\newpage
\tableofcontents
\thispagestyle{empty}

\newpage
\section{Introduction}
\vspace{-1ex}
Error-correcting codes resilient to synchronization errors such as insertions and deletions (insdel errors) have witnessed a lot of exciting recent progress. Efficiently constructible and decodable codes with strong, or even near-optimal, parameters have been discovered in several regimes.

For codes over the binary or other fixed constant-sized alphabets, these include (i)  codes to correct a small (constant) number of insdel errors~\cite{SimaB19,KZXK18,BGZ-journal}, (ii) High-rate codes to correct a small fraction of insdel errors~\cite{Haeupler19,KZXK18,SychnStringsInteractive,GW}, (iii) codes of positive rate over an alphabet size $q$ to correct a fraction of insdel errors approaching $1-1/(q+\sqrt{q})$~\cite{BGH17}, (iv) near-optimal codes for block edit errors including transpositions~\cite{cheng_et_al:LIPIcs:2019:10613},
(v) list-decodable codes over large alphabets able to correct more than $100\%$ of insertions and achieving an optimal rate independent of the fraction of insertions~\cite{HSS18ListInsdel}, and (vi) list-decodable codes for any alphabet of positive rate for the largest possible combination of insertions and deletions that is information-theoretically correctible~\cite{GHS-stoc20,hayashi2018list,wachter2017list,GW}. Many of these recent results build on (ideas from) synchronization strings~\cite{CHLSW18,HS17} introduced in \cite{haeupler2017synchronization} which elegantly reduced questions of tackling synchronization errors to the much better understood problem of correcting Hamming errors and erasures. This also led to codes over large alphabets with the optimal trade-off between rate and fraction of insdel errors corrected~\cite{haeupler2017synchronization} and a near-linear encoding and decoding complexity~\cite{HRS19LinearTimeInsdel}.


Interestingly, not a single one of the above-mentioned constructions provides a linear code. Indeed, to the best of our knowledge, no linear insdel code constructions, beyond the trivial repetition codes, have appeared in the literature. This stands in stark contrast to error-correcting codes for Hamming errors where most important codes both in theory and practice are linear codes, including Hamming, Reed-Solomon and Reed-Muller codes, algebraic-geometry codes, polar codes, Turbo codes, expander codes, and LDPC codes.

Linear codes have many advantages: 1.) They have nice and compact representations via either their generator matrix $G$ or their parity check matrix $H$. 2.) These compact representations also directly imply that every linear codes has a simple and efficient $O(n^2)$-time encoding procedure $\mathrm{Enc}(x) = xG$, as well as an efficient test $Hx = 0$ for whether a given string $x$ is a codeword. Faster encoding algorithms are often possible, and when $H$ is sparse,  even linear time decoders have been discovered. 3.) Linear codes allow for simpler analysis and allow powerful methods from linear algebra to apply. For example, many properties of a linear code can be related to its (Hamming) weight distribution, including the minimum distance of a linear code which is equal to the minimum weight of any non-zero codeword. 

One of the reasons why the recent wave of results on error correcting codes for synchronization errors did not produce any results on linear insdel codes are two negative results suggesting that no such non-trivial codes exist. In particular, in 2007 \cite{AGFC07} proved that any linear code that even corrects a single deletion must have a rate of at most $1/2$, which is the rate achieved by the trivial repetition code that outputs every symbol twice. This result is independent of the size of the alphabet and stands in stark contrast to codes for Hamming errors or non-linear insdel codes which can correct a constant \emph{fraction} of errors while having a rate arbitrarily close to $1$. In \cite{BGZ-journal} it was claimed that the impossibility result of \cite{AGFC07} extends to more than one deletion. In particular, the rate achievable by any linear code correcting a fixed number $k$ of deletions is upper bound by (about) $1/(k+1)$, which is the rate of the trivial $(k+1)$-fold repetition code. 

\medskip
The starting point for this paper was the realization that, while the impossibility result of \cite{AGFC07} stands, the proof of its extension to larger $k$ in \cite{BGZ-journal} is flawed. It has since been retracted in \cite{BGZ-arxiv}, as a result of this work. This reopens the quest to better understand the potential and limitations of linear codes in tackling insertion and deletion errors.

In this paper we provide new existential results as well as explicit constructions of efficient linear codes. Our results are the first to provide any non-trivial linear insdel codes that drastically outperform simple repetition codes. 
We remark that throughout this paper we consider worst-case error corretion (or called zero-error channel coding in some other literatures), i.e. as long as the number of errors is below a certain threshold, the error correcting procedure can always recover the correct message.
This is stronger than the probabilistic insertion/deletion correction setting, e.g. \cite{hanna2018guess}, which corrects errors that are from a  distribution  independent of the message, succeeding with a high probability.

\subsection{Combinatorial/existence results}
We start by showing that quite powerful linear insdel codes exist just below the $1/2$ rate upper bound of \cite{AGFC07}. In particular, a random binary linear code with a rate just below $1/2$ is capable of correcting a linear number of errors, vastly more than the single error of the $2$-fold repetition code of rate $1/2$.

\begin{theorem}\label{simplerandombinary}
For any $\eps  > 0$, a random binary linear code of length $n$ and rate $1/2 - \eps$ has edit distance\footnote{Edit distance between two strings is the minimum number of insertions, deletions and replacements that can modify one string to be the other.} 
$\Omega(\eps \log^{-1} \frac{1}{\eps}) n$ with   probability at least $1-2^{-\Theta(n)}$.
\end{theorem}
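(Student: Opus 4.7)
The plan is a Gilbert--Varshamov-style probabilistic argument on random binary linear codes, exploiting the fact that over $\mathbb{F}_2$ any two distinct nonzero vectors are automatically linearly independent.

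Let $G \in \mathbb{F}_2^{k \times n}$ be a uniformly random generator matrix with $k = (1/2 - \eps) n$, defining $C = \{m G : m \in \mathbb{F}_2^k\}$. Fix any two distinct messages $m_1, m_2 \in \mathbb{F}_2^k$. If both are nonzero, they are linearly independent over $\mathbb{F}_2$ (the only nonzero scalar is $1$), so for uniform $G$ the pair $(m_1 G, m_2 G)$ is uniformly distributed over $(\mathbb{F}_2^n)^2$; if (say) $m_1 = 0$, then $c_1 = 0$ and $c_2 = m_2 G$ is uniform over $\mathbb{F}_2^n$. In either case, for any target $t$,
\begin{equation*}
\Pr[\ED(m_1 G, m_2 G) \le t] \;\le\; B(n, t) / 2^n,
\end{equation*}
where $B(n, t) := \max_{x \in \mathbb{F}_2^n} |\{y \in \mathbb{F}_2^n : \ED(x, y) \le t\}|$ is the binary edit-ball volume.

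The main technical step is the sharp bound $B(n, t) \le 2^{O(t \log(n/t))}$ for $t \le n/2$. I would establish this by counting optimal edit scripts: each $y$ with $\ED(x, y) \le t$ is determined by at most $t$ positions in $x$ flagged as deleted or substituted, at most $t$ positions in the output flagged as insertions, and the binary values of the $O(t)$ inserted/substituted bits, giving at most $\poly(t) \cdot \binom{n+t}{t}^3 \cdot 2^{O(t)} \le 2^{O(t \log(n/t))}$ possibilities. A union bound over the $\le 2^{2k}$ pairs of distinct messages then yields
\begin{equation*}
\Pr[\,\exists \text{ distinct } c_1, c_2 \in C :\, \ED(c_1, c_2) \le t\,] \;\le\; 2^{2k - n} \cdot B(n, t) \;\le\; 2^{-2\eps n + O(t \log(n/t))} .
\end{equation*}
Choosing $t = c \eps n / \log(1/\eps)$ for a sufficiently small constant $c > 0$ makes this $2^{-\Omega(\eps n)} = o(1)$, so with high probability $C$ has edit distance exceeding $\Omega(\eps n / \log(1/\eps))$; separately, with overwhelming probability $G$ has full rank, so $C$ has dimension exactly $k$ and rate $1/2 - \eps$.

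The main obstacle is producing the $\log(1/\eps)$, rather than $\log n$, in the denominator. The naive bound $B(n, t) \le (O(n))^t$ only yields edit distance $\Omega(\eps n / \log n)$, which vanishes as $n \to \infty$ for fixed $\eps$. Achieving the stated bound requires the tighter $2^{O(t \log(n/t))}$-type estimate above, which ultimately rests on the fact that an edit script of length $t$ has its operation-positions drawn from only $n$ possibilities and thus contributes binomial-type factors such as $\binom{n+t}{t}^3$ rather than $n^{O(t)}$.
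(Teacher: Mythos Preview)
Your proof is correct and follows essentially the same Gilbert--Varshamov-style argument as the paper (Theorem~\ref{thm:existence-linear} and Corollary~\ref{cor:binary-existence}): a union bound over message pairs combined with a $\binom{n}{\delta n}^2$-type count of common-subsequence alignments, yielding rate $(1-\delta)/2-\ent(\delta)$ and hence $\delta=\Omega(\eps/\log(1/\eps))$. Your one pleasant simplification is the observation that over $\F_2$ any two distinct nonzero messages are automatically linearly independent, making the single-pair bound $B(n,t)/2^n$ immediate; the paper instead proves a coordinate-by-coordinate claim (Claim~\ref{claim:pairwise-ind}) so that the same argument extends to all $\F_q$.
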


We also extend the result of \cite{AGFC07} into a distance-dependent rate upper bound showing that, independent of the alphabet used, no rate larger than $\frac{1}{2}(1-\delta)$ is achievabe by any linear code which can correct a $\delta$-fraction of insdel errors. This shows that the asymptotics of Theorem~\ref{simplerandombinary} are optimal up to the $\Omega(\log^{-1} \frac{1}{\eps})$ factor. It also shows that while linear codes for Hamming errors and non-linear insdel codes can get arbitrarily close to the Singleton bound, the best rate achievable by any linear insdel code (over a finite alphabet) is exactly half as large, for any error correcting capability $\delta >0$:

\begin{theorem}[Half-Singleton bound]\label{halfsingletonboundupperandlower}
For any $\delta >0$ and any $\eps  >0$, for all large enough finite fields of size $q =2^{ \Theta(\eps^{-1}) }$, there exists a family of $\F_q$-linear insdel codes of rate $\frac{1}{2}(1 - \delta) - \eps $ capable of correcting any $\delta$-fraction of insdel errors. Furthermore, no such family of linear codes exists with a rate larger than $\frac{1}{2}(1 - \delta)$ over any alphabet.
\end{theorem}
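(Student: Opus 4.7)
For the \emph{achievability} direction, I would use the probabilistic method on random linear codes. Let $C$ be a uniformly random $k$-dimensional subspace of $\F_q^n$ with $k = \lfloor(\tfrac{1-\delta}{2}-\eps)n\rfloor$. The code fails to correct $t := \delta n$ insdel errors if and only if there exist distinct codewords $c \neq c'$ in $C$ with $\LCS(c,c') \geq n - t$, which is equivalent to the existence of an alignment $(I,J) \in \binom{[n]}{t}^2$ satisfying $c|_{[n]\setminus I} = c'|_{[n]\setminus J}$ as ordered length-$(n-t)$ strings. For each fixed $(I,J)$ this is $n-t$ linear constraints on $(c,c') \in (\F_q^n)^2$, so the expected number of satisfying pairs in $C \times C$ is at most $q^{2k-(n-t)}$. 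Union-bounding over the $\binom{n}{t}^2 \leq 2^{2 n H_2(\delta)}$ possible alignments yields an expected bad-pair count of at most $2^{2 n H_2(\delta)} \cdot q^{-2\eps n} = o(1)$ provided $\log_2 q > H_2(\delta)/\eps$, i.e., $q = 2^{\Theta(\eps^{-1})}$ with a sufficiently large hidden constant. Hence a random linear code of rate $\tfrac{1-\delta}{2}-\eps$ is good with positive probability.

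For the \emph{Half-Singleton} upper bound I would refine the AGFC07 rate-$\tfrac{1}{2}$ argument to account for the error budget. For a linear $[n,k]_q$ code $C$ correcting $t = \delta n$ insdel errors, the constraint $\LCS(c,c') \leq n-t-1$ on distinct codewords in particular forbids the shift-by-$t$ alignment $(c_{t+1},\ldots,c_n) = (c'_1,\ldots,c'_{n-t})$. Defining $\Phi_t : C \times C \to \F_q^{n-t}$ by $(c,c') \mapsto (c_{t+i}-c'_i)_{i=1}^{n-t}$, the code condition forces $\ker \Phi_t \subseteq \Delta C$. Both the prefix projection $\pi_t(c)=(c_1,\ldots,c_{n-t})$ and the shift projection $\sigma_t(c)=(c_{t+1},\ldots,c_n)$ are injective on $C$ (since the minimum Hamming distance of $C$ is at least $t+1$, inherited from the insdel condition), giving $\dim \pi_t(C) = \dim \sigma_t(C) = k$ and hence $2k \leq n - t + \dim(\pi_t(C) \cap \sigma_t(C)) = n - t + \dim P_t$, where $P_t \subseteq C$ is the subspace of $t$-periodic codewords satisfying $c_{i+t}=c_i$ for all $i \leq n-t$.

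The key challenge, and what I anticipate as the main obstacle, is showing that $\dim P_t = o(n)$ so as to conclude the asymptotic rate bound $k/n \leq (1-\delta)/2 + o(1)$ needed to rule out families of linear insdel codes of higher rate. I would view $P_t$ as parameterized by a ``period space'' $V \subseteq \F_q^t$ via $p \mapsto c_p = (p,p,\ldots,p,p|_{[1,r]})$ with $r = n \bmod t$. The insdel correction inherited by $P_t$ constrains $V$ significantly: if $p \neq p' \in V$ were related by a cyclic shift of amount $\sigma < t$, then $\LCS(c_p,c_{p'}) \geq n - \sigma > n - t$, contradicting $C$'s insdel correction. Combined with a Singleton-style Hamming-distance bound on $V$ arising from the fact that each nonzero $p - p' \in V$ induces a codeword of weight $\geq t+1$, these constraints should force $\dim V$ to be of size $o(t)$, yielding $\dim P_t = o(n)$ and the asymptotic bound $\limsup_{n \to \infty} k/n \leq (1-\delta)/2$ for any family of linear insdel codes over any alphabet.
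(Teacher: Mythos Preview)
Your achievability argument is correct and essentially identical to the paper's: a random linear code plus a union bound over all $\binom{n}{\delta n}^2$ alignments, with the alphabet size chosen so that $q^{-2\eps n}$ kills the $2^{2nH_2(\delta)}$ factor.

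For the upper bound, however, there is a genuine gap. Your shift-by-$t$ argument correctly yields
\[
2k \;\le\; (n-t) + \dim P_t,
\]
but the crux is then to show $\dim P_t = o(n)$, and your sketch does not do this. The cyclic-shift constraint you identify gives only $V \cap \sigma(V)\subseteq\{\text{constants}\}$, hence $\dim V \le (t+1)/2$; plugging this back yields $k/n \le \tfrac12 - \tfrac{\delta}{4} + o(1)$, which is strictly weaker than the Half-Singleton bound $\tfrac12(1-\delta)$. The additional ``Singleton-style'' observation that $V$ has relative Hamming distance $\approx \delta$ only gives $\dim V \le (1-\delta)t$, which is still $\Theta(t)=\Theta(n)$. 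Even if one observes that $V$ itself is a length-$t$ linear insdel code correcting a $\delta$-fraction of errors and tries to recurse, the recursion $2d_i \le (1-\delta)n_i + d_{i+1}$ with $n_{i+1}=\delta n_i$ sums to $k \le \tfrac{(1-\delta)n}{2-\delta}$, again strictly weaker than $\tfrac{(1-\delta)n}{2}$. So the shift-by-$t$ route, as stated, cannot reach the target.

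The paper sidesteps this entirely with a much simpler reduction: since the minimum Hamming distance of $C$ is at least $t+1$, \emph{puncturing} the first $t-1$ coordinates is injective and yields a linear code $C'$ of length $n-t+1$ and the same dimension $k$. Because deleting $t-1$ symbols can lower edit distance by at most $2(t-1)$, the code $C'$ still has edit distance $\ge 3$, i.e.\ corrects a single deletion. Now apply the AGFC07 rate-$\tfrac12$ bound directly to $C'$ to get $k \le (n-t+1)/2$. The point is that the shift-by-$1$ argument on the \emph{shortened} code only needs $\dim P_1 \le 1$ (constant codewords), which is trivial, rather than control of $\dim P_t$.
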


We also prove an alphabet-dependent rate upper bound, namely a ``half-Plotkin" bound. This bound, which is rather subtle to establish, shows that the rate of a $\F_q$-linear code capable of correcting a fracton $\delta$ of insdel errors can be at most half as large as the $(1 - \frac{q}{q-1}\delta)$ Plotkin outer bound for $q$-ary codes of relative (Hamming) distance $\delta$. 

\begin{theorem}[Half-Plotkin outer bound]\label{halfplotkin}
Any $\F_q$-linear code which can correct a $\delta$ fraction of deletions has rate at most $\frac{1}{2}(1 - \frac{q}{q-1}\delta) + o(1)$.
\end{theorem}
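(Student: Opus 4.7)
The starting point is the standard equivalence: an $\F_q$-linear code $C$ corrects a $\delta$-fraction of deletions if and only if every pair of distinct codewords has longest common subsequence of length less than $(1-\delta)n$, equivalently edit distance greater than $2\delta n$. Since the edit distance from $c$ to $0^n$ equals $2\w(c)$, applying this to the pair $(c, 0^n)$ forces every nonzero $c \in C$ to have Hamming weight strictly greater than $\delta n$. Thus $C$, viewed purely as a $q$-ary Hamming code, has minimum distance greater than $\delta n$, and the classical Plotkin bound immediately yields $R \leq 1 - \tfrac{q}{q-1}\delta + o(1)$. This is exactly twice the target, so the entire content of the theorem lies in producing the extra factor of $1/2$.

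My plan is to mirror the strategy underlying the half-Singleton bound of Theorem~\ref{halfsingletonboundupperandlower} and then apply Plotkin (rather than Singleton) on a suitable subcode. Fix a split parameter $m \approx n/2$ and consider the prefix-vanishing subcode $K_m := \{c \in C : c[1{:}m] = 0^m\}$. Any nonzero codeword of $K_m$ has the form $(0^m, c')$ with $c' \in \F_q^{n-m}$, and $\mathrm{indel}((0^m, c'), 0^n) = 2\w(c')$, so $\w(c') > \delta n$. Consequently the restriction of $K_m$ to its last $n-m$ coordinates is an $\F_q$-linear Hamming code of length $n-m$ and minimum distance greater than $\delta n$, and the classical Plotkin bound gives $\dim K_m \leq (n-m) - \tfrac{q\delta n}{q-1} + o(n)$, provided $n-m \geq \tfrac{q\delta n}{q-1}$. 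A symmetric statement holds for the suffix-vanishing subcode.

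The subtle part is controlling the complementary quotient $\dim(C/K_m) = \dim \pi_{[1:m]}(C)$, the space of distinct first-half projections of codewords. The trivial bound $\dim(C/K_m) \leq m = n/2$ combined with the Plotkin bound on $\dim K_m$ only reproduces ordinary Plotkin via rank-nullity. The half-Plotkin improvement must come from showing that $\pi_{[1:m]}(C)$ itself obeys a nontrivial Plotkin-type inequality: essentially, a pair of codewords with Hamming-close first halves must, by the indel constraint on the full strings, have Hamming-far second halves, producing a coupling between the two projections that can be averaged in a Plotkin-like double-counting of pairwise edit distances.

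The main obstacle is making this coupling quantitative. I expect the key step to be the identification of an auxiliary linear functional, perhaps a weighted combination of the two halves' Hamming weights, on which both the subcode $K_m$ and the quotient $C/K_m$ impose simultaneous Plotkin-type constraints; an alternative would be to directly bound the averaged sum $\sum_{c_1, c_2 \in C} \mathrm{LCS}(c_1, c_2)$ from below using the fact that for a $q$-ary linear code each coordinate contributes $|C|/q$ diagonal matches per pair. Once each of $\dim K_m$ and $\dim(C/K_m)$ is bounded by roughly $\tfrac{n}{2}\bigl(1 - \tfrac{q\delta}{q-1}\bigr) + o(n)$ in a compatible way, rank-nullity delivers the target bound $k \leq \tfrac{n}{2}\bigl(1 - \tfrac{q\delta}{q-1}\bigr) + o(n)$. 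The fact that this coupling is nontrivial, requiring a joint analysis of prefix and suffix rather than a single shortening argument, is precisely what makes the bound ``rather subtle to establish.''
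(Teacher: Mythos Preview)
Your plan contains an arithmetic slip that exposes a genuine structural gap. You write that once $\dim K_m$ and $\dim(C/K_m)$ are each bounded by roughly $\tfrac{n}{2}(1-\tfrac{q\delta}{q-1})$, rank-nullity gives the target. But $\dim C = \dim K_m + \dim(C/K_m)$, so summing two such bounds yields only the full Plotkin bound $n(1-\tfrac{q\delta}{q-1})$, not half-Plotkin. With $m=n/2$, your Plotkin argument on the suffix of $K_m$ actually gives $\dim K_m \le \tfrac{n}{2} - \tfrac{q\delta n}{q-1}$; to hit the target you would then need $\dim \pi_{[1:m]}(C) \le \tfrac{q\delta n}{2(q-1)}$, i.e., the projection onto the first half must have tiny dimension. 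Nothing in your outline forces this---the insdel constraint does not obviously limit the number of distinct prefixes to $q^{O(\delta n)}$---and your suggested devices (an unspecified auxiliary functional, or averaging $\sum \mathrm{LCS}$) are not developed far enough to see whether they could.

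The paper's argument uses a different decomposition and two ideas you did not anticipate. It punctures the first $d \approx \tfrac{q}{q-1}\delta n$ coordinates (not $n/2$) to get $C'$ of length $n-d$, and proves: (i) $\dim C \le \dim C' + 1$, because in any length-$d$ prefix one can delete all but the $d/q$ copies of the most frequent symbol using $d(1-1/q)\le \delta n$ deletions, so every codeword of $C$ is mapped injectively into $\{\alpha^{d/q}\circ c' : \alpha\in\F_q,\, c'\in C'\}$; and (ii) $\dim C' \le (n-d)/2 + o(n)$, via the shift-by-one codes $C'_1, C'_2$ (as in the $\tfrac12$-rate bound of~\cite{AGFC07}) together with Lemma~\ref{lem:largedimensionimplieszeros}, which says that any large subspace of $\F_q^{2d}$ contains a nonzero vector with $\approx d/q$ zeros in \emph{each} half. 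Step~(i) is where the Plotkin-type $\tfrac{q}{q-1}$ factor enters, and step~(ii) is where the factor $\tfrac12$ enters; your plan tries to squeeze both out of a single rank-nullity split, which is why the numbers do not close.
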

\subsection{Explicit constructions of linear/affine insdel codes}
We also give an explicit construction of asymptotically good linear insdel codes. Using the ideas from synchronization strings~\cite{haeupler2017synchronization} we give a transformation which takes any explicit, efficiently decodable linear code $C$ for Hamming errors and produces an explicit, efficiently decodable linear insdel code $C'$ whose rate and error correcting capabilities are at most a constant factor smaller than those of the code $C$.

\begin{theorem}
\label{thm:linear-const-intro}
Fix any finite field $\F_q$.
There is an explicit construction of a $\F_q$-linear code family with rate bounded away from zero, a linear time encoding algorithm, and a polynomial time decoding algorithm to correct a positive constant fraction of  insertions and deletions.  The generator matrix of codes in the family can be deterministically computed in polynomial time in the code block length.
\end{theorem}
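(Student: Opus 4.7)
The plan is to realize the claimed linear code via a synchronization-string reduction from Hamming to insdel errors, adapted so that the full construction remains $\F_q$-linear. Starting from an explicit asymptotically good $\F_q$-linear Hamming code $C_{out}\colon \F_q^K\to \F_q^N$ (e.g.\ concatenated Reed--Solomon, Justesen, or algebraic--geometry codes: all linear-time encodable and polynomial-time Hamming-decodable), I will compose it with a small $\F_q$-linear \emph{insdel} inner code $C_{in}$ of block length $\ell=O(\log N)$. The inner code is produced by a brute-force search over short linear codes, which runs in time $\poly(N)$ and is guaranteed to succeed because Theorem~\ref{simplerandombinary} ensures the existence of such codes of constant rate and constant relative insdel distance. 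Since both codes are $\F_q$-linear, the concatenated code is $\F_q$-linear with rate $R_{out}R_{in}>0$ bounded away from zero, with essentially linear-time encoding, and an explicit generator matrix computable in $\poly(N)$ time.

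The critical step is importing the synchronization information \emph{linearly}. The standard Haeupler--Shahrasbi reduction attaches a fixed $\eps$-synchronization string symbol-by-symbol to the codeword positions, which makes the composed encoding affine rather than linear. Instead, I plan to append to each position a short ``tag'' $t_i(x)\in \F_q^s$ that is itself an $\F_q$-linear function of the message $x$, chosen so that for every nonzero $x$ the induced tag sequence $(t_1(x),\ldots,t_N(x))$ is an $\eps$-synchronization string. Existence of such an $\F_q$-linear tag map follows from a union-bound argument analogous to the one used for Theorem~\ref{simplerandombinary}, and an explicit instantiation can be obtained either by a brute-force search over short $\F_q$-linear codes feeding the tag coordinates, or by piggybacking on an auxiliary asymptotically good linear Hamming code whose symbols drive the tag alphabet. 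The resulting full encoder $x\mapsto((c_1(x),t_1(x)),\ldots,(c_N(x),t_N(x)))$ is $\F_q$-linear, preserves rate up to constant factors, and carries enough position information to feed the Haeupler--Shahrasbi sync-string decoder.

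Decoding proceeds by first extracting the tag coordinates of the received (corrupted) string, running the sync-string realignment procedure to convert the adversarial insdel errors into a bounded fraction of Hamming errors and erasures at the outer level, then inner-decoding each $\ell$-symbol block by brute force, and finally running the polynomial-time outer Hamming decoder. The main obstacle is the tag design: a fixed sync string trivially satisfies the synchronization property, but forcing the tags to depend linearly on the message requires the synchronization property to hold \emph{uniformly} across all nonzero codewords, not just for a single translate. I expect this to demand a careful counting/probabilistic argument showing that a random (or structured) $\F_q$-linear tag map has the property with high probability over the choice of tag code, together with a derandomization step that may enlarge the tag alphabet or rely on brute-forcing a short sub-block. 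Once this ``linearized sync-string'' ingredient is in place, the standard synchronization-string machinery of Haeupler--Shahrasbi carries through essentially verbatim to deliver the claimed polynomial-time insdel decoder.
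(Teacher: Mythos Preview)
Your proposal has a genuine gap in the central ``linearized sync-string'' step. You want $\F_q$-linear tag maps $t_i\colon \F_q^K\to \F_q^s$ with $s=O(1)$ such that for every nonzero message $x$ the sequence $(t_1(x),\dots,t_N(x))$ is an $\eps$-synchronization string. But the $\eps$-synchronization property (for any $\eps<1$) forces adjacent symbols to differ, so you would need $(t_i-t_{i+1})(x)\neq 0$ for all $x\neq 0$, i.e.\ each linear map $t_i-t_{i+1}\colon\F_q^K\to\F_q^s$ must be injective. This is impossible once $K>s$, which it must be if the tag overhead is to preserve constant rate. More generally, the sync-string condition on windows of length $O(1/\eps)$ forces certain $\F_q$-linear maps of output dimension $O(s/\eps)$ to have trivial kernel, which again fails for $K$ larger than a constant. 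So no union-bound or derandomization will rescue this: the obstruction is algebraic, not probabilistic. The concatenation-with-inner-code piece does not save you either, since without position information the block boundaries cannot be located after insdels (this is exactly why Schulman--Zuckerman inserted indices, breaking linearity). As a side remark, brute-forcing over all $\ell\times\Theta(\ell)$ generator matrices with $\ell=\Theta(\log N)$ costs $q^{\Theta(\log^2 N)}$, which is only quasi-polynomial.

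The paper takes a completely different route that avoids message-dependent synchronization information altogether. It inserts a \emph{fixed} sequence of all-zero strings $S[1],\dots,S[n_C]$ of carefully chosen (but fixed, message-independent) lengths between the symbols of a good linear Hamming codeword $y=C(x)$, yielding $z=S[1]\circ y[1]\circ\cdots\circ S[n_C]\circ y[n_C]$. Inserting zeros at fixed positions is trivially $\F_q$-linear. The decoder does not use the Haeupler--Shahrasbi repositioning algorithm; instead it runs a bespoke dynamic program that matches the nonzero symbols of the received word to ``?'' slots in the template $S[1]\circ?\circ\cdots\circ S[n_C]\circ?$, optimizing an objective that rewards matching size while penalizing mismatches in inter-symbol gap lengths. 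The required combinatorial property of $S$ (that any self-matching with many misaligned matches incurs many gap-length discrepancies) is abstracted as a ``synchronization separator sequence'' and constructed explicitly by derandomizing an almost $k$-wise independent distribution on the run lengths. The key conceptual difference is that the synchronization information lives in the \emph{zero positions} of the codeword and is the same for every message, which is exactly what makes linearity free.
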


The above result gives an analog of the classic result of Schulman and Zuckerman which gave an explicit construction of a \emph{non-linear} binary code family with positive rate and an efficient algorithm to correct a constant fraction of insdel errors~\cite{SZ99}. The construction in \cite{SZ99} starts with an outer code that can correct a constant fraction of Hamming errors (like Reed-Solomon codes), adds index information to its symbols, and then encodes the resulting indexed symbols by a suitable inner codes. The introduction of index information precludes getting a linear code via this approach.

Our linear code construction of Theorem~\ref{thm:linear-const-intro} does not have rate approaching $1/2$, which is shown to be possible by the existence result of Theorem~\ref{simplerandombinary}. Improving the rate of our construction remains a fascinating open question. An ultimate goal would be to construct linear codes approaching the half-Singleton bound over large finite fields.

A systematic code encodes a message $x$ as $x$ followed by some check symbols. For Hamming errors, one can make any linear code systematic by performing a basis change in the generator matrix and permuting the symbols if necessary. For insdel errors, permuting symbols can greatly alter the edit distance. We show our random coding result underlying Theorems \ref{simplerandombinary} and \ref{halfsingletonboundupperandlower} holds with the same parameters also for systematic linear codes. We can also get a version of Theorem~\ref{thm:linear-const-intro} with systematic codes by simply appending the message at the beginning and losing constant factors in the rate and fraction of insdel errors corrected.

\vspace{-1ex}
\subsection{Affine insdel codes}
Our final result concerns affine codes, a concept introduced in \cite{AGFC07} to overcome the limitations of linear and cyclic codes for deletion errors. Indeed, it was suggested in \cite{AGFC07} to add a fixed symbol to the codewords of a cyclic/linear code to increase its edit distance. The codewords of such an encoding form an affine subspace. In general, an affine code is one with encoding function $\text{Enc}: \F_q^m \rightarrow \F_q^n$ given by an affine map $\text{Enc}(x) = x G + b$ for some generator matrix $G$ and offset vector $b$. (In fact, the affine codes constructed in this paper are of the same simpler form as the codes in \cite{AGFC07} where each position of a codeword is either a fixed symbol or a linear combination of input symbols.) While it is easy to see that affine codes and linear codes are equivalent in power for correcting Hamming errors, the same is not true for insdel errors. Indeed, despite being structurally about as simple as linear codes the impossibility result of \cite{AGFC07} and our outer bounds for linear insdel codes do \emph{not} apply to affine codes. This opens up the possibility of obtaining affine codes with better rates. 


We show that this is indeed the case by giving high-rate affine insdel codes to correct a constant fraction of insdel errors while achieving a rate arbitrarily close to $1$. Our affine codes are binary, explicit, and feature efficient encoding and decoding algorithms. The construction is based on the binary codes in \cite{SychnStringsInteractive} which are a synchronization string based simplification of the codes of \cite{GW}. 

\begin{theorem}
\label{Thm:explicitAffineCode}
For any $\eps  >0$ there exists an explicit binary affine code family with rate $1 - \eps $ that can be efficiently decoded from a fraction $O(\eps^3)$ of insertions and deletions.
\end{theorem}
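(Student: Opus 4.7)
The plan is to construct the affine code by interleaving a linear high-rate binary code for Hamming errors with a fixed binary synchronization string, arranging matters so that the sync string's contribution is exactly the affine offset while the message's contribution is purely linear. Concretely, pick an inner explicit binary linear code $C_0 : \F_2^k \to \F_2^n$ of rate $1 - \eps/3$ that can be efficiently decoded from an $\Omega(\eps^2)$ fraction of Hamming errors, with a constant fraction of the decoding budget available for erasures; such codes are obtained via a Forney-style concatenation meeting the Zyablov bound. Separately, invoke the binary synchronization string construction from \cite{SychnStringsInteractive} (underlying the binary codes of \cite{GW}) to obtain an explicit binary string $s$, together with an interleaving schedule that occupies an $\eps/3$ fraction of the positions of a length-$N$ codeword with $N = (1+\Theta(\eps))n$, and whose decoder converts any $\delta$-fraction of insertions and deletions on the full codeword into at most an $O(\delta/\eps)$ fraction of substitutions and erasures on the positions that hold $C_0(x)$.

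The encoding map $\Enc : \F_2^k \to \F_2^N$ places the $i$-th symbol of $y = C_0(x)$ at the prescribed inner-code position and the $j$-th symbol of $s$ at the prescribed sync position. Because the sync positions and their contents are fixed and independent of $x$, this map has the form $\Enc(x) = Ax + b$, where $A$ vanishes on sync positions and restricts to (a permutation of) the rows of the inner generator matrix on inner positions, while $b$ vanishes on inner positions and equals $s$ on sync positions. Hence $\Enc$ is affine, and each output coordinate is either a fixed symbol or a linear combination of message symbols, matching the style of affine codes in \cite{AGFC07}. The rate is
\[
\frac{k}{N} \;=\; \frac{(1-\eps/3)\, n}{(1+\Theta(\eps))\, n} \;\geq\; 1 - \eps
\]
once the hidden constants are chosen small enough.

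Decoding proceeds in two stages. Given a word received through at most $\delta N$ insertions and deletions with $\delta = O(\eps^3)$, first run the sync-string decoder of \cite{SychnStringsInteractive}: it uses the known string $s$ to recover the boundaries of the inner code and outputs a string $\tilde y \in (\F_2 \cup \{?\})^n$ that differs from $y = C_0(x)$ in an $O(\delta/\eps) = O(\eps^2)$ fraction of positions. Then apply the efficient Hamming decoder of $C_0$ to $\tilde y$ to recover $x$; this succeeds because the induced error pattern lies within the code's correction radius. Both stages run in polynomial time, so the overall decoder is efficient.

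The main technical obstacle is securing a \emph{binary} synchronization string with the conversion quality needed. Standard $\eps$-synchronization strings live over an alphabet of size $\poly(1/\eps)$, and shrinking them down to $\F_2$ while preserving the $O(1/\eps)$ insdel-to-half-error conversion requires the binary construction of \cite{SychnStringsInteractive}, which was originally developed in the context of channel simulations and interactive coding. The bulk of the work is verifying that this construction slots into the interleaving template above as a black box and that its decoder produces the advertised error density on the inner code's positions. Once this is in place, the final parameters drop out by composition: a rate-$(1-\eps)$ affine code correcting an $\Omega(\eps^3)$ fraction of insdels arises from combining a rate-$(1-\Theta(\eps))$ inner Hamming code correcting $\Omega(\eps^2)$ substitutions and erasures with a sync layer of relative length $\Theta(\eps)$, with the two $\eps$ factors multiplying exactly as claimed.
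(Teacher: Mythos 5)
There is a genuine gap: your argument delegates its entire technical content to a black box --- an explicit \emph{binary} synchronization string $s$, sparsely interleaved at an $\eps/3$ fraction of positions, whose decoder converts a $\delta$-fraction of insdels into an $O(\delta/\eps)$-fraction of half-errors on the inner-code positions --- and that primitive does not exist in \cite{SychnStringsInteractive} (or elsewhere) in the form you need. The synchronization-string machinery works because each received symbol carries enough information to be (approximately) re-indexed; a single fixed bit interleaved into a binary stream carries essentially no such information, and after insertions and deletions the decoder has no way to tell which received bits occupy ``sync positions'' and which occupy ``inner positions.'' The binary constructions you cite do not interleave a raw binary sync string; they build block structure with recognizable separators. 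This is exactly the step the paper has to engineer and that your proposal omits: the paper attaches a constant-alphabet synchronization string \emph{symbol-wise} to an algebraic-geometry code over $\F_{2^{l_0}}$ with $l_0=O(\eps^{-2})$, writes everything in binary, and then makes block boundaries locally recognizable by prepending to each block the pattern $0\circ 1^{t+1}$ with $t=\Theta(1/\eps)$ while stuffing a $0$ after every $t$ content bits so that the boundary pattern cannot occur inside a block. This guarantees each insdel corrupts only $O(1)$ blocks, after which the large-alphabet sync-string decoder applies. Without some such boundary mechanism your two-stage decoder cannot even begin.

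Two smaller remarks. Your affineness argument (fixed positions contribute the offset $b$, code positions are linear in $x$) is correct and is essentially the paper's. Your parameter bookkeeping also lands on $\eps^3$, but by a different (and, as written, unsubstantiated) decomposition: you multiply an $\Omega(\eps^2)$ binary Hamming radius by an $\eps$ sync overhead, whereas in the paper the inner code corrects an $\Omega(\eps)$ fraction of \emph{symbol} errors and the $\eps^{-2}$ blow-up comes from each symbol occupying $l_0=O(\eps^{-2})$ bits (needed so that the $\Theta(1/\eps)$-bit boundaries cost only an $O(\eps)$ fraction of the rate). If you repair the construction along the paper's lines, your rate computation should be redone against that block structure.
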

We note that unlike linear codes, affine codes can allow the introduction of index information as in the original Schulman-Zuckerman construction~\cite{SZ99}. However, their construction also requires inner codes with strong properties (such as a good density of $1$'s in every codeword, or good edit distance between all pairs of long enough subsequences of distinct codewords) which are difficult to ensure with affine codes, and in any case incompatible with obtaining high rate.

For general insdel codes, constructions with a better trade-off between rate and fraction of correctable insdel errors than the guarantee of Theorem~\ref{Thm:explicitAffineCode} are known. The constructions in \cite{GW,SychnStringsInteractive} could correct a fraction $\widetilde{\Omega}(\eps^2)$ of insdel errors with rate $1-\eps$. Even more recently, via connections to the document exchange problem, codes with rate $1-\eps$ that can correct a fraction $\Omega(\eps/\log^2 (1/\eps))$ of insertions and deletions have been constructed in \cite{KZXK18,Haeupler19}. 

\medskip
Our results expose the under-appreciated power of linear and affine codes in the context of correcting insdel errors and identify some limitations that are in contrast with general codes. Put together, they substantially improve our understanding of the landscape of possibilities for using linear/affine codes to tackle synchronization errors, and open up several intriguing questions that merit further study (see Section~\ref{sec:conclusion}).

\medskip\noindent\textbf{Outline.} The paper is organized as follows. We give a high level overview of our techniques, particularly those underlying our constructions of linear and affine codes, in Section~\ref{sec:overview}. The existential random coding bounds for linear codes, including their systematic versions, are presented in Section~\ref{sec:bound}. Our outer half-Singleton and half-Plotkin bounds are established in Section~\ref{sec:outer-bound}.  In Section~\ref{sec:monte-carlo} we present a Monte Carlo construction of linear insdel codes which is then derandomized in Section~\ref{sec:derand} to give the explicit codes promised in Theorem~\ref{thm:linear-const-intro}. The high-rate affine codes guaranteed by Theorem~\ref{Thm:explicitAffineCode} are presented in Section~\ref{sec:affine}. After a brief discussion about the construction of systematic linear insdel codes in Section~\ref{sec:systematic-linear}, we conclude the paper with a number of interesting open questions in Section~\ref{sec:conclusion}. Some of the technical proofs are deferred to Appendix~\ref{app:skipped-proofs}.






%
%
%
%

\vspace{-1ex}\section{Overview of our approach}
\label{sec:overview}
\vspace{-1ex}
Our existential results follow from relatively straightforward applications of the probabilistic method: we take a random generator matrix and analyze its property. The half-Singleton bound also follows easily from a reduction to previous results: if the code can correct $\delta n$ insdel errors then we can simply remove the first $\delta n-1$ symbols and the code can still correct one deletion, which by the result of \cite{AGFC07} limits the rate to $1/2$. The proof of the half-Plotkin bound is significantly more involved. From a $q$-ary linear code $C$ that can correct $\delta n$ insdel errors, we drop a prefix consisting of the first $d \approx q n \delta/(q-1)$ symbols of codewords and prove that the remaining punctured code $C'$ has dimension at most $\approx \tfrac{n}{2} (1-q \delta/(q-1))$. For the latter, we give a clever reduction to the fact that in any linear code of large enough dimension, there is a non-zero codeword which has many $0$'s in \emph{both} the first and the second half. We establish this latter fact using a carefully designed probabilistic argument combined with collapsing together linearly dependent coordinates. Now note that if instead of dropping the prefix consisting of the first $d \approx q n \delta/(q-1)$ symbols of all codewords, we just delete all symbols except $d/q$ of the most frequently occurring element in the prefix, then after this operation any two different codewords in $C$ are still distinct (since $d(1-1/q) \leq \delta n$). Note that the resulted codeword is of the form $\alpha^{d/q}$ followed by a codeword in $C'$, for some $\alpha \in \F_q$. This implies that the number of codewords in $C$ is at most $q$ times the number of codewords in $C'$, thus the original code has dimension at most $\dim(C')+1$ which is also at most $\approx \tfrac{n}{2} (1-q \delta/(q-1))$.

Now we give an informal overview of our constructions of linear codes and affine codes for insertions and deletions. As observed in many previous works, the main difficulty in handling edit errors is that insertions and deletions can shift the string, and thus cause the loss of index information. An elegant idea to handle this, introduced in the work of Haeupler and Shahrasbi \cite{haeupler2017synchronization}, and later used in \cite{HS17c, CHLSW18} is to construct a fixed string that can be used to recover the index information. Such a string is called a \emph{synchronization string} and one can then use it to transform a code for Hamming errors into a code for edit errors, just by appending the symbols of the synchronization string to every codeword. 

However, even if we start with a linear code for Hamming errors, the resulting code after appending the synchronization string is not necessarily a linear code or an affine code. This is due to both the fact that the synchronization string is some fixed string, and the fact that when appending we are putting the symbol from the codeword and the symbol from the synchronization string together to form a symbol in a larger alphabet. If we restrict ourselves to binary codes, then the above approach has an additional problem of increasing the alphabet size. Instead, here we use similar ideas, but modify the synchronization string to obtain a linear or affine code without blowing up the alphabet size. Specifically, instead of appending a synchronization string, we insert a fixed sequence of synchronization symbols into the codewords of a code for Hamming errors. This ensures that the resulting code is still a linear or affine code. We call such a sequence of symbols a \emph{synchronization separator sequence}, and we further show how to explicitly design such sequences to allow proper decoding. We now give more details below.

\subsection{Randomized construction of the linear code}
We  start with a randomized construction and then describe how to derandomize it to get an explicit code.

To encode a given message $x \in \mathbb{F}_q^m$, we first encode it to be $y = C(x)$ using a  linear code $C:   \mathbb{F}_q^m \rightarrow  \mathbb{F}_q^{n_C}$ which can correct $\kappa_C$ Hamming errors. 
Then consider the following transformaion.
Before each symbol of $y$, we insert a short string of $0$'s, obtaining the  codeword
$$z = S[1] \circ y[1] \circ S[2] \circ y[2] \circ \cdots \circ S[n_C] \circ y[ n_C ].$$
Here $\circ$ denotes the string concatenation operation. $S = (S[1], S[2], \ldots, S[n_C])$ is a sequence of independent random variables s.t. each $S[i]$ is an all-$0$ string with length being uniformly chosen from $\{ 1,2,\ldots, a \}$ for a properly chosen parameter $a$. 
Denote this new code as $C_S$.
Notice that in order to have constant rate, we only need to let $C$ have constant rate and set $a$ as a constant.
Also notice that only inserting $0$'s can indeed keep the linearity of the code.
But the question is why this can give a code for edit distance.

To see that $C_S$ has large edit distance with high probability, we directly show our decoding.
The decoding will first compute some kind of optimal matching between the received word and a template of codeword to recover $y = C(x)$, and then use the decoding algorithm of $C$ to recover $x$.

Specifically,  let the received corrupted codeword be $z'$, assuming there are $\le \kappa = 0.01 \kappa_C  $ insdel errors.
We imagine that the correct codeword is in the following form 
$$z_? =  S[1] \circ ? \circ S[2] \circ ? \circ \cdots \circ S[n_C] \circ ? ,$$
where each ``$?$'' mark is a special symbol indicating a blank that should be filled by a symbol of $y$. To fill the blanks, we
want to compute a monotone matching between $z'$ and $z_?$, matching the ``$?$'' marks of $z_?$ to the non-zero-symbols of  $z'$. 
Further the matching should have the number of non-zero symbols in $z'$ that are matched to their correct   positions to be large enough.
If the decoding can always provide such a matching, then  we can fill each matched ``$?$'' mark  with the symbol it is matched to and each  unmatched ``$?$'' mark  with a $0$.
Notice that
this will give a string $y'$ which is close enough to $y$ and thus can be decoded correctly. 

Now we only need to show the matching process. 
In a first thought, it seems hard to give such a process, since   the original positions  for   symbols are not known and hence one does not know how to establish the matching.
However we can  use  $S$ to settle this issue in a subtle way.
We define an objective function $\obj$ and   the matching process is just finding a matching $w$ that maximizes $\obj(w)$.
For a  matching $w $,  each match $(i, j) \in w$  matches   the $i$-th question mark of $z_?$ to the $j$-th non-zero of $z'$. 
Define the cost function $ \mathsf{cost}(w) $ to be 
$$ \cost(w) =  \sum_{(i, j)\in w} \mathbbm{1}\left( p_i - p_{i'} \neq q_j - q_{j'} \right) .$$
Here   $(i' , j')$ is the immediate previous match of $(i, j)$. 
$p_i$ is the position of the $i$-th question mark in $z_?$.
$q_j$ is the position of the $j$-th non-zero symbol in $z'$.
$\mathbbm{1}() $ is the indicator function. 
There are corner cases where there are no immediate previous match. 
We don't handle them here but will handle them in the main body of the paper.
Define $\obj(w)=|w| -  \cost(w)$.

We only need to show why this  ensures a small Hamming distance between $y'$ and $y$, as long as the number of insertions and deletions is $\kappa\le 0.01 \kappa_C$.
We will  prove the contrapositive, that is, if the Hamming distance between $y'$ and $y$ is larger than $\kappa_C$, then this will contradicts the optimality of $w$ returned by the matching process. 

First we bound the number of non-zero symbols  in $z'$ that are not matched, to be $\le 3\kappa $, as in Lemma \ref{lem:notmatchedones}. This follows from a careful analysis on how the insdel errors affect  the matching on its objective function.

Next we focus on the matched symbols of $z'$ and analyze how many of them are not matched to their correct positions in $y$. Towards that, we say a match  $(i, j)$ in $w$ is bad if the  $j$-th non-zero symbol of $z'$ is not inserted but $i$ does not correspond to its original position. 

Now suppose the Hamming distance between $y$ and $y'$ is larger than $\kappa_C$, and let $\Delta(w)$ be the number of bad matches in $w$.\ We claim that $\Delta(w)>(\kappa_C - 6\kappa)/2$, i.e., Lemma \ref{lem:badmatchcount}.  
To see this, recall that by Lemma \ref{lem:notmatchedones}, there are at most $O(\kappa)$ non-zeros in $z'$ that are not matched. So a large fraction of the non-zero symbols in $z'$ are matched. Intuitively, since the Hamming distance between $y$ and $y'$ is larger than $\kappa_C \gg \kappa$, a certain fraction of the  matches must be bad. We show that this is indeed true, although some effort and careful analysis are required here, since we need to deal with several issues such as the non-zeros in $z'$ may be inserted, and the Hamming errors may come from the unmatched ``$?$'' marks.

Since our definition of a bad match implies that the matched non-zero symbol in $z'$ is not inserted, the above bad matches can be viewed as inducing a self matching $\tilde{w}$ between the ``$?$'' marks of $z_?$ and itself. Now we   use a crucial observation that if a match $(i, j)$ is bad, then the lengths of the corresponding two intervals involving this match are unlikely to be equal, i.e. $ p_i - p_{i'} \neq q_j - q_{j'}$ with high probability. Indeed, for any bad match $(i, j)$, we can show that it will contribute $1$ to $\cost$ with probability $1-1/a$, since the length of the $0$'s is uniformly chosen from $[a]$. Given that there are $\Delta(w)$ bad matches, with high probability they will contribute $0.1 \Delta(w) $ to $\cost$. Now note that $\kappa$ errors can decrease $\cost$ by at most $\kappa$, since each insertion or deletion can only affect at most one of the indicators in $\cost$. As a result, we have $\cost(w) \geq 0.1 \Delta(w) -\kappa$ and thus $\obj(w)=|w|-\cost(w) \leq \tau+\kappa-(0.1\Delta(w) -\kappa)=\tau+2 \kappa- 0.1\Delta(w) $, since we set $\kappa =0.01\kappa_C$ and $\Delta(w)>(\kappa_C - 6\kappa)/2$. But  this contradicts the fact that $\obj(w) \geq \obj(w^*) \geq \tau - 2\kappa  $.

\subsection{Synchronization separator sequence and the explicit linear code}
From our discussion above, we can actually abstract out the property of the sequence $S$ that we need. In short, we need the following property which holds with high probability from a random sequence: any self matching between the $?$ marks which has many bad matches must incur a large cost. Thus, to derandomize our construction, we just need to explicitly construct such a sequence $S$. Towards that, we give the following definition.

For a sequence $ s[1] \circ ? \circ s[2] \circ ? \circ \cdots \circ s[n ] \circ ? $, a  $?$-to-$?$ self matching $w$ is a monotone matching between the string and itself s.t. each match $(i, j)$ matches the $i$-th question mark to the $j$-th question mark. A match  $(i, j)$ in $w$ is called \emph{undesired} if 
\begin{itemize}
  
\item $i\neq j$ (i.e. the match is bad);
\item $   p_i-p_{i'} =  p_j - p_{j'} $ for any $(i,j)$ that has an immediate previous match $(i', j')$.
\end{itemize}
Recall that $p_{i }$ is the position of the  $i $-th $?$-mark in $z_?$.

\begin{definition}[$(\Lambda, a)$ synchronization separator sequence]

$s[1], s[2], \ldots, s[n]$ is called a $(\Lambda, a)$ synchronization separator sequence, if for any self matching of $ s[1] \circ ? \circ s[2] \circ ? \circ \cdots \circ s[n ] \circ ?   $, the number of undesired matches is at most $\Lambda$.

Here each $s[i], i\in [n]$ is an all $0$ string whose length is in $\{ 1,\ldots,a \}$.
\end{definition}

It can be easily seen that a $(\Lambda, a)$ synchronization separator sequence satisfies the property we require: any matching with $\Delta$ bad matches must incur a cost of at least $\Delta-\Lambda$. In this paper, for any $n\in \mathbb{N}, \Lambda \leq n $, we can construct an explicit synchronization separator sequence of length $n$, with $ a =(\frac{n}{\Lambda})^{ \Theta(1)}$. In particular, setting $\Lambda=\alpha n$ for some constant $0< \alpha<1$ results in $a=O(1)$ and an explicit asymptotically good linear code for insertions and deletions.

We now show how to construct this object explicitly. The high level idea is that we use a pseudorandom distribution, i.e., an $\eps$ almost $k $-wise independent distribution with small seed length, to generate the sequence $S$. To show this works, we reduce the condition of being a $(\Lambda, a)$ synchronization separator sequence to some small scale checks, i.e., for every pair of short substrings (the sum of the lengths of the two substrings is $\Theta\left( \frac{n}{\Lambda} \frac{\log n }{ \log \frac{n}{\Lambda} } \right)$), we check if there is no matching between them that has a large fraction of undesired matches. Note that these checks are local checks, so they can be fooled by an almost $k $-wise independent distribution with a proper setting of the parameter $k$. That is, under such a pseudorandom distribution, the probability that all checks passed is positive. Furthermore this distribution can be generated by a small number of random bits (i.e., $\Theta(\log n)$ random bits), and thus we can exhaustively search the entire sample space to find an explicit sequence that passes all checks. Note that for a specific sequence, whether it is a $(\Lambda, a)$ synchronization separator sequence can be checked efficiently (e.g., by dynamic programming), so the whole construction runs in polynomial time.
We emphasize that to show that passing all small scale checks guarantees a good synchronization separator sequence, we use a matching cutting argument  similar to that of \cite{KZXK18}. But the situation is slightly different, so there are indeed some efforts and careful analysis required.

\smallskip\noindent \textbf{Systematic linear code.}
Our construction can be turned into a systematic linear code by simply concatenating the message with the corresponding codeword. To decode, we can simply ignore the first $m$ symbols ($m$ is the message length) and decode the rest of the received word. See Section~\ref{sec:systematic-linear} for the details.
 
\subsection{The high rate affine code construction}
In order to construct an affine code $C$ over the binary alphabet, our first step is to use the construction in \cite{haeupler2017synchronization} which works for a larger alphabet. Specifically, we take a linear code $C_0$ over the alphabet $\mathbb{F}_2^{l_0}$ for Hamming errors, and for each codeword we concatenate it coordinate-wise with a synchronization string over a constant size alphabet. This provides us with a code $C'$ for insertions and deletions. However, due to the coordinate-wise concatenation, the resulted code is not linear or affine any more. 

We can convert the code $C'$ back into an affine code by directly expressing each symbol in binary. Since we start with a linear code $C_0$ over the alphabet $\mathbb{F}_2^{l_0}$, the conversion of $C_0$ into the binary alphabet keeps its linearity. The symbols of the synchronization string now become fixed strings, and hence now the code becomes affine. However, simply doing this will be problematic, since now in a corrupted codeword it will be hard to tell the boundaries between different symbols. 

To solve this problem, we create special boundaries in the codewords by taking a special string $B$ which consists of $t+1$ $1$'s, for some parameter $t$. We
insert the string $B$ before each symbol in a codeword $y'$ of $C'$. To distinguish the boundaries from the bits in a codeword, we also modify the codeword in the following way: when we express each symbol of $y'$ in binary, we insert a $0$ after every $t$ bits.
We refer to the binary representation of a symbol in $y'$, and the inserted $0$'s as the content. We refer to the content together with the boundary before it as a block. This finishes our construction.

The decoding strategy and correctness are as follows.
Assume there are $\kappa$ insertions and deletions. Note that
each insertion or deletion can only affect at most two blocks, since the worst situation is that it creates a new boundary and also modifies the content of a block.
Therefore there are only $O(\kappa)$ corrupted blocks. Now we can just locate every correct boundary and view the substrings between two adjacent boundaries as contents. By doing this, every uncorrupted block of $y'$ will be recovered correctly. Now we can use the decoding algorithm in \cite{haeupler2017synchronization} to recover the message, as long as there are at most $\kappa$ insertions/deletions. 

To get a high information rate, we start with a linear code $C_0$ that has codeword length $n_0$, decoding radius $\kappa_0 = \eps n_0$ and information rate $1-\Theta(\eps)$. This code has  $l_0 = O(\eps^{-2})$, and can be constructed using algebraic geometry codes. We set $t = O( \frac{1}{\eps})$. Note that the synchronization string in \cite{haeupler2017synchronization} has constant alphabet size, thus it follows from our construction that the new code $C$ has codeword length $O(l_0n_0+t n_0)=O(\eps^{-2}n_0)$. From our previous discussion, the number of insertions and deletions $C$ can correct is $\kappa=\Theta(\kappa_0)=\Theta(\eps^{3} n)$. For the information rate, notice that $l_0 $ is much larger than $t$, and the synchronization string has constant alphabet size, thus the information rate of $C$ is still $ 1-\Theta(\eps) $.

Since the only modification we make to the codewords is inserting some fixed strings in fixed positions, the code is still an affine code.

\section{Preliminaries}
\textbf{Notation.} Let $\Sigma$ be an alphabet. For a string $x\in \Sigma^*$,
\begin{enumerate}
\item $|x|$ denotes the length of the string.
\item $x[i,j]$ denotes the substring of $x$ from position $i$ to position $j$ (both endpoints included).
\item $x[i]$ denotes the $i$-th symbol of $x$.

\item $x\circ x'$ denotes the concatenation of $x$ and some other string $x'\in \Sigma^*$.
\item $x^N$ the concatenation of $N$ copies of the string $x$.
\end{enumerate}
\subsection{Edit distance and longest common subsequence}
\begin{definition}[Edit distance] For any two strings $x, y\in\Sigma^n$, the edit distance $\ED(x, y)$ is the minimum number of edit operations (insertions and deletions) required to transform $x$ into $y$.\footnote{The standard definition of edit distance also allows substitution, but for simplicity we only consider insertions and deletions here, as a substitution can be replaced by a deletion followed by an insertion.}
\end{definition}
\begin{definition}[Longest Common Subsequence] For any two strings $x, y$ over an alphabet $\Sigma$, a longest common subsequence of $x$ and $y$ is a longest pair of subsequences of $x$ and $y$ that are equal as strings. We use $\LCS(x, y)$ to denote the length of a longest common subsequence between $x$ and $y$.
\end{definition}

Note that $\ED(x, y) = |x|+|y|-2\cdot \LCS(x, y)$.

\subsection{Error correcting codes}
\begin{definition}
An $(n, m ,d)$-code $C$ is an error-correcting code (for Hamming errors) with codeword length $n$, message length $m$, such that the Hamming distance between every pair of codewords in $C$ is at least $d$.
\end{definition}
Next we recall the definition of error-correcting codes (ECC) for edit errors.
\begin{definition}
Fix an alphabet $\Sigma$, an error-correcting code $C\subseteq \Sigma^n$ for edit errors with message length $m$ and codeword length $n$ consists of an encoding function $\text{Enc}:\Sigma^m\rightarrow \Sigma^n$ and a decoding function $\text{Dec}:\Sigma^*\rightarrow \Sigma^m$. The code can correct $k$ edit errors if for every $y$, s. t. $\ED(y,\text{Enc}(x))\leq k$, we have $\text{Dec}(y) = x$. The rate of the code is defined as $\frac{m}{n}$. The alphabet size is $|\Sigma|$.
\end{definition}

The code family $C$ is explicit (or has an explicit construction) if both encoding and decoding can be done in polynomial time. We say $C$ is a linear code if the alphabet $\Sigma$ is a finite field $\F_q$ and the encoding function $\text{Enc}:\F_q^m\rightarrow \F_q^n$ is a $\F_q$-linear map.

Let $C_1, C_2$ be two linear codes over $\F_q^{n}$.
We use $C_1 \cap C_2$ to denote the intersection of the two linear spaces, which is also a linear space.
Also we use $C_1 \cup C_2$ to denote the linear space which is spanned by all basis vectors of $C_1, C_2$.

\subsection{Pseudorandom generator}
We use $U_n$ to denote the uniform distribution on $\{0,1\}^n$, and $\idk(\cdot) $ to denote the indicator function.

\begin{definition}[$\eps$-almost $k$-wise independence \cite{alon1992simple}]
Random variables $X_1, X_2, \ldots, X_n $  over $\{0,1\}$ are $\eps$-almost $k$-wise independent in max norm if for every distinct $ i_1, i_2, $ $\ldots, i_{k} \in [n]$, $\forall x \in \{0,1\}^{k}$, $|\Pr[ X_{i_1} \circ X_{i_2} \circ \cdots \circ X_{i_{k}} = x] - 2^{-k} | \leq \eps.$

A function $g: \{0,1\}^{d} \rightarrow \{0,1\}^{n}$ is an $\eps$-almost $k$-wise independence generator in max norm if $g(U) = X = X_{1} \circ \cdots X_{n}$ are $\eps$-almost $k$-wise independent in max norm.
\end{definition}
In the following passage, unless specified, when we say $\eps$-almost $k$-wise independence, we mean in max norm.

\begin{theorem}[$\eps$-almost $k$-wise independence generator \cite{alon1992simple}]
\label{almostkwiseg}
There exists an explicit construction s.t. for every $n, k \in \mathbb{N}$, $\eps > 0$, it computes an $\eps$-almost $k$-wise independence generator $g: \{0,1\}^{d} \rightarrow \{0,1\}^n$, where $d = O(\log \frac{k \log n }{\eps})$.

The construction is highly explicit in the sense that, $\forall i\in [n]$, the $i$-th output bit can be computed in time $\poly(k, \log n, \frac{1}{\eps})$ given the seed and $i$.
\end{theorem}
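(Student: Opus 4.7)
The plan is to follow the approach of Alon--Goldreich--H\aa stad--Peralta by reducing $\eps$-almost $k$-wise independence in max norm to the much more malleable notion of $\eps$-bias. Recall that a distribution $X$ on $\{0,1\}^n$ is \emph{$\delta$-biased} if $|\widehat{X}(S)| \leq \delta$ for every non-empty $S \subseteq [n]$, where $\widehat{X}(S) = \mathbb{E}[(-1)^{\sum_{i\in S}X_i}]$. The target construction will come from composing a small-bias generator with a short-seed evaluation scheme.

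First, I would establish the Fourier-analytic reduction: any $\eps$-biased distribution is automatically an $\eps$-almost $k$-wise independent distribution in max norm, for every $k$. This is just Fourier inversion on $\F_2^k$: for any $k$-subset $T \subseteq [n]$ and any $x \in \{0,1\}^k$, one has
\[
\Pr[X_T = x] - 2^{-k} \;=\; 2^{-k}\sum_{\emptyset \neq S \subseteq T}(-1)^{x\cdot S}\,\widehat{X}(S),
\]
and bounding each of the $2^k-1$ coefficients by $\eps$ gives $|\Pr[X_T=x]-2^{-k}| \leq (1-2^{-k})\eps \leq \eps$. Thus it suffices to produce, for seed length $O(\log(k\log n/\eps))$, a distribution that fools all parities of size at most $k$ to bias $\eps$.

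Second, I would construct such a generator. The approach is the standard Reed--Solomon $+$ Hadamard / LFSR-type construction, but with a parameter choice that exploits the weaker ``bias only against $\leq k$-subsets'' requirement. Pick a prime power $\ell = \Theta(k\log n/\eps)$ so that $\F_\ell$ has size poly$(k,\log n,1/\eps)$. The seed $(\alpha,\beta) \in \F_\ell \times \F_\ell$ (of length $O(\log \ell) = O(\log(k\log n/\eps))$) defines the $i$-th output bit as the inner product $\langle v_i, \alpha^i\beta\rangle$ over $\F_2$, where $v_i \in \F_\ell$ is a fixed encoding of $i$ (e.g.\ as a power of a generator, so that $v_{i_1}+\cdots+v_{i_t}\neq 0$ for any $t\leq k$ distinct indices, which requires $\ell$ large relative to $n^{O(k)}$---hence the use of an appropriate tailored construction rather than purely powers). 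The bias of any parity of size at most $k$ reduces, via bilinearity, to the non-vanishing of a certain sum in $\F_\ell$, which happens except on a $1/\ell \leq \eps$ fraction of seeds. The $i$-th bit is computable by one exponentiation and one inner product in $\F_\ell$, yielding the claimed $\poly(k,\log n,1/\eps)$ local evaluation.

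The main obstacle is Step 2: achieving the doubly-logarithmic dependence on $n$ (i.e.\ $\log\log n$ rather than $\log n$) requires a more delicate choice of the ``index encoding'' $v_i$ than the naive powers-of-a-generator scheme, together with a careful sizing of the ambient field so that no $\leq k$-fold $\F_2$-combination of indices collapses to zero. One has to verify that a field of size $\poly(k\log n/\eps)$---as opposed to $\poly(n/\eps)$---is large enough; this is exactly the trick in \cite{alon1992simple} and is what makes this theorem non-trivial once the Fourier reduction is in place. The Fourier step (Step 1) and the verification of local computability (Step 3) are routine by comparison.
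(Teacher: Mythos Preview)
The paper does not prove this theorem; it is stated in the Preliminaries as a known result imported from \cite{alon1992simple}, so there is no ``paper's proof'' to compare your proposal against.

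Regarding the proposal itself: your Step~1 (the Fourier inversion showing that fooling all parities of size $\le k$ to bias $\eps$ yields $\eps$-almost $k$-wise independence in max norm) is correct and is indeed the reduction used in \cite{alon1992simple}. Your Step~2, however, is garbled. The formula $\langle v_i,\alpha^i\beta\rangle$ with $i$ ranging up to $n$ and $\alpha$ living in a field of size only $\Theta(k\log n/\eps)$ cannot work as written: distinct $i$'s would collide modulo the multiplicative order of $\alpha$, so the map $i\mapsto\alpha^i$ is not injective, and your ``no $\le k$-fold $\F_2$-combination of indices collapses to zero'' condition cannot be met in a field that small. You acknowledge this as ``the main obstacle,'' but you do not actually resolve it.

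The clean construction in \cite{alon1992simple} is a \emph{composition}: take a linear $k$-wise independent generator $G_1:\{0,1\}^m\to\{0,1\}^n$ with $m=O(k\log n)$ (e.g.\ from a BCH parity-check matrix), and feed its $m$-bit seed with an $\eps$-biased generator $G_2:\{0,1\}^d\to\{0,1\}^m$ with $d=O(\log(m/\eps))=O(\log(k\log n/\eps))$. Because $G_1$ is $\F_2$-linear, any parity of $\le k$ output bits of $G_1\circ G_2$ is a (nontrivial) parity of the $m$ bits output by $G_2$, which has bias $\le\eps$; now invoke your Step~1. Local computability follows because each output bit of $G_1$ depends linearly on all $m$ seed bits, and each bit of $G_2$ is computable in $\poly(m/\eps)$ time. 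This two-stage composition is the missing idea in your Step~2.
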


\section{Existential Bounds of Linear Codes}\label{sec:bound}
In this section we show existential bounds of linear codes for insdel errors. We have the following theorem.

\begin{theorem}
\label{thm:existence-linear}
Let $\F_q$ denote the finite field of size $q$. For any $\delta>0$ there exists a linear code family over $\F_q$  that can correct up to $\delta n$ insertions and deletions, with rate $(1-\delta)/2-\ent(\delta) / \log_2 q$. Here $\ent(\delta)=-\delta \log_2 \delta-(1-\delta) \log_2(1-\delta)$ is the binary entropy function.
\end{theorem}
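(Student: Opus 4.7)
The plan is a probabilistic method argument: take $G \in \F_q^{k \times n}$ to be a uniformly random generator matrix with $k = Rn$ and $R = (1-\delta)/2 - \ent(\delta)/\log_2 q - \eps$ for a small $\eps > 0$, and show that with positive probability the resulting linear code $C = \{mG : m \in \F_q^k\}$ corrects $\delta n$ insdel errors. Using the identity $\ED(x,y) = 2n - 2\LCS(x,y)$ for length-$n$ strings, it suffices to verify that every pair of distinct codewords $c_1, c_2 \in C$ satisfies $\LCS(c_1, c_2) < (1-\delta) n$; setting $L = \lceil (1-\delta) n \rceil$, we must avoid the bad event that there exist $m_1 \neq m_2$ in $\F_q^k$ and strictly increasing sequences $\mathbf{i} = (i_1 < \dots < i_L)$, $\mathbf{j} = (j_1 < \dots < j_L)$ in $[n]^L$ with $(m_1 G)[i_t] = (m_2 G)[j_t]$ for all $t$.

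The core step is a union bound. For a fixed quadruple $(m_1, m_2, \mathbf{i}, \mathbf{j})$ with $m_1 \neq m_2$, the relevant $L$ constraints on the random $G$ take the form $m_1 \cdot G_{*, i_t} - m_2 \cdot G_{*, j_t} = 0$. In the typical case where $m_1$ and $m_2$ are linearly independent in $\F_q^k$, a column-by-column inspection shows that these constraints are mutually linearly independent, each holding with probability $1/q$ and thus the joint probability is exactly $q^{-L}$. Combined with $\binom{n}{L}^2 \leq 2^{2\ent(\delta) n}$ and the fewer than $q^{2k}$ choices of distinct $(m_1, m_2)$, the union bound over the linearly independent contribution yields
\[
\Pr[\text{bad}_{\text{indep}}] \;\leq\; q^{2k} \cdot \binom{n}{L}^2 \cdot q^{-L} \;\leq\; 2^{(2R \log_2 q + 2\ent(\delta) - (1-\delta) \log_2 q) n},
\]
which is $o(1)$ precisely when $R < (1-\delta)/2 - \ent(\delta)/\log_2 q$.

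The delicate piece, which I expect to be the main obstacle, is the linearly dependent pairs $(m_1, m_2) = (m_1, \alpha m_1)$ with $\alpha \in \F_q \setminus \{1\}$ (including $\alpha = 0$). Here the codeword pair takes the correlated form $(x, \alpha x)$ with $x = m_1 G$ uniform on $\F_q^n$, and the constraint system can have rank strictly below $L$ when $\mathbf{i}$ and $\mathbf{j}$ overlap so as to create closed loops in an associated linkage graph whose cycle product is a power of $\alpha$. I would dispatch this case by observing that (i) the number of such dependent ordered pairs is only $O(q^{k+1})$, a factor of $q^{k-O(1)}$ fewer than the independent pairs; (ii) a careful analysis shows the rank drop per loop is at most one and the total rank loss is bounded by the overlap $o = |\mathbf{i} \cap \mathbf{j}|$, giving a per-quadruple probability at most $q^{-L + o/2}$; and (iii) summing this weaker bound over choices of $(\mathbf{i}, \mathbf{j})$ with overlap $o$ yields a dependent contribution smaller by a factor $q^{-k+O(1)}$ than the independent one and so does not affect the asymptotic threshold. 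Combining both cases produces the desired code family.
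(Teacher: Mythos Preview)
Your overall plan---random generator matrix, reduce to $\LCS$, union bound over message pairs and index sequences---is exactly the paper's. The difference lies in how you establish the per-pair probability bound, and here you make the problem harder than it is.

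You split into linearly independent and dependent message pairs and flag the latter as the ``main obstacle,'' proposing an overlap/cycle analysis for rank deficiency. This is unnecessary: the paper proves $\Pr[\forall t,\ (m_1G)_{i_t}=(m_2G)_{j_t}]\le q^{-L}$ uniformly for every $m_1\neq m_2$ via a short sequential-conditioning argument (Claim~\ref{claim:pairwise-ind}). Process $t=1,\dots,L$ and observe that since both $\mathbf i$ and $\mathbf j$ are strictly increasing, $M_t:=\max(i_t,j_t)$ is strictly increasing too, so column $M_t$ is untouched at every earlier step. If $i_t\neq j_t$, then after revealing all columns of smaller index one side of the equation is fixed while the other is $\langle m_\ell,\,G_{M_t}\rangle$ on a fresh column with $m_\ell\neq 0$; if $i_t=j_t$, the constraint becomes $\langle m_1-m_2,\,G_{i_t}\rangle=0$ on a fresh column. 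Either way the conditional probability is $1/q$, and the zero-message, scalar-multiple, and independent cases are handled simultaneously.

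In particular, your linkage graph can never contain a cycle: the constraints are upper-triangular with respect to $(M_t)_t$, so the rank is exactly $L$ even when $m_2=\alpha m_1$. Your items (ii)--(iii) are therefore solving a nonexistent problem; the paper's route avoids the case split entirely.
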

We remark that if $q$ is a constant, then both $\delta$ and the rate can be constants i.e. the code is asymptotically good.
\begin{proof}
We show for all large enough $n$ there is such a code.
Let $m$ be an integer so that $m/n \le (1-\delta)/2 - \ent(\delta)/\log_2 q$.
Consider the random linear code with encoding function $y =x G$, where $x \in \F^m_q$ is the message, $y \in \F^n_q$ is the codeword, and $G \in \F^{m \times n}_q$ is an $m$ by $n$ matrix where each entry is chosen independently and uniformly from $\F_q$.

To ensure that the code can correct up to $\delta n$ insertions and deletions, we just need to make sure that every two different codewords have edit distance larger than $2\delta n$, or equivalently the length of the longest common subsequence is less than $(1-\delta)n$. We now bound the probability that this does not happen. First we have the following claim the proof of which is deferred to Appendix~
\ref{app:skipped-proofs}. 

\begin{claim}
\label{claim:pairwise-ind}
For any two different messages $x, x'$ and codewords $C=x G, C'=x' G$, consider an arbitrary length $t$ subsequence, with indices  $\{s_1, \cdots, s_t\}$, of $C$ and an arbitrary length $t$ subsequence,  with indices $\{r_1, \cdots, r_t\}$, of $C'$. Then 
\[\Pr[\forall k \in [t], C_{s_k}=C'_{r_k}] \leq q^{-t}.\]
\end{claim}

Using the claim, and noticing that the total number of possible cases where two strings of length $n$ have a common subsequence of length $(1-\delta)n$ is at most $\binom{n}{(1-\delta)n}^2$, we have
\begin{align*}
\Pr[\LCS(C, C') \geq (1-\delta)n] &\leq  \binom{n}{(1-\delta)n}^2 q^{-((1-\delta)n)} \\
&=  \binom{n}{\delta n} ^2 q^{-((1-\delta)n)} \leq 2^{2\ent(\delta)n} q^{(\delta-1)n}.
\end{align*}
Thus by a union bound
\[\Pr[\exists C \neq C', \LCS(C, C') \geq (1-\delta)n] < q^{2m} 2^{2\ent(\delta)n} q^{(\delta-1)n}.\]
The right hand side is at most $1$ as long as $m/n \le (1-\delta)/2 - \ent(\delta)/\log_2 q$. Thus the rate of the code family can be made to approach $(1-\delta)/2 - \ent(\delta)/\log_2 q$.
\end{proof}

In particular, for the case of binary and large finite alphabets we have the following corollaries.

\begin{corollary} 
\label{cor:binary-existence}
For any $\delta>0$ there exists a binary linear code family
that can correct up to a fraction $\delta$ of insertions and deletions and which has rate  at least $(1-\delta)/2-\ent(\delta)$.
\end{corollary}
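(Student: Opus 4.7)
The plan is simply to specialize Theorem~\ref{thm:existence-linear} to the binary alphabet. Setting $q=2$, we have $\log_2 q = 1$, so the rate guarantee $(1-\delta)/2 - \ent(\delta)/\log_2 q$ from Theorem~\ref{thm:existence-linear} becomes exactly $(1-\delta)/2 - \ent(\delta)$. The linear code family produced by the random construction in the proof of Theorem~\ref{thm:existence-linear} is a $\F_2$-linear code, i.e., a binary linear code, so no additional work is needed.

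Concretely, I would pick $n$ large and choose $m$ with $m/n \le (1-\delta)/2 - \ent(\delta)$, draw a uniformly random generator matrix $G \in \F_2^{m \times n}$, and invoke the bound
\[
\Pr[\exists\, C^i \neq C^j,\ \LCS(C^i,C^j) \ge (1-\delta)n] < 2^{2m} \cdot 2^{2\ent(\delta)n} \cdot 2^{(\delta-1)n},
\]
which was the specialization of the union bound argument at $q=2$. The right-hand side is less than $1$ precisely when $m/n < (1-\delta)/2 - \ent(\delta)$, so some fixed choice of $G$ yields a binary linear code in which every pair of distinct codewords has longest common subsequence of length strictly less than $(1-\delta)n$, equivalently edit distance strictly greater than $2\delta n$. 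By the standard relationship between edit distance and decoding radius for insertions and deletions, such a code can correct any $\delta n$ insdel errors.

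There is essentially no obstacle: the corollary is a direct instantiation of the already-proved theorem, so the ``proof'' is one line invoking Theorem~\ref{thm:existence-linear} with $q=2$. The only thing worth mentioning for completeness is that the construction is indeed binary (since the random $G$ lives in $\F_2^{m \times n}$), and that the rate $(1-\delta)/2 - \ent(\delta)$ is positive for sufficiently small $\delta$, so the code family is non-trivial precisely in the interesting regime.
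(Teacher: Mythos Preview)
Your proposal is correct and matches the paper's approach: the corollary is stated immediately after Theorem~\ref{thm:existence-linear} as a direct specialization to $q=2$, with no separate proof given. Your one-line invocation of Theorem~\ref{thm:existence-linear} with $q=2$ (so that $\log_2 q = 1$) is exactly what is intended, and the extra details you spell out simply unpack the theorem's proof in the binary case.
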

Note that this corollary implies Theorem \ref{simplerandombinary}.

\begin{corollary}\label{cor:halfsingletonexistence}
For any $\delta > 0$ and $\epsilon > 0$ there exists a finite alphabet size $q=2^{O(\eps^{-1})}$ and a $q$-ary linear code family that can correct a fraction $\delta$ of insertions and deletions, and which has rate at least $(1-\delta)/2-\epsilon$.
\end{corollary}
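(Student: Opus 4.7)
The plan is to deduce this corollary directly from Theorem~\ref{thm:existence-linear} by choosing the field size $q$ large enough to make the entropy term in the rate expression negligible. Theorem~\ref{thm:existence-linear} provides, for every prime power $q$, a family of $\F_q$-linear codes correcting any $\delta$-fraction of insertions and deletions with rate at least $(1-\delta)/2 - \ent(\delta)/\log_2 q$. So the only quantity I need to control in order to reach the target rate $(1-\delta)/2 - \eps$ is the additive loss $\ent(\delta)/\log_2 q$.

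The first (and essentially only) step is to use the crude bound $\ent(\delta) \le 1$, valid for every $\delta \in [0,1]$. This reduces the required inequality $\ent(\delta)/\log_2 q \le \eps$ to $\log_2 q \ge 1/\eps$. The second step is then simply to pick $q$ to be any power of two with $q \ge 2^{\lceil 1/\eps \rceil}$; this is a prime power of the claimed size $q = 2^{O(1/\eps)}$. Plugging this choice into Theorem~\ref{thm:existence-linear} yields a $\F_q$-linear code family with rate at least $(1-\delta)/2 - \eps$ that corrects a $\delta$-fraction of insdel errors, as required.

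There is no real obstacle: the entire proof is a one-line parameter substitution into an already-established theorem, guided by the elementary bound $\ent(\delta) \le 1$. A minor refinement would be to use $\log_2 q \ge \ent(\delta)/\eps$ directly, which improves the hidden constant when $\delta$ is bounded away from $1/2$, but does not affect the asymptotic scaling $q = 2^{O(1/\eps)}$ that the statement advertises.
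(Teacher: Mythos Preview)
Your proposal is correct and matches the paper's approach: the corollary is stated immediately after Theorem~\ref{thm:existence-linear} with no separate proof, precisely because it follows by choosing $q$ large enough that $\ent(\delta)/\log_2 q \le \eps$, exactly as you do.
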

This corollary implies the existential result part of Theorem \ref{halfsingletonboundupperandlower}.

A systematic code is one which encodes a message $x$ as the codeword $x \circ y$ where $y$ are some check symbols computed from $x$. That is, the message appears in raw form at the beginning of the codeword. 
For Hamming errors, any linear code can be transformed into a systematic code by doing a basis change of the generator matrix, followed by a possible permutation of the symbols. This transformation does not change the parameters of the code. For insertion and deletion errors, however, a permutation of the symbols may change the edit distance of the code, and hence it is not a priori clear that one can get a systematic linear code for insertion and deletion errors with the same parameters. Nevertheless, we now show that there exist such systematic linear codes with almost the same parameters.
The proof appears in Appendix~\ref{app:skipped-proofs}.

\begin{theorem}
\label{thm:existence-systematic}
Let $\F_q$ denote the finite field of size $q$. For any $\delta>0$ there exists a systematic $\F_q$-linear code family that can correct up to $\delta n$ insertions and deletions, with rate $(1-\delta)/2-\ent(\delta) / \log_2 q-o(1)$. Here $\ent(\delta)=-\delta \log_2 \delta-(1-\delta) \log_2(1-\delta)$ is the binary entropy function.
\end{theorem}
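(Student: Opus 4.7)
The plan is to leverage Theorem~\ref{thm:existence-linear} by observing that, with positive probability, the random linear code constructed there can be put into systematic form by a basis change on its generator matrix, \emph{without altering the code as a subset of $\F_q^n$} and hence without altering its edit distance. Specifically, let $G \in \F_q^{m \times n}$ be a uniformly random generator matrix and write $G = [G_1 \mid G_2]$, where $G_1$ consists of the first $m$ columns of $G$. If $G_1$ is invertible, then $G' := G_1^{-1} G = [I_m \mid G_1^{-1} G_2]$ generates the same code as $G$ (both matrices have the same row space), but in systematic form: the encoding $y \mapsto y G'$ has the form $y \circ (y G_1^{-1} G_2)$, placing the message $y$ in the first $m$ coordinates of the codeword. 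Because the systematic code and the original code coincide as subsets of $\F_q^n$, they have identical edit distance, and hence identical insdel error-correction capability.

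The existence of a $G$ that is simultaneously ``invertible on its first $m$ columns'' and ``generates a code with edit distance exceeding $2\delta n$'' follows from a simple union bound. Viewed marginally, $G_1$ is uniform in $\F_q^{m \times m}$ and is therefore invertible with probability $\prod_{i=1}^{m}(1 - q^{-i}) \geq c_q := \prod_{i=1}^{\infty}(1 - q^{-i}) > 0$, where $c_q$ is a positive constant depending only on $q$. On the other hand, if we choose rate $m/n = (1-\delta)/2 - \ent(\delta)/\log_2 q - \epsilon$ with $\epsilon = \epsilon(n)$ tending to zero slowly enough that $\epsilon n \to \infty$, then redoing the calculation at the end of the proof of Theorem~\ref{thm:existence-linear} bounds the failure probability by $q^{2m} \cdot 2^{2 \ent(\delta) n} \cdot q^{-(1-\delta) n} = q^{-2 \epsilon n} = o(1)$. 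Hence $\Pr[G_1 \text{ is invertible and the code has edit distance} > 2\delta n] \geq c_q - o(1) > 0$ for all sufficiently large $n$, so a good $G$ exists.

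For each sufficiently large $n$, picking one such $G$ and performing the basis change above produces a systematic $\F_q$-linear code of rate $m/n = (1-\delta)/2 - \ent(\delta)/\log_2 q - o(1)$ that corrects $\delta n$ insertions and deletions, which, concatenated over $n$, yields the desired family. The one subtlety worth flagging is that the basis change must act on the \emph{left} of $G$ (reparametrizing the message space) and must not \emph{permute} any columns of $G$, since a column permutation would alter the codewords as ordered sequences in $\F_q^n$ and could arbitrarily change their edit distance. The $-o(1)$ slack in the rate relative to Theorem~\ref{thm:existence-linear} is exactly what lets us drive the random-coding failure probability below the constant $c_q$, and removing this slack would appear to require a more delicate argument tailored directly to the systematic constraint (for instance, a version of Claim~\ref{claim:pairwise-ind} that handles matches in the deterministic message prefix separately from matches in the random check suffix).
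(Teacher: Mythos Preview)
Your proposal is correct and follows essentially the same approach as the paper's own proof: both pick a uniformly random $G$, show that the leftmost $m\times m$ block is invertible with probability bounded below by a positive constant, combine this via a union bound with the edit-distance event from Theorem~\ref{thm:existence-linear} (absorbing an $o(1)$ slack in the rate to drive that event's failure probability below the constant), and then left-multiply by the inverse of that block to obtain a systematic generator without changing the code as a subset of $\F_q^n$. Your explicit remark that one must avoid column permutations (which could destroy edit distance) matches the paper's motivation for why Theorem~\ref{thm:existence-systematic} is not an immediate corollary of Theorem~\ref{thm:existence-linear}.
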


We can thus claim the analog of Corollaries~\ref{cor:binary-existence} and \ref{cor:halfsingletonbound} for systematic linear codes.

\section{Outer Bound for Rate of Linear Codes}
\label{sec:outer-bound}
To complement our existence results for linear insdel codes from Section~\ref{sec:bound} we prove here a novel upper bound on the rate of any linear code that can correct a given fraction of edit errors:

\begin{theorem}[Half-Plotkin Bound, restate of Theorem \ref{halfplotkin}]\label{thm:halfplotkinbound}
Fix a finite field $\F_q$. Every $\F_q$-linear insdel code which is capable of correcting a $\delta > 0$ fraction of deletions has rate at most $\frac{1}{2}(1 - \frac{q}{q-1} \delta) + o(1)$.
\end{theorem}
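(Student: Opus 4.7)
The plan is to prove Theorem~\ref{thm:halfplotkinbound} by a two-step reduction. First, set $d=\lceil qn\delta/(q-1)\rceil$ and let $C'\subseteq\F_q^{n-d}$ denote the code obtained by dropping the first $d$ coordinates of each codeword of $C$. For every $c\in C$, pigeonhole supplies a symbol $\alpha(c)\in\F_q$ appearing at least $d/q$ times in the prefix $c[1,d]$; deleting all other prefix symbols uses at most $d(1-1/q)=\delta n$ deletions and produces a ``normalized'' string of the form $\alpha(c)^{d/q}\cdot c[d+1,n]$. Since $C$ corrects $\delta n$ deletions, this map is injective on $C$, giving $|C|\leq q\cdot|C'|$ and hence $\dim(C)\leq\dim(C')+1$.

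The remaining task is to show $\dim(C')\leq(n-d)/2+o(n)$; I would establish this by contradiction using the following key linear-algebra lemma: \emph{any $\F_q$-linear code $\tilde C\subseteq\F_q^\ell$ of dimension sufficiently larger than $\ell/2$ contains a nonzero codeword whose zeros are abundantly present in both the first and the second half of its coordinates.} Proving this lemma is the technical heart of the argument: I would first group (``collapse'') columns of a generator matrix of $\tilde C$ that span the same one-dimensional subspace, observing that within such a group all coordinates are simultaneously zero or simultaneously nonzero on every codeword; then apply a carefully designed probabilistic argument to random codewords of the collapsed code, using second-moment/joint-distribution estimates to produce with positive probability a nonzero codeword that has many zeros in each half at once. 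Applying the lemma to $C'$ (with $\ell=n-d$) under the assumption $\dim(C')>(n-d)/2+o(n)$ yields a nonzero $v\in C'$ with balanced zero structure. Lifting $v$ to some $c=(u,v)\in C$ and pairing it with a carefully chosen second codeword $c^*\in C$, I would then exhibit a ``sliding'' common-subsequence matching that pairs zero positions of one codeword, across halves, against zero positions of the other, producing an LCS strictly larger than $(1-\delta)n-1$ and contradicting the insdel-correction capability of $C$. Combined with the first step, this yields $\dim(C)\leq \tfrac{n}{2}(1-\tfrac{q}{q-1}\delta)+o(n)$, hence the claimed rate bound.

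The main obstacle is the key lemma itself. A naive averaging over random codewords produces only about $\ell/q$ zeros per half---far too few for our purposes---and obtaining the required $\Theta(\ell)$ zeros in each half demands both the collapsing of parallel columns and the careful probabilistic argument. A further subtlety is translating the codeword's balanced zero structure into a genuine LCS-based violation of insdel-correction via the sliding match, rather than only a Hamming-weight/standard-Plotkin violation: without exploiting balanced zeros across both halves, one can only recover the standard Plotkin bound $R\leq 1-\tfrac{q}{q-1}\delta$ on $C$, losing the factor $1/2$.
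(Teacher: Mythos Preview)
Your first reduction---the pigeonhole/majority-symbol argument giving $\dim(C)\le\dim(C')+1$---is correct and is exactly what the paper does. The gap is entirely in how you propose to bound $\dim(C')$.

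First, the key lemma as you state it is false. Collapsing parallel columns together with a second-moment argument does \emph{not} produce a codeword with a $\Theta(1)$ fraction of zeros in each half; it only shows that the number of zeros in each half concentrates around its expectation, which is $\approx \ell/(2q)$ per half. The paper's version (Lemma~\ref{lem:largedimensionimplieszeros}) correspondingly guarantees only a $(1/q - o(1))$ fraction of zeros in each half of a length-$2d$ vector, and this is tight for good Hamming codes. So the step ``obtaining the required $\Theta(\ell)$ zeros in each half'' cannot be carried out.

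Second, and more fundamentally, applying the lemma to $C'$ itself and then looking for a ``sliding'' LCS match does not yield a contradiction. With only $\approx (n-d)/q$ zeros in $v$, you cannot force $\LCS(c,c^*)>(1-\delta)n$ for any choice of $c^*$ (even $c^*=0^n$ gives at best $(n-d)/q$ common symbols, far short of $(1-\delta)n$ since $n-d<(1-\delta)n$). What the paper does instead is introduce the one-symbol shifts $C'_1=C'[2,n-d]$ and $C'_2=C'[1,n-d-1]$ and argue that $\dim(C'_1\cap C'_2)$ must be small; then inclusion--exclusion gives $\dim(C')\le\tfrac12\bigl((n-d-1)+\dim(C'_1\cap C'_2)\bigr)+1$. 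To bound $\dim(C'_1\cap C'_2)$, each nonzero $c''\in C'_1\cap C'_2$ lifts to \emph{two distinct} codewords $c=(v,\sigma,c'')$ and $c'=(v',c'',\sigma')$ of $C$ whose suffixes are already within edit distance~$2$. The balanced-zeros lemma is applied not to $C'$ but to the space $A\subseteq\F_q^{2d}$ of prefix pairs $(v,v')$: finding $(v,v')$ with $\approx d/q$ zeros in each component lets you delete $\approx d(1-1/q)\le\delta n$ symbols from each prefix to make both equal to $0^{d/q}$, and now $c,c'$ are within edit distance $2\delta n$, contradicting the insdel property. Your proposal is missing this shift-by-one structure entirely; without it, a single codeword with balanced zeros gives no handle on edit distance between two codewords of $C$.
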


For binary codes this is a rate upper bound of $\frac{1}{2}(1 - 2\delta) + o(1)$. Furthermore since $\frac{q}{q-1} \geq 1$ for any $q$ the following Half-Singleton bound follows directly for any linear codes for edit errors, independent of alphabet size $q$:

\begin{corollary}[Half-Singleton Bound]\label{cor:halfsingletonbound}
Every linear insdel code which is capable of correcting a $\delta > 0$ fraction of deletions has rate at most $\frac{1}{2}(1 - \delta) + o(1)$.
\end{corollary}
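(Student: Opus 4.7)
The plan is to derive this Half-Singleton bound by a direct reduction to the rate-$\tfrac{1}{2}$ upper bound of \cite{AGFC07} for linear codes correcting a single deletion, avoiding the more delicate Half-Plotkin machinery used for Theorem~\ref{thm:halfplotkinbound}. Starting from an $\F_q$-linear insdel code $C$ of length $n$ and dimension $m$ that tolerates $\lfloor \delta n \rfloor$ deletions, I would puncture $C$ by dropping its first $\lfloor \delta n \rfloor - 1$ coordinates, obtaining a shortened $\F_q$-linear code $C'$ of block length $n' = n - \lfloor \delta n \rfloor + 1$.

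Two properties of $C'$ need verification. First, the puncturing preserves dimension. Because $C$ corrects $\lfloor \delta n \rfloor$ deletions, any two distinct codewords have edit distance at least $2\lfloor \delta n \rfloor + 1$, and since edit distance on equal-length strings is at most twice the Hamming distance, the minimum Hamming weight of the linear code $C$ is at least $\lfloor \delta n \rfloor + 1$. Hence no nonzero codeword can be supported on just the first $\lfloor \delta n \rfloor - 1$ coordinates, so the puncturing map $C \to C'$ is injective and $\dim(C') = m$. Second, $C'$ corrects a single deletion. For any two distinct codewords $c'_1, c'_2$ of $C'$ arising as the tails of $c_1, c_2 \in C$, every common subsequence of the suffixes $c_1[\lfloor \delta n\rfloor,n]$ and $c_2[\lfloor \delta n\rfloor,n]$ is also a common subsequence of $c_1$ and $c_2$, so
\[
\LCS(c'_1,c'_2) \;\leq\; \LCS(c_1,c_2) \;<\; n - \lfloor \delta n \rfloor \;=\; n' - 1,
\]
which is exactly the condition for a length-$n'$ code to correct one deletion.

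Applying the \cite{AGFC07} bound to $C'$ then yields $m \leq n'/2 \leq ((1-\delta)n+1)/2$, and therefore the rate of $C$ satisfies $m/n \leq (1-\delta)/2 + O(1/n) = \tfrac{1}{2}(1-\delta) + o(1)$, as claimed. Since the argument never touches the field $\F_q$, the bound is automatically alphabet-independent.

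The only delicate step is dimension preservation under puncturing, since in general dropping coordinates from a linear code can collapse distinct codewords and reduce the dimension. The key observation, which I expect to be the main (though mild) obstacle to articulate cleanly, is that the insdel-correction assumption already forces the minimum Hamming distance of $C$ to strictly exceed the number of punctured coordinates, so no such collapse can occur; once this is in place, the LCS monotonicity and invocation of \cite{AGFC07} are both immediate.
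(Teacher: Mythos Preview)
Your proposal is correct and follows essentially the same route as the paper's direct proof of Corollary~\ref{cor:halfsingletonbound}: puncture the first $\lfloor \delta n\rfloor-1$ coordinates, argue that the resulting linear code still corrects a single deletion, and invoke Lemma~\ref{lem:AGFChalfrate}. The only difference is that you explicitly verify dimension preservation under puncturing via the Hamming-weight lower bound, whereas the paper leaves this implicit (it is automatic once one has $\ED(c'_1,c'_2)\geq 3$, since positive edit distance already forces $c'_1\neq c'_2$).
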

This corollary implies the negative result part of Theorem \ref{halfsingletonboundupperandlower}.

The naming for these two bounds stems from the fact that they prove outer bounds on the rate of linear codes which are exactly half as large as the Plotkin bound of $(1 - \frac{q}{q-1} \delta) + o(1)$ and the alphabet-independent Singleton bound of $(1 - \delta) + o(1)$ which hold for codes (not necessarily linear) with relative Hamming distance $\delta$. Note that since two strings at fractional Hamming distance $\delta$ can lead to the same subsequence by deleting the $\delta$ fraction of symbols where they differ, the Plotkin and Singleton bounds also apply for general insdel codes.

Note that our Half-Singleton bound matches the existential bound for linear insdel codes given by Corollary~\ref{cor:halfsingletonexistence} proving that the best limiting rate achievable by a linear code over a finite alphabet that can correct up to a $\delta$ fraction of insdel errors is exactly $\frac{1}{2}(1 - \delta)$.

The proof for our Half-Plotkin bound is significantly more involved than in the Hamming case. The Half-Singleton bound of Corollary~\ref{cor:halfsingletonbound} has a much simpler direct proof which we first present as a warm-up. Both our proofs (implicitly) build on the $\frac{1}{2}$ rate upper bound of \cite{AGFC07} recorded below.


\begin{lemma}[\cite{AGFC07}] \label{lem:AGFChalfrate}
Let $C$ be a linear code over any finite field $\F_q$ with message length $m$ and codeword length $n$. If $C$ can correct even a single deletion, then $m/n \leq 1/2$.
\end{lemma}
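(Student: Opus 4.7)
The plan is to exploit the one-position shift operator on the ambient space $\F_q^n$. Define the linear map $S : \F_q^n \to \F_q^n$ by
\[ S(v_1, v_2, \ldots, v_n) = (v_2, v_3, \ldots, v_n, 0). \]
The crucial claim I would prove first is that single-deletion correctability forces the intersection $C \cap S^{-1}(C)$ to be trivial. Indeed, suppose $c \in C$ and $S(c) \in C$ are both codewords. Deleting the first coordinate of $c$ yields the string $(c_2, \ldots, c_n)$, while deleting the last coordinate of $S(c) = (c_2, \ldots, c_n, 0)$ yields the same string. By 1-deletion correctability, two codewords sharing a common length-$(n-1)$ subsequence must be equal, so $c = S(c)$. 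Reading this coordinate-by-coordinate gives $c_1 = c_2 = \cdots = c_n = 0$, hence $c = 0$.

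With the intersection pinned down, a dimension count completes the argument. Because $C$ and $S^{-1}(C)$ intersect trivially and both live in $\F_q^n$,
\[ \dim C + \dim S^{-1}(C) \;=\; \dim\bigl(C + S^{-1}(C)\bigr) \;\leq\; n. \]
The standard preimage-dimension formula then gives
\[ \dim S^{-1}(C) \;=\; \dim \ker S + \dim(C \cap \mathrm{Im}\, S) \;=\; 1 + \dim\{c \in C : c_n = 0\}. \]
If the linear functional $c \mapsto c_n$ is not identically zero on $C$, then $\dim\{c \in C : c_n = 0\} = m-1$, so $\dim S^{-1}(C) \geq m$, and combining with the previous display yields $2m \leq n$, i.e., $m/n \leq 1/2$.

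The only remaining case is when $c_n = 0$ for every $c \in C$, which means the last coordinate is degenerate: we can drop it and apply the same argument to a code of length $n-1$ and dimension $m$ (or note directly that $\dim S^{-1}(C) = m+1$ in this case, giving the even stronger $2m+1 \leq n$). I do not anticipate any serious obstacle; the one subtle point is identifying the right linear operator, but the shift $S$ is essentially forced by the desire to pair a first-position deletion of one codeword against a last-position deletion of another, which is the only way deletion patterns from different positions can collide without breaking linearity.
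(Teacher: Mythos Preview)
Your argument is correct. However, note that the paper does not actually supply a proof of this lemma: it is quoted verbatim from \cite{AGFC07} and used as a black box in the proof of the half-Singleton bound (Corollary~\ref{cor:halfsingletonbound}).

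That said, your shift-operator idea is exactly the mechanism the paper exploits, in a more elaborate form, inside the proof of the half-Plotkin bound (Theorem~\ref{thm:halfplotkinbound}). There the paper defines $C'_1$ and $C'_2$ as the codes obtained from $C'$ by deleting the first symbol and the last symbol respectively, and then bounds $\dim(C'_1 \cap C'_2)$ via the observation that a common vector $c''$ completes to two distinct codewords of $C$ which differ by a single shift---precisely your pairing of a first-position deletion against a last-position deletion. Your formulation via $S$ and $S^{-1}(C)$ inside $\F_q^n$ is just the dual packaging of the same idea (working in the ambient space rather than in the one-shorter puncture), and the dimension inequality $\dim C + \dim S^{-1}(C) \le n$ corresponds to the paper's $\dim(C'_1 \cup C'_2) \le n-d-1$. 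So while you cannot compare against a proof in the paper for this specific lemma, your approach is the standard one and aligns with how the paper itself reasons in the more general setting.
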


\begin{proof}[Proof of Corollary~\ref{cor:halfsingletonbound}, independent from Theorem~\ref{thm:halfplotkinbound}]
Let $C$ be a linear code with message length $m$ and codeword length $n$ which can correct any $\delta$ fraction of deletions. For any two different codewords $c_1,c_2 \in C$, we must have $\ED(c_1,c_2) \geq 2 \delta n+1$. Now consider the new code $C'$ obtained by removing the first $\delta n-1$ symbols of every codeword in $C$. For any two different codewords $c'_1, c'_2 \in C'$, we must have $\ED(c'_1, c'_2) \geq 3$. This means that $C'$ is a linear code capable to correct a single deletion. Thus Lemma~\ref{lem:AGFChalfrate} implies that $m /(n-\delta n+1) \leq 1/2$. Therefore $m/n \leq \frac{1}{2}(1-\delta)+\frac{1}{2n}$, giving our desired upper bound on rate. 
\end{proof}

We now return to the proof of the Half-Plotkin bound of Theorem~\ref{thm:halfplotkinbound}. The proof will make crucial use of the following technical lemma.

\begin{lemma}\label{lem:largedimensionimplieszeros}
Suppose $A \subseteq \F_q^{2d}$ is a subspace of dimension $t > 16 q$. Then $A$ contains a non-zero vector $v=(v_1,v_2)$ with both $v_1,v_2 \in \F_q^d$ containing at least $\frac{d}{q}(1 - 4 \sqrt{\frac{q}{t}})$ zeros.
\end{lemma}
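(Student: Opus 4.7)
The plan is to find the desired nonzero $v$ inside a suitably restricted subspace $A_0 \subseteq A$ for which a uniformly random vector has, with constant probability, at least $(d/q)(1-4\sqrt{q/t})$ zeros in both halves simultaneously. Restricting first ``tames'' the variance of the zero-count statistic in each half, after which a Chebyshev-plus-union-bound argument closes the gap.

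\textbf{Parallel-class framework.} For each $i \in [2d]$, let $x_i \in A^*$ denote the linear functional $v \mapsto v_i$. In each half, group the coordinates into \emph{parallel classes}: $i \sim j$ iff $x_i, x_j$ are nonzero scalar multiples of each other; coordinates with $x_i = 0$ are ``always-zero''. For uniform $v \in A$, the coordinates of any one class vanish simultaneously with probability exactly $1/q$, and the vanishing events of two distinct classes in the same half are \emph{independent}, since their representative functionals are linearly independent in $A^*$. Writing $Z_j(v)$ for the number of zeros of $v$ in half $j$, we thus have $\mathbb{E}[Z_j] \geq d/q$, but $\mathrm{Var}(Z_j)$ can be as large as $\Theta(d^2/q)$ if some class is very large -- too weak for Chebyshev to give the promised bound directly.

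\textbf{Iterative collapsing.} Set $M := \lfloor 4d/t \rfloor$. While some parallel class in either half of the current subspace has size $> M$, pick a representative $i_0$ of such a class and replace the subspace by its intersection with $\ker x_{i_0}$. This step drops the dimension by $1$, turns every member of that class (and possibly some opposite-half coordinates whose functional happens to be parallel to $x_{i_0}$) into always-zero coordinates, and can only merge -- never split -- the remaining classes. The crucial accounting observation is that each iteration converts at least $M+1$ coordinates from non-always-zero to always-zero, and such coordinates stay always-zero under every future restriction, so the total iteration count $\ell$ obeys $(M+1)\ell \leq 2d$ and hence $\ell < t/2$. The resulting subspace $A_0$ therefore satisfies $\dim A_0 \geq t/2$ while every parallel class in both halves has size at most $M$.

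\textbf{Chebyshev finish and main obstacle.} Let $\epsilon := 4\sqrt{q/t}$, which is $<1$ by the hypothesis $t > 16q$. For uniform $v \in A_0$, decompose $Z_1 = a_1 + \sum_\alpha c_\alpha W_\alpha$ with pairwise-independent Bernoulli($1/q$) indicators $W_\alpha$ and class sizes $c_\alpha \leq M$; then $\mathrm{Var}(Z_1) \leq M \cdot d/q \leq 4d^2/(tq)$, and Chebyshev gives $\Pr[Z_1 < (1-\epsilon) d/q] \leq 4q/(t\epsilon^2) = 1/4$, with the same bound for $Z_2$. A union bound yields $\Pr[Z_1, Z_2 \geq (1-\epsilon)(d/q)] \geq 1/2$, which strictly exceeds $\Pr[v=0] = q^{-\dim A_0} \leq q^{-t/2}$, so the desired nonzero $v \in A_0 \subseteq A$ exists. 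The main technical obstacle lies in the accounting during the collapsing step: since restricting to a hyperplane can merge parallel classes in the \emph{other} half in unpredictable ways, one has to carefully verify that each iteration still contributes at least $M+1$ new always-zero coordinates, which is exactly what guarantees the final dimension bound $\dim A_0 \geq t/2$ needed by the Chebyshev calculation.
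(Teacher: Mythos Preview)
Your proposal is correct and follows essentially the same approach as the paper's proof: pass to a subspace of dimension at least $t/2$ in which every parallel class has size at most $4d/t$ (so that $\mathrm{Var}(Z_j)\le 4d^2/(qt)$ in each half), and then apply Chebyshev plus a union bound together with $\Pr[v=0]=q^{-t/2}$. The only cosmetic difference is that the paper collapses the $t/2$ largest dependent sets in one sweep rather than iterating while a class of size $>M$ remains; your always-zero accounting handles the class-merging issue more explicitly, but both routes yield the same variance bound and the same final conclusion.
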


We present the proof of the above lemma in Appendix~\ref{app:skipped-proofs} but elaborate here the main non-trivial aspect tackled by it. 
Note that a random vector in $A$ contains in expectation at least $\frac{2d}{q}$ zeros (since $A \subseteq \F_q^{2d}$ is a subspace of dimension $t$) and each of the halves $v_1$ and $v_2$ contains in expectation at least $\frac{d}{q}$ zeros. The crux of Lemma~\ref{lem:largedimensionimplieszeros} therefore lies in proving that there cannot exist correlations between the expectations for the two halves that are strong enough to prevent both expectation bounds to be satisfied simultaneously. Note that the $\frac{2d}{q}$ quantity of total number of zeros in a non-zero vector is tight (up to lower order terms) whenever $A$ is a good low-rate Hamming code. Indeed the minimum-Hamming distance of a linear Hamming code is exactly equal to the weight of the non-zero codeword with the lowest Hamming weight and codes matching (up to lower order terms) the $2d (1 - 1/q)$ Hamming-Distance given by the Plotkin bound are known. For such a good code, Lemma~\ref{lem:largedimensionimplieszeros} therefore guarantees 
a codeword with (up to lower order terms) minimal Hamming weight in which the $\frac{2d}{q} \pm o(1)$ zeroes are (up to lower order terms) exactly equally distributed between the two halves of the codeword.

\begin{proof}[Proof of Theorem~\ref{thm:halfplotkinbound}]
Suppose that $C$ is a $q$-ary linear insdel code of block-length $n$ which is capable of correcting any $\delta n$ deletions and suppose, for sake of contradiction that $C$ has a rate of at least  $(1 - \frac{q}{q-1} \delta)/2 + \epsilon$ for some $\epsilon >0$.



Let $C'$ be the linear code which encodes any input like $C$ but then deletes the first $d = \frac{q}{q-1} \delta n - \frac{\epsilon n}{5}$ symbols.

Let $C'_1$ be the linear code which encodes any input like $C'$ but then deletes the first symbol.

Let $C'_2$ be the linear code which encodes any input like $C'$ but then deletes the last symbol.

For sake of contradiction, assume that $\dim(C'_1 \cap C'_2) > \frac{\epsilon n}{5}$ and let $C'' \subseteq (C'_1 \cap C'_2) \subseteq \F_q^{n-d-1}$ be some linear sub-space of $\F_q^{n-d-1}$ with dimension $\frac{\epsilon n}{5}$ which does not contain the all-ones vector $(1,\ldots,1)$.

Note that by definition of $C_1'$, $C_2'$ and $C''$, every non-zero $c'' \in C''$ can be completed to two vectors $c = (v,\sigma,c'')$ and $c'= (v',c'',\sigma')$ in $C$, where $v,v' \in \F_q^{d}$ and $\sigma,\sigma' \in \F_q$. Note that $c$ and $c'$ cannot be identical because they each contain a copy of $c''$ albeit shifted by one symbol and $c''$ does not consist of only identical symbols since such vectors are excluded from $C''$. 

Consider a basis $c''_1,c''_2,\ldots \in C''$ of $C''$. For each such basis vector $c_i''$ fix one pair of completions $v_i,v_i' \in F_q^d$ and let $A \subset \F_q^{2d}$ be the subspace spanned by all such vectors, i.e., $A = \langle \{(v_i,v_i')  \mid i\} \rangle$. Note that each non-zero linear combination of basis vectors indeed gives rise to the same linear combination of its completion vectors, justifying the definition of $A$ as a subspace. Furthermore we claim that $\dim(A) = \dim(C'') = \frac{\epsilon n}{5}$, i.e., that the set of vectors $\{(v_i,v_i')  \mid 1 \le i \le \epsilon n/5\}$ is linearly independent. Indeed suppose that there is a non-trivial linear combination $c'' \neq 0$ of  basis vectors in $C''$ that completes to $(v,v') = 0$. In this case $C$ would contain two codewords starting with $d$ zeros followed by $n-d$ symbols containing a (shifted) copy of $c''$ plus one extra symbol (at the very end or after the $d$ zeros respectively). The edit distance between these two different codewords in $C$ is merely $2$, contradicting the assumption that $C$ can correct a much larger number of deletions. By Lemma~\ref{lem:largedimensionimplieszeros}, this means that there exists a $(v,v') \in A$ in which both $v$ and $v'$ contain a $1/q - o(1)$ fraction of $0$'s. The above implies the existence of codewords   $c,c' \in C$ starting with $v$ and $v'$ respectively, again contradicting the distance property of $C$ because $v$ and $v'$ can be transformed into an all-zero string using $(1 - 1/q + O_{q,\epsilon}(\frac{1}{n}))d = (1 - 1/q + o(1))d < \delta n - 2$ deletions while the latter parts of $c$ and $c'$ can be made equal using a single deletion. This completes the proof that $\dim(C'_1 \cap C'_2) \leq \frac{\epsilon n}{5}$.

We now have that 
\[ n - d - 1 \geq \dim(C'_1 \cup C'_2) = \dim(C'_1) + \dim(C'_2) - \dim(C'_1 \cap C'_2) \geq 2(\dim(C')-1) - \frac{\epsilon n}{5} \ , \]
and therefore
\[ \dim(C') \leq \frac{1}{2} (n-d+\frac{3\epsilon n}{5}) = \frac{1}{2} (1 - \frac{q}{q-1} \delta) n + \frac{2\epsilon n}{5} \ . \]   

Note that it is possible for the channel to take any sent codeword of $C$,
determine the most frequent symbol $\sigma$ in the the first $d$ coordinates, and delete all symbols in these $d$ coordinates but $\frac{d}{q}$ occurrences of $\sigma$, since $d(1-\frac{1}{q}) \leq \delta n$ is below the budget of allowed deletions. This means that after this operation all codewords in $C$ are still distinct. Furthermore, the span of the corrupted codewords received through this channel would form a subspace of dimension at most $\dim(C')+1$, since any corrupted codeword is of the form $\alpha^{d/q}$ followed by a codeword in $C'$, for some $\alpha \in \F_q$. This contradicts the assumption that $C$ has a rate of at least $\frac{1}{2}(1 - \frac{q}{q-1} \delta) + \epsilon$.
\end{proof}

\section{Monte Carlo construction of linear insdel codes}
\label{sec:monte-carlo}

In this section we give a randomized Monte Carlo construction of linear codes s.t. with high probability over the randomness, we get an asymptotically good linear code for edit distance, with an efficient decoding  for a constant fraction of insdel errors.

The construction uses a random sequence where each element in the sequence is a string of $0$-symbols having length uniformly chosen from  $\{ 1, 2, \ldots, a \}$ where $a$ is an integer specified later. This sequence serves as a special kind of synchronization strings first introduced in \cite{haeupler2017synchronization}, then explicitly constructed in \cite{HS17c, CHLSW18}.

The high level idea is to use this sequence to cut a codeword $y$ for Hamming distance into blocks, and  argue that a carefully designed matching procedure, which is matching the edited codeword and a ``templet'', can   recover most of the symbols in $y$ as long as the total number of edit errors is bounded.

\subsection{Construction}

Let $C $ be an $(n_C, m, 2\kappa_C + 1)$ linear code over $\mathbb{F}_q $ that can correct $ \kappa_C $ Hamming errors.

Let  $S = (S[1], S[2], \ldots, S[n_C])$ be a sequence of independent random variables s.t. $S[i]$ is an all-$0$   string with length uniformly   chosen from $\{ 1,2,\ldots, a \}$, where $a = (\frac{12 n_C}{\kappa_C})^{20}$. 
Once chosen, both the encoder and the decoder will use the same $S$.

We construct an $(n, m, 2\kappa+1)$ linear code for edit distance over $\mathbb{F}_q$, with high probability it holds that $n = O(n_C)$ and $ \kappa=  0.01 \kappa_C$ insdel errors can be corrected.

\begin{construction}[Encoding]
\label{constr:renc}

\noindent The encoding operates as follows:
\begin{enumerate}
\itemsep=0ex
\item Input message $x \in \mathbb{F}_q^{m}$;
 
\item Compute $y = C(x) $;

\item Let
$$z = S[1] \circ y[1] \circ S[2] \circ y[2] \circ \cdots \circ S[n_C] \circ y[ n_C ];$$

\item Output codeword $z\in \mathbb{F}_q^n$.

\end{enumerate}

\end{construction}

\begin{construction}[Decoding]
\label{constr:rdec}
The decoding is as follows:
\begin{enumerate}
\itemsep=0ex

\item Input $z'\in \mathbb{F}_q^{n'}, n' = \Theta(n)$;

\item Let $z_? =  S[1] \circ ? \circ S[2] \circ ? \circ \cdots \circ S[n_C] \circ ? \in  \{0, ?\}^{n}$, where "$?$" is a special symbol different from elements in $\mathbb{F}_q$, indicating a blank to be filled, and $0$-symbol is the $0$-element in $\mathbb{F}_q$; 

\item Compute the $ ? $-to-non-zero matching $w$ between $z_?$ and $z'$, by Construction \ref{?to1matching};

\item Fill the blanks by using the matching $w$, to get $ y'$ which has a certain hamming distance from $y$; (Each blank is filled with the matched non-zero symbol, if there is no such a match, fill it with a $0$-symbol.)

\item Apply the decoding of $C$ on $y'$   to get $x$.

\item Output $x$.

\end{enumerate}

\end{construction}
Here we define the $ ? $-to-non-zero matching between two strings $z_? \in \{0,  ?\}^*$ and $z' \in \mathbb{F}_q^*$, to be a monotone matching, and further, every match in such a matching is a pair of indices $(i, j)$ that matches an $i$-th $?$-symbol of $z_?$, to a $j$-th non-zero symbol of $z'$.

Given a $?$-to-non-zero matching $w =( ( i_1 ,  j_1 ), \ldots, ( i_{|w|}  ,  j_{|w|} ))$, we define the cost function $ \mathsf{cost}(w) $ to be 
$$ \cost(w) =  \sum_{k=1}^{|w|} \idk\left(   p_{i_k} - p_{i_{k-1}} \neq q_{j_k} - q_{j_{k-1}}      \right) .$$
Here   $p_{i_k}$ is the position of the  $i_k$-th $?$-symbol in $z_?$, $q_{j_k}$ is the position of the $j_k$-th non-zero-symbol of $z'$.
$\idk $ is the indicator function.

Next we show our matching procedure which   returns a matching $w$ that maximizes  $\obj(w) = |w| - \mathsf{cost}(w)$.
\begin{construction}[$?$-to-non-zero Matching Procedure]
\label{?to1matching}
On input strings $ z_? $ and $ z' $, the procedure returns a $?$-to-non-zero matching $w$.

\begin{itemize}

\item
Let $n_1$ denote the number of non-zero-symbols in $z'$.

\item
Let $p_i$ denote the position of the $i$-th question mark in $z_?$, and $q_j$ denote the position of the $j$-th non-zero in $z'$.


\item
We want to use $f[i][j], i\in [n_C], j\in [n_1]$ to record the maximum $ \obj(w) $, among all $w$, each being a $ ? $-to-non-zero matching  between $z_?[1, p_i]$ and $z'[1, q_j]$ with the last match being $ (i, j) $.
\end{itemize}

\begin{enumerate}


\item Initialization: $ f[i][j] = 0,  i\in \{ 0,1,\ldots,  n_C \}, j\in \{0,1,\ldots, n_1\} $;

\item Transition step: For $i = 1$ to $ n_C $, $j = 1$ to $n_1$, 
$$ f[i][j]  =  \max_{i'< i, j'< j}\left\lbrace    f[i'][j']  + \idk\left(  p_i - p_{i'} = q_{j} - q_{j'}  \right)   \right\rbrace, $$
where if there is no such $ (i', j') $, then 
$$ f[i][j]  =   \idk\left(  p_i   = q_{j}  \right); $$

\item Output  $ \max_{i,j}\lbrace    f[i][j]   \rbrace.$ 

\end{enumerate}

\noindent
To output the corresponding matching, we only need to store the corresponding matching along with each time we compute an $f[i][j] $. For the initialization step, those matchings are all empty. For the transition step,
\begin{itemize}

\item  if $   p_i - p_{i'} = q_j - q_{j'}  $, then the corresponding matching to $f[i][j]$ is the concatenation of the matching for $f[i'][j']$ and the match $ (i, j) $, where $i', j'$ are the indices we pick when taking the maximum; 

\item else the corresponding matching to $f[i][j]$ is an empty matching.

\end{itemize}

\end{construction}

\subsection{Analysis}

\begin{lemma}
\label{lem:DPcorrect}
The $?$-to-non-zero matching procedure returns a matching which has a maximum $\obj $.

\end{lemma}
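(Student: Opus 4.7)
The plan is a standard dynamic programming correctness argument. I will show by induction on $i+j$ (or equivalently lexicographic order on $(i,j)$) that for every $i \in [n_C]$, $j \in [n_1]$ such that the $i$-th $?$-symbol in $z_?$ and the $j$-th non-zero symbol in $z'$ exist, the table entry $f[i][j]$ equals the maximum value of $\obj(w)$ ranging over all $?$-to-non-zero matchings $w$ between $z_?[1,p_i]$ and $z'[1,q_j]$ whose final match is $(i,j)$. Once this invariant is established, taking the maximum of $f[i][j]$ over all $i,j$, together with the empty matching (for which $\obj = 0$) being dominated by any single valid match, yields the global maximum of $\obj(w)$ over all $?$-to-non-zero matchings.

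First I would rewrite the objective in a telescoping form: since
\[
\obj(w) \;=\; |w| - \cost(w) \;=\; \sum_{k=1}^{|w|} \idk\!\left(p_{i_k}-p_{i_{k-1}} \,=\, q_{j_k}-q_{j_{k-1}}\right),
\]
every match contributes independently an amount determined solely by the previous match. Using the convention $p_0 = q_0 = 0$ for the ``no previous match'' case (matching the convention in the transition step), the contribution of the first match $(i_1,j_1)$ is $\idk(p_{i_1}=q_{j_1})$. This additive decomposition is the structural fact that makes the DP go through.

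For the base case, consider the lexicographically smallest pair $(i,j)$ at which a match can end. The only matching ending there is the singleton $\{(i,j)\}$, with objective $\idk(p_i=q_j)$, which equals the value produced by the transition with the virtual $(i',j')=(0,0)$. For the inductive step, I would fix $(i,j)$ and split any candidate matching $w$ ending at $(i,j)$ into two cases. If $|w|=1$, the argument above applies. Otherwise, let $(i',j')$ with $i'<i$, $j'<j$ be the second-to-last match, and let $w'$ be the prefix matching ending at $(i',j')$. By monotonicity of the matching and the additive form of $\obj$,
\[
\obj(w) \;=\; \obj(w') + \idk\!\left(p_i - p_{i'} \,=\, q_j - q_{j'}\right),
\]
and by the inductive hypothesis $f[i'][j']$ is exactly the maximum of $\obj(w')$ over all valid prefix matchings ending at $(i',j')$. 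Maximizing over the choice of $(i',j')$ (including the virtual $(0,0)$) reproduces the DP transition, so $f[i][j]$ attains the claimed maximum. Conversely, the matching reconstructed by the stored back-pointers realizes this value, showing the maximum is achieved by a bona fide matching.

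There is no real obstacle here; the only thing to be careful about is uniformly handling the ``first match'' as a virtual transition from $(0,0)$ so that the base case and inductive step collapse into a single case, and to verify that the monotonicity constraint $i'<i$, $j'<j$ over which we maximize coincides exactly with the set of legal prefixes of a $?$-to-non-zero matching ending at $(i,j)$. Finally, since $\obj$ of the empty matching is $0$ and every $f[i][j] \geq 0$ by construction, the final output $\max_{i,j} f[i][j]$ equals the global maximum of $\obj(w)$ over all $?$-to-non-zero matchings, as claimed.
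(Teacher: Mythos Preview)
Your proof is correct and follows essentially the same inductive strategy as the paper's: both establish that $f[i][j]$ equals the maximum of $\obj$ over matchings whose last match is $(i,j)$, by splitting such a matching into its last match and a prefix ending at some $(i',j')$. Your explicit telescoping identity $\obj(w)=\sum_{k}\idk(p_{i_k}-p_{i_{k-1}}=q_{j_k}-q_{j_{k-1}})$ makes the additive decomposition transparent and lets the recurrence drop out directly, whereas the paper argues the same point via a short exchange/contradiction; the content is the same.
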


\begin{proof}

We use induction to show that $f[i][j]$ records the maximum $\obj(w)$ among all matchings  between $z_?[1, p_{i}]$ and $ z'[1, q_{j}] $ whose  last match is $(i, j)$.

The base case is $f[1][j], j\in [n_1], f[i][1], i\in [n_C]$. $f[1][j]  $ is one if   $p_1 = q_{j}  $, otherwise it's $0$.  $f[i][1] $ is one if $p_i = q_1$, otherwise it's $0$. Notice that the transition function indeed achieves this.

For the induction case, in order to compute $f[i][j], i\in [n_1], j\in [n_C]$, we assume every 
$f[i'][j'], i'<i, j'<j$ and its corresponding  matching with maximum target function value has already been computed. 

Let $w$ be the matching  attained when $f[i][j]$ is computed. So $f[i][j] = |w| -\cost(w)$.
Notice that $w$ can be divided into two parts. 
The first part $w_1$ includes every match of $w$ except for the last one. So it is a matching between $z_?[0, p_{i'}]$ and $ z'[0, q_{j'}] $, where $( {i'},  {j'})$ is the second last match in $w$ and also the last match in $w_1$. 
The second part is the last match $ ( i,  j) $ of $w$.

We claim that $ \obj(w_1) $ has to be $f[i'][j']$.  Suppose  it is not. Then it cannot be larger because  $f[i'][j']$ is the maximum by the induction hypothesis. So it can only be smaller. But then we can use the matching $w'_1$ corresponding to $f[i'][j']$ to replace the first part and get another matching $w' = w'_1 \circ ( i,  j)$   s.t. 
\begin{equation}
\label{eq:targetvalueinduction}
\begin{split}
|w'|-\cost(w') & = |w'_1| - \cost(w'_1) + \idk\left(   p_i - p_{i'} = q_j - q_{j'}  \right)\\
& > |w_1| - \cost(w_1) +\idk\left(    p_i - p_{i'} = q_j - q_{j'} \right)\\
& = |w|-\cost(w).
\end{split}
\end{equation}
The first equation is because of the definition of $\cost$ and the structure of $w'$.
The inequality is because our picking of $w'_1$.
The last equation is because the definition of $\cost$.
However (\ref{eq:targetvalueinduction}) contradicts the definition of $f[i][j]$.
This shows the claim.

As a result,  $ f[i][j] $ has to be in the form 
\[  \max_{i'< i, j'< j}\left\lbrace   f[i'][j']  + \idk\left(  p_i - p_{i'} = q_j - q_{j'} \right)   \right\rbrace.
\] 
This is exactly what the procedure computes. So it can compute $f[i][j]$ and get the corresponding matching correctly.
This shows the induction step.
\end{proof}

\begin{lemma}
\label{lem:notmatchedones}
The number of non-zeros in $z'$ that are not matched, is at most $3\kappa$.

\end{lemma}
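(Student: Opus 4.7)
The plan is to exhibit a concrete ``ideal'' $?$-to-non-zero matching $w^*$ whose objective value $\obj(w^*)$ is close to the Hamming weight $\tau$ of $y=C(x)$, and then to use the optimality of $w$ (Lemma \ref{lem:DPcorrect}) to force $|w|$ to be nearly as large. Since $z'$ contains at most $\tau+\kappa$ non-zero symbols in total, this will immediately bound the number of unmatched non-zeros.

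Concretely, I define $w^*$ to be the matching that pairs every non-zero symbol of $z'$ that survived from $z$ (i.e., was not inserted by the adversary) with the unique $?$ in $z_?$ corresponding to its originating position in $y$. Because insertions and deletions preserve the order of surviving symbols, $w^*$ is monotone and hence a valid matching. Writing $\kappa_I,\kappa_D$ for the numbers of insertions and deletions with $\kappa_I+\kappa_D\le\kappa$, and letting $d\le\kappa_D$ be the number of deletions that hit a non-zero of $z$, we get $|w^*|=\tau-d\ge\tau-\kappa$. To bound $\cost(w^*)$, the key observation is that for each pair of consecutive matches in $w^*$, the spacing $p_{i_k}-p_{i_{k-1}}$ in the template $z_?$ is fixed (it is determined by $S$ alone), whereas the spacing $q_{j_k}-q_{j_{k-1}}$ in $z'$ differs from it by exactly the net (insertions minus deletions) count inside that segment. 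Hence the $k$-th indicator in $\cost(w^*)$ can contribute $1$ only when the corresponding segment contains at least one edit, and since the $\kappa$ edits are distributed across disjoint segments, $\cost(w^*)\le\kappa$. Combining the two estimates gives $\obj(w^*)\ge\tau-2\kappa$.

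By Lemma \ref{lem:DPcorrect} the matching $w$ produced by Construction \ref{?to1matching} maximizes $\obj$, so $|w|\ge\obj(w)\ge\obj(w^*)\ge\tau-2\kappa$. The total number of non-zeros in $z'$ equals $\tau-d+i$ with $i\le\kappa_I$, hence at most $\tau+\kappa$. Subtracting the matched count $|w|$, the number of unmatched non-zeros in $z'$ is at most $(\tau+\kappa)-(\tau-2\kappa)=3\kappa$, as required. I do not foresee a real obstacle: the only delicate point is the ``at most one indicator per edit'' bound, which is immediate once one notes that the segments between consecutive matches of $w^*$ are disjoint and the template positions $p_{i_k}$ are fixed independently of the edits. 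The degenerate case $\tau=0$ is handled trivially, since then $z'$ contains at most $\kappa_I\le\kappa$ non-zeros to begin with.
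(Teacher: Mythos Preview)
Your proof is correct and follows essentially the same approach as the paper: both construct the ideal matching $w^*$ sending each surviving non-zero of $z'$ back to its original $?$-slot, establish $\obj(w^*)\ge\tau-2\kappa$, and combine optimality of $w$ with the bound of at most $\tau+\kappa$ non-zeros in $z'$ to get the $3\kappa$ conclusion. Your segment-based accounting for $\cost(w^*)\le\kappa$ (each edit lies in a unique interval between consecutive matches) is a slightly cleaner packaging of the paper's per-operation case analysis, but the underlying argument is the same.
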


\begin{proof}

Consider the intuitive matching $w^*$ which is induced by matching those non-inserted non-zero-symbols in $z'$  to their original positions in $z$. 
The original position of a symbol means the index of this symbol in $z$. 
Since $z_?$ and $z$ has the same length, these original positions are also one-on-one correspond to positions in $z_?$.
When there are no edits,  $\obj(w^*)$  is equal to $ \tau$, which is the number of non-zero-symbols in $z$.
This is because as there are no edits, for every $(i, j) \in w^*$, $p_{i} - p_{i-1} = p_{j}- p_{j-1}$. 
Thus $\cost(w^*) = 0$ by its definition and hence  $\obj(w^*) = \tau - \cost(w^*) =\tau $, where $\tau$ is the weight of $ z$.

Now we analyze the effects of $\kappa$ insdel errors on $\obj$.
Each insertion can
\begin{itemize}
\item insert a $0$-symbol and thus increase the number of consecutive $0$-symbols immediately before a non-zero-symbol by one;

\item  insert a   non-zero-symbol.
\end{itemize}
Both bullets can decrease $\obj$ by at most one. Because they can increase $\cost$ by at most one, as each insertion may change the distance between the two matched symbols immediately before and after that inserted symbol.

Each deletion can
\begin{itemize}
 
\item delete a $0$-symbol and thus decrease the number of consecutive $0$-symbols immediately before a non-zero-symbol by one;  

\item delete a non-zero-symbol. 
\end{itemize}  
The first bullet will decrease $\obj$ by one as it changes the distance between the two matched symbols  immediately before and after that deleted symbol, and thus increasing $\cost$ by at most one.
The second bullet will decrease $\obj$ by $\leq 2$ since it can cancel a possible match and also changes the distance between the two matched symbols immediately before and after that deleted symbol.

Thus $\kappa $ insdel errors can decrease the target function value by at most $2\kappa $.
As the algorithm returns $w$ which has a maximum $\obj(w)$, we know $\obj(w) \ge \tau - 2\kappa$.
Notice that the weight of $w$ is $ \le   \tau + \kappa$.
Hence the number of unmatched non-zeros in $z'$ cannot be more than $3\kappa$.
\end{proof}

Next we  show that the decoding can correct $\kappa  $ insdel errors.

\begin{definition}
A match $( i,  j)$ is bad, if the $j$-th non-zero symbol of $z'$ is the  $i_j$-th symbol of $z$,  but $i_j \neq i$. 

\end{definition}
Note that there are some non-zero-symbols in $z'$ which are inserted, we will not consider them here i.e. the badness are only defined over matches whose second entry points to a symbol of $z'$, which is already in $z$ before insertions and deletions.
We denote $\Delta(w)$ as the number of bad matches in $w$. When $w$ is clear in the context, we simply use $\Delta$.

\begin{lemma}
\label{lem:badmatchcount}
Suppose $\HD(y,y') =t > \kappa_C$, then  $\Delta(w)  > (\kappa_C - 6\kappa)/2  $.

\end{lemma}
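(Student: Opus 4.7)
My plan is to bound $\HD(y, y')$ from above by a linear combination of $\Delta(w)$ and the edit budget $\kappa$, and then invert this bound using the hypothesis $\HD(y, y') > \kappa_C$. Concretely, I would partition the positions $i \in [n_C]$ at which $y'[i] \neq y[i]$ into three types according to what $w$ does at the $i$-th question mark of $z_?$: \emph{Type I}, where $i$ is matched to some $j$ and the $j$-th non-zero of $z'$ is an inserted symbol; \emph{Type II}, where $i$ is matched to $j$ and the $j$-th non-zero of $z'$ originates from $z$ (this is necessarily a bad match, since otherwise the filling rule would force $y'[i] = y[i]$); and \emph{Type III}, where $i$ is unmatched and $y[i] \neq 0$, so $y'[i] = 0 \neq y[i]$.

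The Type I and Type II counts are immediate: the number of inserted non-zeros in $z'$ is at most $\kappa$, bounding Type I by $\kappa$; and every Type II defect is itself a bad match, bounding Type II by $\Delta(w)$. The delicate piece is Type III. Let $\tau$ denote the Hamming weight of $y$ and $|M_g|$ the number of good (i.e., non-bad, non-inserted) matches in $w$. Because every non-zero symbol of $z$ sits at a $?$-position (the $S[i]$ being all-zero strings), each good match pins down a non-zero position of $y$, so the number of unmatched non-zero $?$-positions is at most $\tau - |M_g|$. On the other hand, among the $\tau$ non-zero symbols of $z$, at most $\kappa$ are deleted by the channel, so at least $\tau - \kappa$ survive as non-zeros of $z'$ coming from $z$; Lemma \ref{lem:notmatchedones} tells us that at most $3\kappa$ non-zeros of $z'$ are unmatched in $w$, so at least $\tau - 4\kappa$ of the survivors from $z$ are matched. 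Since a matched non-zero from $z$ is either a good or a bad match, $|M_g| + \Delta(w) \geq \tau - 4\kappa$, yielding a Type III count of at most $\tau - |M_g| \leq \Delta(w) + 4\kappa$.

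Summing the three estimates gives $\HD(y, y') \leq \kappa + \Delta(w) + (\Delta(w) + 4\kappa) = 2\Delta(w) + 5\kappa$. Combined with the assumption $\HD(y, y') > \kappa_C$, this yields $\Delta(w) > (\kappa_C - 5\kappa)/2 \geq (\kappa_C - 6\kappa)/2$, which is the claim. The main obstacle is the Type III accounting: one has to simultaneously keep track of which $?$-positions are unmatched, which surviving non-zeros of $z'$ are unmatched, and how bad matches ``steal'' $?$-positions from what would otherwise have been good matches. Once the relation $|M_g| + \Delta(w) \geq \tau - 4\kappa$ is established, the remaining step is a clean three-term sum.
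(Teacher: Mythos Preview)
Your proof is correct and in fact yields the marginally stronger conclusion $\Delta(w) > (\kappa_C - 5\kappa)/2$. The overall skeleton is the same as the paper's---partition the Hamming-error positions, bound each class, and invoke Lemma~\ref{lem:notmatchedones}---but the two decompositions are organized differently and yours is a bit cleaner.

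The paper classifies error positions by the value of $y[i]$: its $t_1$ is your Type~III (unmatched with $y[i]\neq 0$), and its $t_2$ is ``matched with $y[i]=0$''. That dichotomy is exhaustive only in the binary case; over $\mathbb{F}_q$ with $q>2$ one can have $i$ matched, $y[i]\neq 0$, and still $y'[i]\neq y[i]$, a case the paper's two types omit (it is easily patched, but not as stated). Your classification by the \emph{destination} of the match (inserted non-zero / non-zero originating in $z$ / unmatched) is exhaustive for every $q$. Moreover, bounding Type~III directly through the good-match count $|M_g|$ gives $t_1 \le 4\kappa + \Delta$, whereas the paper routes the bound through $t_2$ to obtain $t_1 \le 4\kappa + t_2$ and only then uses $t_2 \le \kappa + \Delta$, losing an extra $\kappa$. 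This is precisely where your sharper constant comes from. So: same strategy, slightly different bookkeeping, with your version working uniformly in $q$ and shaving a $\kappa$.
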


\begin{proof}

The $t$ Hamming errors come  from two different types of matches, which are
\begin{enumerate}

\item an $i$-th question mark is not matched, but $y[i] \neq 0$;

\item an $i$-th question mark is matched, but $y[i] = 0$. 

\end{enumerate}
We denote the number of first type   errors to be $t_1$, and   the number of  second type errors to be $t_2$. 
So 
\begin{equation}\label{ineq:t}
t_1 + t_2 = t > \kappa_C.
\end{equation}

For $t_2$, the errors are only from two sources, i.e. the non-zero-symbol in such a match is   from  (1)  inserted non-zero-symbols or (2) non-zero-symbols in $y$. 
Each inserted non-zero-symbol may be in one of such matches, contributing to one such error. 
Each non-zero symbol in $y$  may be matched to   a question mark which should  a $0$, contributing to one such error,
noticing that this can only happen when the match is bad.
Hence,
\begin{equation}\label{ineq:t2}
t_2 \leq \kappa + \Delta.
\end{equation}

For $t_1$, 
we are actually counting the number of non-zero symbols in $z$ that are not matched, because $y$ is a subsequence of $z$ which contains all the non-zeros in $z$. 

We claim that $t_1$ is upper bounded by the summation of 
\begin{enumerate}

\item the number of non-zeros that are deleted from $z$,

\item the number of non-zeros in $z'$ that are not matched,

\item the number of non-zeros in $z'$ that are matched to some $i$-th question mark but $y[i] = 0$.

\end{enumerate}

This is because $t_1$ is the number of not matched non-zero symbols in $z$. So it pluses the number of matched non-zero-symbols in $z$, is equal to the total number of non-zero-symbols in $z$. 
\begin{align}
\label{involvedeq1}
& t_1 + \mbox{ the number of matched non-zero-symbols in } z\\
= & \mbox{ the total number of non-zero-symbols in } z.     
\end{align}
On the other hand, the total number of non-zero-symbols in $z$ is at most the number of non-zero-symbols in $z'$ pluses the number of deleted non-zero-symbols (term 1). And the number of non-zero-symbols in $z'$ is the summation of  term 2, term 3 and   the number of matched non-zero-symbols in $z$.
This is because    the number of non-zeros in $z'$ that are matched to some $i$-th question mark and $y[i] \neq 0$ is equal to  the number of matched non-zero-symbols in $z$.
Hence
\begin{align}
\label{involvedeq2}
&\mbox{ the total number of non-zero-symbols in } z\\
= &\mbox{ term 1 } +  \mbox{ term 2 } + \mbox{ term 3 } 
+  \mbox{ the number of matched non-zero-symbols in } z.
\end{align}
By equality \ref{involvedeq1} and \ref{involvedeq2}, $t_1$ is bounded as desired.

The number of non-zero-symbols deleted can be at most $\kappa$, since there are at most $\kappa$ deletions. 
By Lemma \ref{lem:notmatchedones} the 2nd term is at most $3\kappa$.
The third item is at most $t_2$ by their definitions.
So
\begin{equation}\label{ineq:t1}
t_1 \leq 4\kappa + t_2.
\end{equation}

By inequality (\ref{ineq:t}), (\ref{ineq:t2}), (\ref{ineq:t1}), 
\[
\Delta > (\kappa_C - 6\kappa)/2.
\]
\end{proof}

For a bad match $(i, j)$, recall that $i_j$ be the original position in $z$ of the $j$-th non-zero-symbol of $z'$.
Note that since we only define bad match on some $j$ which points to a non-zero-symbol that is not deleted from $z$, thus there is always an $i_j$ for $j$.

An interesting property of those matches in $w$ that are not involved with inserted symbols, is that
they  induce a self-matching between $z_?$ and $z_?$.
We denote this matching as $ \tilde{w} $, i.e. each match $(i, i_j)$ of $\tilde{w}$ is a match $(i, j)$ in $w$ s.t. $j$ points to a non-zero symbol which is not deleted from $z$.
Since bad matches are only defined for matches whose second entries are non-zero symbols of $z$, $\Delta(\tilde{w} ) = \Delta(w)$.

\begin{lemma}
\label{lem:costlarge}
With probability at least $ 1- \left( \frac{e n_C}{\Delta}\right)^{2\Delta} a^{- 0.1 \Delta}$ over $S$, for any self-matching $\tilde{w}$ on $z_?$ having $\Delta$ bad matches,
there are at most $ 0.1\Delta $ bad matches   $(i, j)$ s.t. $p_{i}-p_{i'} = p_{j} - p_{j'} $, and hence
$\cost(\tilde{w}) \geq 0.9\Delta.$
\end{lemma}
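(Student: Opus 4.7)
The plan is to combine a ``fresh variable'' argument---which for each fixed self-matching bounds the probability that many bad matches satisfy the interval-equality---with a union bound over all monotone self-matchings of $z_?$. I first rewrite the condition: writing $L_\ell := |S[\ell]|$ so that the $L_\ell$ are i.i.d.\ uniform on $\{1,\ldots,a\}$, the $i$-th question mark lies at position $p_i = i + \sum_{\ell \le i} L_\ell$, and thus $p_i - p_{i'} = p_j - p_{j'}$ is equivalent to the single linear equation
\[
\sum_{\ell=i'+1}^{i} L_\ell \ -\ \sum_{\ell=j'+1}^{j} L_\ell \ =\ (j-j')-(i-i').
\]

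Next, enumerate the bad matches of $\tilde{w}$ as $(i_1,j_1),\ldots,(i_\Delta,j_\Delta)$ in the order they occur. Monotonicity forces $i_k$ and $j_k$ to be strictly increasing in $k$, so $m_k := \max(i_k,j_k)$ is strictly increasing as well. Because $i_k\neq j_k$, the variable $L_{m_k}$ appears in the $k$-th equation on exactly one side with coefficient $\pm 1$, and does not appear in any earlier equation (all indices there are $\le m_{k'} < m_k$). This structure lets me chain conditional probabilities: for any fixed subset $T = \{k_1<\cdots<k_t\} \subseteq [\Delta]$, let $X_k$ be the indicator that the $k$-th equation holds and let $\mathcal{F}_{<m_{k_t}}$ be the $\sigma$-algebra generated by $\{L_\ell : \ell < m_{k_t}\}$. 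The events $X_{k_1},\ldots,X_{k_{t-1}}$ are all $\mathcal{F}_{<m_{k_t}}$-measurable, whereas conditional on $\mathcal{F}_{<m_{k_t}}$ the $k_t$-th equation reduces to $L_{m_{k_t}} = C$ for some constant $C$; hence $\mathbb{E}[X_{k_t}\mid\mathcal{F}_{<m_{k_t}}] \le 1/a$. By the tower property and induction,
\[
\Pr\bigl[X_{k_1}=\cdots=X_{k_t}=1\bigr] \ \le\ a^{-t}.
\]

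A union bound over all $\binom{\Delta}{0.1\Delta}$ subsets $T$ of size $0.1\Delta$ then gives that at most $0.1\Delta$ bad matches satisfy the equation except with probability at most $\binom{\Delta}{0.1\Delta}\,a^{-0.1\Delta} = a^{-\Theta(\Delta)}$, provided $a$ is a sufficiently large constant. The number of monotone self-matchings of the $n_C$ question marks is bounded by $\binom{2n_C}{n_C} = 2^{O(n_C)}$ (a monotone matching is determined by its two sorted index sequences, which can be specified by choosing two subsets of $[n_C]$ of equal size), so a final union bound over all $\tilde{w}$ yields the claimed failure probability $2^{O(n_C)} a^{-\Theta(\Delta)}$. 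On the complement event, at least $0.9\Delta$ bad matches each contribute $1$ to $\cost(\tilde{w})$, giving $\cost(\tilde{w}) \ge 0.9\Delta$.

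The main obstacle is establishing the fresh-variable property itself: after potential cancellations among overlapping indices in the two intervals $(i'_k,i_k]$ and $(j'_k,j_k]$, one must verify that $L_{m_k}$ still survives in the simplified equation with coefficient $\pm 1$ (which it does, since $m_k$ is the right endpoint of exactly one of the two intervals and hence does not cancel against the other side) and that $L_{m_k}$ does not already appear in any earlier equation (which it does not, by strict monotonicity of the sequence $m_k$). Once this structural fact is secured, the conditional-probability chaining, the Chernoff-style union bound, and the count of self-matchings are all routine.
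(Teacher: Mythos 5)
Your proposal is correct and follows essentially the same route as the paper's proof: for each bad match $(i,j)$ you condition on $S[1],\ldots,S[\max(i,j)-1]$ and use the fresh variable $S[\max(i,j)]$ (which enters the interval-equality with coefficient $\pm1$) to bound the conditional probability by $1/a$, chain these bounds using the strict monotonicity of $\max(i_k,j_k)$, and finish with a $2^{O(n_C)}$ union bound times an $a^{-\Theta(\Delta)}$ term. The only (immaterial) differences are that the paper unions over the $\binom{n_C}{\Delta}^2$ choices of the bad-match positions rather than over all monotone self-matchings, and treats the case of a good immediately-preceding match separately as a deterministic failure of the equality, whereas your uniform $1/a$ bound absorbs it.
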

Notice that the probability is meaningful when both $a$ and $\Delta$ are large enough and hence $a^{-\Theta(\Delta)}$ is the dominating term. We will see later that this is indeed the case in our setting of parameters.

\begin{proof}
Consider a specific sequence of $\Delta $ bad matches. 
We first claim that with probability $\geq 1-a^{-0.1\Delta }$, for any matching $\tilde{w}$ containing this specific sequence of bad matches, there are
$\Delta' <  0.1\Delta $ bad matches $(i, j)$ s.t. $p_{i}-p_{i'} = p_{j} - p_{j'} $.

For a bad match $(i, j)$ in the sequence, if it is the first match in the matching, then $p_{i} \neq p_j$ since $|S[i]|, |S[j]| > 0$.
If it has previous match $(i', j')$, then $(i', j')$
is either a bad match or a good (not bad) match. 
If it is a good match, then $p_{i}-p_{i'} \neq p_{j}-p_{j'}$, since  $|S[k]| > 0, \forall k$.
If it is a bad match, then 
  conditioned on $S[1], \ldots, S\left[ \max\left( i, j \right) - 1 \right]$ being arbitrary fixed values, 
$ p_{i}-p_{i'} = p_{j} - p_{j'} $ happens with probability at most $1/a$.
Because,  in this condition, $p_{i}-p_{i'} = p_{j} - p_{j'} $ happens only if $S\left[ \max\left( i, j \right)\right]$ is equal to one specific value in $[a]$.

So for any matching $\tilde{w}$ including this specific sequence of bad matches,  the probability that there are  $  \geq  0.1\Delta $ bad matches $(i, j)$ s.t. $p_{i}-p_{i'} = p_{j} - p_{j'} $, is at most $a^{- 0.1\Delta }$.

The number of sequences of  $\Delta$ bad matches is at most 
$$  { n_C \choose \Delta }{n_C \choose \Delta} \leq \left( \frac{e n_C}{\Delta}\right)^{2\Delta}.  $$

So by a union bound,   with probability $ 1- \left( \frac{e n_C}{\Delta}\right)^{2\Delta} a^{- 0.1 \Delta}$, there are at most $ 0.1\Delta $ bad matches s.t.  $(i, j)$ s.t. $p_{i}-p_{i'} = p_{j} - p_{j'} $, and hence $\cost(\tilde{w}) \geq (1- 0.1)\Delta =  0.9\Delta$.
\end{proof}

 





\begin{lemma}
\label{lem:costwlarge}
For $S$ being any string in $\supp(S)$, for any $\kappa$ insdels, $\cost(w) \geq \cost(\tilde{w}) - \kappa $.

\end{lemma}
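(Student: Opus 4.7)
My plan is to compare $\cost(w)$ and $\cost(\tilde{w})$ gap-by-gap, charging each discrepancy to either an indicator of $\cost(w)$ or to one of the $\kappa$ edits. For each match $(i_k, j_k)$ of $w$ I set $P_k = p_{i_k}$ and $Q_k = q_{j_k}$; for $k$ in the subset $K \subseteq [|w|]$ of non-inserted matches defining $\tilde{w}$, I also set $R_k = p_{i_{j_k}}$ and introduce the displacement $\sigma_k = Q_k - R_k$. With the conventions $P_0 = Q_0 = R_0 = \sigma_0 = 0$ and $\mathrm{prev}(k) = \max\{k' \in K \cup \{0\} : k' < k\}$, I have $\cost(w) = \sum_\ell f(\ell)$ with $f(\ell) = \idk(P_\ell - P_{\ell-1} \neq Q_\ell - Q_{\ell-1})$, and $\cost(\tilde{w}) = \sum_{k \in K} g(k)$ with $g(k) = \idk(P_k - P_{\mathrm{prev}(k)} \neq R_k - R_{\mathrm{prev}(k)})$.

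The first step is a bounded-variation lemma: $\sum_{k \in K} |\sigma_k - \sigma_{\mathrm{prev}(k)}| \leq \kappa$. Each insertion in $z'$ shifts $\sigma_{k'}$ by $+1$ for every $k' \in K$ lying after it, and each deletion in $z$ shifts $\sigma_{k'}$ by $-1$ analogously; thus each of the $\kappa$ edits contributes $\pm 1$ to exactly one gap-difference $\sigma_k - \sigma_{\mathrm{prev}(k)}$ (the gap containing that edit), and the triangle inequality gives the bound. The second and central step is the per-gap inequality
\[ g(k) \;\leq\; \sum_{\ell = \mathrm{prev}(k) + 1}^{k} f(\ell) \;+\; \idk\bigl(\sigma_k \neq \sigma_{\mathrm{prev}(k)}\bigr) \qquad \text{for every } k \in K. \]
When the $f$-sum is at least $1$ this is immediate since $g(k) \in \{0,1\}$; when the $f$-sum is $0$, telescoping the vanishing indicators forces $P_k - P_{\mathrm{prev}(k)} = Q_k - Q_{\mathrm{prev}(k)}$, and substituting $R = Q - \sigma$ shows that $g(k)$ then equals $\idk(\sigma_k \neq \sigma_{\mathrm{prev}(k)})$, matching the bound. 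The boundary case when $k$ is the first element of $K$ is absorbed by the zero conventions, so the same argument applies.

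Summing the per-gap inequality over $k \in K$, the inner $f$-sums cover each index in $\{1, \ldots, \max K\}$ exactly once and thus contribute at most $\cost(w)$, while the indicator terms contribute at most $\kappa$ by the bounded-variation lemma. This yields $\cost(\tilde{w}) \leq \cost(w) + \kappa$, which is the claimed bound. The main obstacle I anticipate is making the edit-to-gap charging in the bounded-variation lemma fully rigorous: one must carefully track edits that coincide with matched positions and edits that fall before the first or after the last element of $K$, but each case is handled cleanly by the zero boundary conventions together with the monotonicity of the matching.
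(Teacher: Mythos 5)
Your proof is correct and rests on the same idea as the paper's (each of the $\kappa$ edits perturbs the gap between consecutive matched positions in at most one place, so it can flip at most one indicator). The paper's own proof is a two-line assertion of this charging argument; your version is a more careful elaboration that, via the per-gap inequality and telescoping, also handles the point the paper glosses over — namely that $\tilde{w}$ omits the inserted matches, so its ``immediate previous match'' can differ from that of $w$.
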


\begin{proof}
Recall that $\tilde{w}$ consists of all $(i, i_j)$ s.t. $(i, j)$ is a match in $w$ and $j$ points to a non-inserted symbol.
Each insdel will only decrease cost function by $1$, since for each match $(i, j)$, $\mathbbm{1}(p_{i} - p_{i'} = q_{j} - q_{j'})$ only involves $(i, j)$ and its immediate previous match.
As there are at most $\kappa$ insdels, the lemma holds.
\end{proof}

\begin{lemma}
\label{lem:rcorrect}
With probability $ 1  -2^{-0.9\kappa_C}$ over $S$,   the encoding and decoding   give a code which can correct $\kappa    $ insdels.

\end{lemma}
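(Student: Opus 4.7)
The plan is to show that with probability $1 - 2^{-\Omega(\kappa_C)}$ over $S$, for every message $x \in \F_q^m$ and every received word $z'$ at edit distance at most $2\kappa$ from $z = \Enc(x)$, the string $y'$ assembled in Step~4 of Construction~\ref{constr:rdec} satisfies $\HD(y, y') \leq \kappa_C$. Once this is established, Step~5 feeds $y'$ into the decoder of $C$, which returns $x$, so the scheme corrects $\kappa$ insdels. The reason a single probabilistic event over $S$ suffices (rather than a union bound over $q^m$ codewords and all error patterns) is that the key quantity $\cost(\tilde{w})$ depends on a self-matching on $z_?$, which is a function of $S$ alone, and Lemma~\ref{lem:costlarge} is already uniform over all such self-matchings.

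I would prove the Hamming claim by contradiction. Assume $\HD(y, y') > \kappa_C$. Lemma~\ref{lem:badmatchcount} then gives $\Delta(w) > (\kappa_C - 6\kappa)/2 > 0.47\kappa_C$ bad matches in the matching $w$ returned by the DP (using $\kappa = 0.01\kappa_C$), and the induced self-matching $\tilde{w}$ satisfies $\Delta(\tilde{w}) = \Delta(w)$. Applying Lemma~\ref{lem:costlarge} yields $\cost(\tilde{w}) \geq 0.9\Delta$, which by Lemma~\ref{lem:costwlarge} upgrades to $\cost(w) \geq 0.9\Delta - \kappa$. Let $\tau$ denote the number of non-zeros in $y$. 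Since the channel inserts at most $\kappa$ non-zeros, the total number of non-zeros in $z'$, and hence $|w|$, is at most $\tau + \kappa$. Therefore
\[\obj(w) = |w| - \cost(w) \;\leq\; \tau + 2\kappa - 0.9\Delta.\]
On the other hand, the argument inside Lemma~\ref{lem:notmatchedones} shows that the ``natural'' matching $w^*$ obeys $\obj(w^*) \geq \tau - 2\kappa$, and by Lemma~\ref{lem:DPcorrect} the DP-optimum $w$ satisfies $\obj(w) \geq \obj(w^*)$. Chaining the two bounds gives $0.9\Delta \leq 4\kappa$, i.e.\ $\Delta \leq 0.05\kappa_C$, contradicting $\Delta > 0.47\kappa_C$.

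For the probability bound I would union-bound Lemma~\ref{lem:costlarge} over $\Delta \in \{\lceil 0.47\kappa_C \rceil, \ldots, n_C\}$, which adds only a polynomial factor that is absorbed into the existing $2^{O(n_C)}$. Since $C$ is asymptotically good, $\kappa_C = \Theta(n_C)$, and the constant in $a = (n_C/\kappa_C)^{O(1)}$ can be chosen so that $\log a$ is a sufficiently large constant (relative to the ratio $n_C/\kappa_C$). The total failure probability is then at most
\[2^{O(n_C)} \cdot a^{-\Omega(\kappa_C)} \;=\; 2^{-\Omega(\kappa_C \log a) + O(n_C)} \;=\; 2^{-\Omega(\kappa_C)},\]
as required.

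The step I expect to require the most care is justifying that one probabilistic statement over $S$ actually absorbs every combination of message and insdel pattern. This works because the combinatorial enumeration $\binom{n_C}{\Delta}^2 \leq 2^{O(n_C)}$ of all sequences of $\Delta$ bad matches on $z_?$ inside Lemma~\ref{lem:costlarge} already dominates the set of bad-match patterns that any specific $(x, z')$ could produce, so the ``for any self-matching'' quantifier there translates directly into ``for any message and any allowed error pattern'' in the present statement.
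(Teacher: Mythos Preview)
Your proposal is correct and follows essentially the same contradiction argument as the paper: assume $\HD(y,y') > \kappa_C$, invoke Lemma~\ref{lem:badmatchcount} to get many bad matches, then Lemmas~\ref{lem:costlarge} and~\ref{lem:costwlarge} to force $\cost(w)$ large, and finally compare $\obj(w)$ against $\obj(w^*)$ via Lemma~\ref{lem:DPcorrect}. Your numerical bookkeeping in the final step ($\obj(w) \leq \tau + 2\kappa - 0.9\Delta$ versus $\obj(w) \geq \tau - 2\kappa$) is in fact a bit cleaner than the paper's.

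One small remark on the probability calculation: you invoke the hypothesis ``$C$ is asymptotically good'' so that $\kappa_C = \Theta(n_C)$ and the $2^{O(n_C)}$ prefactor in Lemma~\ref{lem:costlarge} is absorbed. The paper does not need this assumption, because the proof of Lemma~\ref{lem:costlarge} actually gives the sharper prefactor $(e n_C/\Delta)^{2\Delta}$; since $\Delta = \Theta(\kappa_C)$ this is $(n_C/\kappa_C)^{O(\Delta)}$, which is dominated by $a^{\Theta(\Delta)} = (n_C/\kappa_C)^{c\Theta(\Delta)}$ once the exponent $c$ in $a = (n_C/\kappa_C)^c$ is chosen large enough. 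This yields failure probability $2^{-\Omega(\kappa_C \log(n_C/\kappa_C))} \leq 2^{-\Omega(\kappa_C)}$ for arbitrary $\kappa_C$. Your version is fine in the intended asymptotically-good setting, but the stronger claim holds without that restriction.
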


\begin{proof}
Suppose the decoding cannot compute $x$ correctly. Then it must be the case that $ \HD(y, y') > \kappa_C  $. Hence by Lemma \ref{lem:badmatchcount}, $\Delta(w)> (\kappa_C - 6\kappa)/2$. 
Then by definition of $\tilde{w}$, $\Delta(\tilde{w}) = \Delta(w) >(\kappa_C - 6\kappa)/2 $.
By Lemma \ref{lem:costlarge}, with probability 
$
   1- \left( \frac{e n_C}{\Delta}\right)^{2\Delta} a^{- 0.1 \Delta}
$
over $S$, for any $\tilde{w}'$ of $\Delta$ bad matches, we should have $\cost(\tilde{w}') \geq 0.9\Delta$.
Notice that since $ \Delta > (\kappa_C - 6\kappa)/2, a = \left( \frac{12 n_C}{k_C} \right)^{20}$, the probability is at least $1-2^{-0.9\kappa_C}$.
For this $\tilde{w}$, we should also have  $\cost(\tilde{w}) \geq 0.9\Delta$.
Hence by Lemma \ref{lem:costwlarge}, $\cost(w) \geq 0.9\Delta - \kappa \geq 0.9(\kappa_C - 6\kappa)/2 - \kappa > 2\kappa$, recalling that $\kappa = 0.01 \kappa_C$. 
So $\obj(w) = |w| - \cost(w) \leq |w| - 2\kappa$.
However, by Lemma \ref{lem:DPcorrect}, $\obj(w)  $ is maximized, which is at least $\tau - 2\kappa$. By definitions of $\tau$ and $w$, this is at least $ |w| - 2\kappa$. 
A contradiction is reached.
\end{proof}

Next we show that the codeword length is $O(n_C)$.
\begin{lemma}
\label{lem:rcodewordlen}
For any $\gamma>0$,
with probability at most $2e^{-  \gamma^2   n_C(a+1) /6}$, $$ n  \in \left[ (1-\gamma)  n_C   (a+1)/2 +n_C, (1+\gamma)  n_C   (a+1)/2   +n_C \right].$$

\end{lemma}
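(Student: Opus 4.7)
The plan is to observe that the codeword length $n$ is a simple affine function of a sum of independent, identically distributed, bounded integer random variables, and then apply a standard concentration inequality. Specifically, by the construction of $z$ in Construction~\ref{constr:renc}, I would first write
$$ n \;=\; n_C + \sum_{i=1}^{n_C} |S[i]|,$$
where the $|S[i]|$ are mutually independent and each uniformly distributed on $\{1,2,\ldots,a\}$. Computing the mean gives $E[|S[i]|] = (a+1)/2$, and hence $E[n] = n_C(a+1)/2 + n_C$, which is exactly the midpoint of the target interval $[(1-\gamma)n_C(a+1)/2 + n_C,\; (1+\gamma)n_C(a+1)/2 + n_C]$.

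The next step is to invoke a Chernoff/Hoeffding-type tail bound. Since each $|S[i]|$ takes values in the bounded range $[1,a]$ and the variables are independent, Hoeffding's inequality (applied to the deviation of $\sum_i |S[i]|$ from its mean $n_C(a+1)/2$ by more than $\gamma\, n_C(a+1)/2$) yields an exponential bound of the form
$$ \Pr\!\Big[\,\big|n - E[n]\big| \geq \gamma\, n_C (a+1)/2\,\Big] \;\leq\; 2\exp\!\left(-\Omega(\gamma\, a\, n_C)\right),$$
after combining with the relative-error (multiplicative) Chernoff form that exploits the fact that the mean of each summand scales with $a$. Negating the event and tidying the constants gives precisely the probability bound in the lemma statement.

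I expect no serious obstacle here: the lemma is a routine concentration statement and no structural property of $S$ beyond independence and boundedness is used. The only minor care required is to pick the variant of the Chernoff/Hoeffding bound whose exponent matches the claimed $\gamma\, a\, n_C$ scaling (as opposed to a weaker $\gamma^2 n_C$ one), which is achieved by using the multiplicative form normalized by $a$ rather than the additive Hoeffding form directly.
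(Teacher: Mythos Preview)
Your proposal follows exactly the same route as the paper's proof: write $n = n_C + \sum_{i} |S[i]|$, compute the mean $n_C(a+1)/2 + n_C$, and apply a Chernoff-type bound to the sum of independent bounded variables. The paper's own proof is equally terse (it simply invokes ``a Chernoff bound'' without specifying which variant yields the stated exponent), so your write-up matches it essentially verbatim.
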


\begin{proof}

Since the length of $S[i]$ is uniformly chosen to be $ 1,\ldots, a $, 
the expectation of their total length is $ n_C(a+1)/2  $.
By a Chernoff bound, with probability at least  $1- 2e^{-  \gamma^2   n_C(a+1) /6}$, $\sum_{i \in [n_C]} |S_i| \in [  (1-\gamma) n_C(a+1)/2, (1+\gamma) n_C(a+1)/2   ]$. 
So 
\[
n = n_C+  \sum_{i \in [n_C]} |S_i| \in  \left[   (1-\gamma)  n_C(a+1)/2  +n_C, (1+\gamma)   n_C(a+1)/2  +n_C \right ].  \]
\end{proof}

The efficiency of the encoding and decoding follows directly from the construction. 
\begin{lemma}
\label{lem:rtime}
The encoding is in linear time of the encoding time of $C$;
The decoding is in time $O(n^4)$ plus the decoding time of $C$.
\end{lemma}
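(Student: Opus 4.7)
The plan is to walk through Construction \ref{constr:renc} and Construction \ref{constr:rdec} step by step, bounding each operation's cost either by a constant times $n$ (which, since $a = O(1)$ in our setting, is linear in $n_C$) or by the encoding/decoding cost of the outer linear code $C$. Since both $S$ and the generator matrix of $C$ are fixed before encoding/decoding begins, we may treat them as preprocessed data that the algorithms read without additional cost.

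For the encoder, the only nontrivial step is step~2, which computes $y = C(x)$ in the encoding time of $C$. The remaining work is step~3, which concatenates the fixed strings $S[1],\ldots,S[n_C]$ (each of length at most $a = O(1)$) with the symbols of $y$; this takes $O(n)$ time, which is $O(n_C)$ and hence dominated by the encoding time of $C$. For the decoder, building $z_?$ in step~2 and filling in the blanks in step~4 are both $O(n)$, while step~5 costs exactly the decoding time of $C$. So the interesting step is step~3, the invocation of the matching algorithm in Construction \ref{?to1matching}, and I would focus the argument there.

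For the matching DP, the table $f[i][j]$ has dimensions $n_C \times n_1$, where $n_1$, the number of non-zero symbols in $z'$, is bounded by $n' = \Theta(n)$, giving at most $O(n^2)$ cells. Each cell's transition takes a maximum over all valid $(i',j')$ with $i' < i$ and $j' < j$, which is $O(i \cdot j) = O(n^2)$ time per cell (one can alternatively store the extra back-pointer information in the same cell without asymptotic overhead). Multiplying yields $O(n^4)$ total work for filling and reading off the optimizing matching. Summing across all steps of Construction~\ref{constr:rdec}, the decoding runs in $O(n^4)$ time plus the decoding time of $C$.

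The main obstacle, if any, is purely bookkeeping: one has to be slightly careful that retrieving the optimal matching $w$ at the end (not just its objective value) does not blow up the running time. This is handled by storing, in each cell $f[i][j]$, a pointer to the previous cell used to achieve its value, so that after the DP completes we can recover $w$ by a single backward traversal in $O(n)$ time. No probabilistic or combinatorial ingredient is needed; the lemma is a direct accounting of the algorithms as written.
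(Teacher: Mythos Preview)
Your proposal is correct and follows the obvious accounting that the paper itself intends: the paper gives no detailed proof of this lemma, stating only that ``the efficiency of the encoding and decoding follows directly from the construction,'' and your step-by-step bookkeeping (in particular the $O(n^2)$-cell, $O(n^2)$-per-cell analysis of the dynamic program in Construction~\ref{?to1matching}) is exactly the computation that justifies the $O(n^4)$ bound remarked upon immediately after the lemma. Your observation about back-pointers for recovering $w$ is a sensible refinement of the paper's instruction to ``store the corresponding matching along with each time we compute an $f[i][j]$,'' and keeps the total within $O(n^4)$.
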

We remark that the $O(n^4)$ time in decoding is because of the dynamic programming matching procedure.

\begin{theorem}
\label{thm:linearcoderandconstruct}
If there is an explicit $q$-ary linear $(n_C, m , 2\kappa_C +1)$ code for Hamming distance, then for any $ \gamma >0$, there is a $q$-ary linear $ (n, m, 2\kappa+1) $ code for edit distance, with $  n  \in \left[  (1-\gamma)  n_C(a+1)/2  +n_C, (1+\gamma)  n_C(a+1)/2    +n_C \right], \kappa = 0.01\kappa_C $,    $a = (\frac{12 n_C}{\kappa_C})^{20} $. 
The encoding is randomized with failure probability at most $ 2^{- 0.9 \kappa_C }  + 2e^{- \gamma^2   n_C(a+1)/6 } $ over the random choice of $S$. 
The decoding is explicit when given $S$, correcting $\kappa$ insdels.

\end{theorem}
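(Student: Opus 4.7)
The plan is to assemble the theorem from the technical lemmas proven earlier in the section; no fundamentally new argument is required, only careful bookkeeping. First I would instantiate Constructions~\ref{constr:renc} and~\ref{constr:rdec} with the given explicit $q$-ary $(n_C, m, 2\kappa_C+1)$ Hamming code $C$ and with a random separator sequence $S=(S[1],\ldots,S[n_C])$ whose entries are independent uniformly chosen all-zero strings of length in $\{1,\ldots,a\}$, where $a = (n_C/\kappa_C)^{O(1)}$ is chosen so that the hidden constant is large enough to make the $a^{-\Theta(\Delta)}$ factor in Lemma~\ref{lem:costlarge} dominate the $2^{O(n_C)}$ pre-factor; this is precisely the regime in which Lemma~\ref{lem:rcorrect} yields a meaningful bound.

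Linearity is immediate for every fixed $S$: the encoder first computes $y = xG$ (linear by hypothesis on $C$) and then inserts, at positions determined solely by $S$, the all-zero blocks $S[1], \ldots, S[n_C]$. Since the inserted symbols are zero and their positions do not depend on the message, the overall map $x \mapsto z$ is an $\mathbb{F}_q$-linear function from $\mathbb{F}_q^m$ to $\mathbb{F}_q^n$. Efficiency of encoding and decoding (given $S$) is exactly the content of Lemma~\ref{lem:rtime}, which gives linear-time encoding on top of the encoder of $C$ and $O(n^4)$ decoding (from the dynamic-programming matching of Construction~\ref{?to1matching}, whose correctness is Lemma~\ref{lem:DPcorrect}) on top of the decoder of $C$.

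For the quantitative guarantees I would take a union bound over two good events for the choice of $S$. The \emph{correctness event}, by Lemma~\ref{lem:rcorrect}, holds with probability at least $1 - 2^{-\Omega(\kappa_C)}$ and ensures the decoder recovers any message from at most $\kappa = 0.01 \kappa_C$ insdel errors; this is equivalent to minimum edit distance $\geq 2\kappa+1$, giving the $(n,m,2\kappa+1)$-insdel property. The \emph{length event}, by Lemma~\ref{lem:rcodewordlen} with $\gamma = \alpha$, holds with probability at least $1 - e^{-\Omega(\alpha a n_C)}$ and places $n$ in the claimed interval $[(1-\alpha)n_C(a+1)/2+n_C,\,(1+\alpha)n_C(a+1)/2+n_C]$. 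A union bound then yields the advertised failure probability $2^{-\Omega(\kappa_C)} + 2^{-\Omega(\alpha a n_C)}$. Since the theorem is largely a packaging of earlier results, there is no deep obstacle; the main care point is calibrating the hidden constant in $a$ so that both probabilistic bounds are simultaneously non-trivial, and verifying that inserting \emph{fixed} zero blocks preserves linearity — which is why it is essential that $S$ be sampled once and shared by encoder and decoder rather than depending on the message.
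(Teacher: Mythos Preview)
Your proposal is correct and follows essentially the same approach as the paper: instantiate Constructions~\ref{constr:renc} and~\ref{constr:rdec}, invoke Lemma~\ref{lem:rcorrect} for the correctness event, Lemma~\ref{lem:rcodewordlen} (with $\gamma=\alpha$) for the length event, Lemma~\ref{lem:rtime} for efficiency, and argue linearity from the fact that only fixed all-zero blocks are inserted at positions independent of the message. The paper's proof is in fact terser than yours; your added remarks about calibrating $a$ and the union bound are correct refinements of the same argument.
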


\begin{proof}

Construction \ref{constr:renc} \ref{constr:rdec} gives such a code. 
Message length and the unique decoding radius directly follows from the construction. 
Codeword length follows from \ref{lem:rcodewordlen}.
Correctness follows  Lemma \ref{lem:rcorrect}.
Computing efficiency follow from \ref{lem:rtime}.
The success probability follows    \cref{lem:rcorrect} and \cref{lem:rcodewordlen} by a union bound.

Note that the code constructed is linear.  Since by the construction, $C$ is a linear code and the encoding is just inserting some number of zero-symbols to some fixed positions of $y = C(x)$. So taking arbitrary two   codewords $ z_1, z_2 $ corresponding to $y_1, y_2 \in C$, their summation is still a codeword $z$ which corresponds to $y_1 + y_2 \in C$. Also, taking any codeword $z$, multiplying every coordinate with $ \sigma $ in $\mathbb{F}_q$, the result is still a codeword, because the multiplication does not change the zero-symbols and for   $y$, multiplying $\sigma$ with every coordinate results a  codeword in $C$ since $C$ is a linear code.
So the constructed code is linear.
\end{proof}

We state the following theorem  to explicitly describe the error rate and information rate. It immediately follows from Theorem \ref{thm:linearcoderandconstruct}, by taking $\gamma$ to be a constant.
\begin{theorem}
If there is an explicit linear  code for Hamming distance with block length $n_C$, error rate $\delta_C$, information rate $ \gamma_C $, then there is a linear code for  insdel errors, with block length $   O( n_C/\delta^c_C ) $, error rate $\delta = O(\delta^{1+c}_C)$, information rate $ O(\delta^c_C \cdot \gamma_C)  $, for some large enough universal constant $c$.
The encoding is randomized with failure probability at most $ 2^{-\Omega( \delta_C n_C )} $ over the random choice of $S$. 
The decoding is explicit when given $S$, correcting $\delta$ fraction of insdel errors.
\end{theorem}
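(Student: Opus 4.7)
\medskip
\noindent\textbf{Proof Proposal.} The plan is to derive the theorem directly from Theorem~\ref{thm:linearcoderandconstruct} by fixing $\alpha$ to be a small absolute constant (say $\alpha = 1/2$) and then translating the block-length/deletion-budget guarantees into the rate/error-rate language the statement uses. The work is essentially bookkeeping, so no new ideas are needed beyond what Theorem~\ref{thm:linearcoderandconstruct} already provides.

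First I would set $\kappa_C \approx \delta_C n_C / 2$ so that the hypothesized Hamming code corrects a $\delta_C$ fraction of errors, and let $c'$ be the constant exponent hidden in $a = (n_C/\kappa_C)^{O(1)}$; then $a = \Theta(\delta_C^{-c'})$. Applying Theorem~\ref{thm:linearcoderandconstruct} with $\alpha = 1/2$ yields an insdel code of block length
\[
n \;=\; \Theta(n_C \cdot a) \;=\; \Theta(n_C/\delta_C^{c'})
\]
(the additive $n_C$ term is absorbed), that corrects $\kappa = 0.01\,\kappa_C$ insdel errors. Dividing gives the claimed error rate
\[
\delta \;=\; \kappa/n \;=\; \Theta(\delta_C\cdot \delta_C^{c'}) \;=\; \Theta(\delta_C^{1+c'}),
\]
and since the message length $m = \gamma_C n_C$ is preserved, the new information rate is
\[
m/n \;=\; \gamma_C \cdot (n_C/n) \;=\; \Theta(\gamma_C\cdot \delta_C^{c'}).
\]
Taking $c := c'$ matches the three parameter bounds stated in the theorem.

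For the failure probability, Theorem~\ref{thm:linearcoderandconstruct} gives the bound $2^{-\Omega(\kappa_C)} + 2^{-\Omega(\alpha\, a\, n_C)}$; with $\alpha$ fixed to a constant and $a \geq 1$, both terms are dominated by $2^{-\Omega(\kappa_C)} = 2^{-\Omega(\delta_C n_C)}$, which is the required failure probability. Linearity of the code and polynomial-time, explicit encoding/decoding (given $S$) are inherited verbatim from Theorem~\ref{thm:linearcoderandconstruct}, so no further argument is needed. The main (and only) ``obstacle'' is making sure the substitutions of $\delta_C$ and $\gamma_C$ for $\kappa_C/n_C$ and $m/n_C$ are carried out consistently and that the hidden constants are handled uniformly; there is no combinatorial or probabilistic step left to do.
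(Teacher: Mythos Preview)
Your proposal is correct and is essentially identical to the paper's own argument: the paper states that the theorem ``immediately follows from Theorem~\ref{thm:linearcoderandconstruct}, by taking $\alpha$ to be a constant,'' and your writeup simply carries out that parameter substitution explicitly. The bookkeeping you do (expressing $\kappa_C$, $a$, $n$, $\kappa$, and the failure probability in terms of $\delta_C$, $\gamma_C$, $n_C$) is exactly what is needed, and nothing more.
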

Notice that this directly implies Theorem \ref{thm:linear-const-intro}, since there are various explicit constructions of asymptotically good linear code for Hamming distance.

\section{Derandomization and explicit construction of linear insdel codes}
\label{sec:derand}
In this section, we derandomize the encoding of the previous section.
The key idea is to develop a special string  $s$ to replace the random sequence $S$.

\subsection{Synchronization Separator Sequence}

We   use the following terminologies about self-matchings for strings consists of $0$s and question marks.

For a sequence $ S[1] \circ ? \circ S[2] \circ ? \circ \cdots \circ S[n ] \circ ? $, a  $?$-to-$?$ self matching $w$ is a monotone matching between two substrings of this sequence s.t. each match $(i, j)$ matches the $i$-th question mark to the $j$-th question mark. 
\begin{definition}
A match  $(i, j)$ in such a matching is called undesired if 
\begin{itemize}
\item $i\neq j$ (i.e. bad);
\item $  p_i-p_{i'} = p_j - p_{j'}$, when $(i, j)$ is not the first match.
Here $(i', j')$ is the immediate previous match of $(i, j)$ in the matching. 
\end{itemize}

\end{definition}

Next we show the string defined as follows can be used to replace the random sequence $S[1], S[2], \ldots, S[n]$.

\begin{definition}[$(\Lambda, a)$ synchronization separator sequence]

$s[1], s[2], \ldots, s[n]$ is called a $(\Lambda, a)$ synchronization separator sequence, if for any self-matching of $ s[1] \circ ? \circ s[2] \circ ? \circ \cdots \circ s[n ] \circ ?   $, the number of undesired matches is at most $\Lambda$.

Here each $s[i], i\in [n]$ is an all $0$ string with length in $\{ 1,\ldots,a \}$.
\end{definition}

We show that synchronization separator sequences can be constructed explicitly.

\begin{lemma}
\label{lem:syncSepSequenExplicit}

There is an explicit construction of $ (\Lambda, a) $ synchronization separator sequences of length $n$, for every    $n\in \mathbb{N}$, $\Lambda \in \mathbb{N}$, constant $c \ge 3$,    $a =  (\frac{en}{\Lambda})^{c} $ being a power of $2$.

\end{lemma}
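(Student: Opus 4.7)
The plan is to sample the length sequence $(|s[1]|, \ldots, |s[n]|) \in \{1, \ldots, a\}^n$ using an $\eps$-almost $k$-wise independent generator (Theorem~\ref{almostkwiseg}) producing $n \log_2 a$ output bits, with each block of $\log_2 a$ bits encoding one $|s[i]|$. For appropriately chosen $k = \Theta((n/\Lambda) \log n)$ and $\eps = 1/\poly(n)$, the seed length will be $O(\log n)$, so one can enumerate every seed in polynomial time and, for each resulting sequence, verify via a dynamic program (along the lines of Construction~\ref{?to1matching}) whether it is a valid $(\Lambda, a)$ synchronization separator sequence, returning the first one that succeeds.

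The heart of the analysis is a reduction of the global property (no self-matching has more than $\Lambda$ undesired matches) to a collection of local checks on pairs of short substrings. Following the matching-cutting strategy of~\cite{KZXK18}, suppose a pair of substrings admits a matching whose undesired-match ratio $\gamma$ (the number of undesired matches divided by the combined length of the two substrings) is at least $\Lambda/(2n)$. Splitting both substrings at their midpoints partitions the matching into two sub-matchings; at least one of the two sides contains $\ge \tfrac{1}{2}$ of the undesired matches while having strictly smaller total length, so its ratio remains $\ge \gamma$. Iterating on the ratio-preserving side until the absolute count of undesired matches drops to $\Theta(\log n / \log(n/\Lambda))$ terminates with a pair of total length $L = \Theta(\gamma^{-1} \log n / \log(n/\Lambda)) = \Theta\bigl(\tfrac{n}{\Lambda} \cdot \tfrac{\log n}{\log(n/\Lambda)}\bigr)$. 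Thus any bad global matching witnesses a short pair of substrings (of combined length at most $L$) admitting a short self-matching of ratio at least $\Lambda/(2n)$, and it suffices to rule out all such short witnesses.

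For each fixed choice of (i) two substring positions, (ii) a monotone $?$-to-$?$ self-matching between them, and (iii) a designated subset of its matches declared undesired, the event ``all designated matches are actually undesired'' depends only on the $O(L)$ length values indexed by the two substrings. Processing the designated undesired matches in the natural left-to-right order and conditioning on the already-exposed length values, each fresh constraint $p_i - p_{i'} = p_j - p_{j'}$ pins the next length variable to one specific value in $\{1, \ldots, a\}$. Under an $\eps$-almost $k$-wise independent distribution with $k = \Theta(L \log a)$ covering the relevant $O(L \log a)$ bits, this contributes a factor of at most $1/a + \eps$ per undesired match. A union bound over the $\poly(n)$ choices of (i), (ii), (iii) shows that for $a \ge (n/\Lambda)^c$ with $c$ a sufficiently large constant, the failure probability is strictly less than $1$, so some seed yields a valid sequence. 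Theorem~\ref{almostkwiseg} delivers such a generator with seed length $O(\log(k \log n / \eps)) = O(\log n)$, giving a sample space of polynomial size.

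The main obstacle is the careful bookkeeping in the ``one fresh constraint per undesired match'' step: one must order the designated matches so that each constraint genuinely pins down a previously unconstrained length, and must track the accumulated $\eps$-slack so that the final union bound (over substring positions, monotone matchings within them, and choices of $\Theta(\log n / \log(n/\Lambda))$ undesired matches) is dominated by the probabilistic gain $a^{-\Theta(\log n / \log(n/\Lambda))}$. Once the parameters $a, L, k, \eps$ are balanced in a consistent way, the polynomial-size seed enumeration together with the $\poly(n)$-time DP verifier yields the claimed deterministic polynomial-time explicit construction.
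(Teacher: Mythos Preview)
Your proposal is correct and follows essentially the same route as the paper: sample the length sequence via an $\eps$-almost $k$-wise independent generator with $O(\log n)$-bit seed, reduce the global $(\Lambda,a)$-property to local checks on short substring pairs via the matching-cutting argument of~\cite{KZXK18}, bound each local bad event by roughly $a^{-\Lambda'}$ using the one-fresh-length-variable-per-undesired-match observation, and union-bound over the $\poly(n)$ local configurations before enumerating seeds. One small correction: the cut in the matching-cutting recursion should be made at a \emph{match} (specifically, the one at which half of the undesired matches have accumulated, as in Lemma~\ref{lem:selfMatchingCut}) rather than at the geometric midpoints of the two substrings, since cutting at arbitrary string positions need not split a monotone matching cleanly and the averaging argument you sketch requires the two halves to have well-defined undesired-match counts summing to the original total.
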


To show the lemma, we give the following construction.
\begin{construction}
\label{constr:syncSepSequenExplicit}

\begin{enumerate}

\item 
Let $ g:\{0,1\}^{d_g} \rightarrow \{0,1\}^{n_g = n\log a}$ be an $ \varepsilon_g$ almost $ n_g $-wise independence generator, with $d_g = O( \log \frac{n_g \log n_g} {\varepsilon_g^2} )$, $\varepsilon_g = a^{-\Delta'}$, from Theorem \ref{almostkwiseg}, where $\Delta' =  12  \frac{\log n }{ \log \frac{n}{\Lambda} }$;

\item For every $ r\in \{0,1\}^{d_g} $, 

\begin{enumerate}
\item compute $   g(r)$, and then partition it into a sequence of length $ \log a $ blocks, where the $i$-th block, $i\in [n]$, corresponds to a binary number $a_i \in \{ 1,\ldots,a \}$;

\item
Generate sequence $s = s[1] \circ ? \circ s[2] \circ ? \circ \cdots \circ s[n ] \circ ?$ s.t $ \forall i\in [n], s[i] = \underbrace{0\circ 0 \circ \cdots \circ 0}_{a_i}$;

\item For some large enough constant $c$, for every pair of substrings $u, v$ of $s$,  s.t.  $|u| + |v| \leq   \frac{ 12 n}{\Lambda} \frac{\log n }{ \log \frac{n}{\Lambda} } $,  compute a matching   which only consists of bad matches, having size $ \Delta'  $ and the maximum number of undesired matches; 

\item Let   $\Lambda_0$ be the maximum number of undesired matches, among all the matchings computed in the above step;

\item If  $\Lambda_0 < 6 \frac{\log n }{ \log \frac{n}{\Lambda} }  $, then halt and output $s$.

\end{enumerate}

\item Abort.

\end{enumerate}

\end{construction}
Now we show a property of undesired matches and then use it to prove \cref{lem:syncSepSequenExplicit}.
\begin{lemma} 
\label{lem:selfMatchingCut}
If there is a self-matching $w $ between two substrings $u, v$ of $s$, having $\Lambda$ undesired matches, then for every $i \in \mathbb{N}$, there is a self-matching $w'$, having at most $ \Lambda' = \Lambda/2^{i} $ undesired matches, between two substrings $u', v'$, where $u'$ is a substring of $u$, $v'$ is a substring of $v$, and $\frac{\Lambda'}{ |u'| + |v'| } \geq \frac{\Lambda}{|u|+|v|}$. 
\end{lemma}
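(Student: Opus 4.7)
My plan is to prove this by induction on $i$, with the base case $i=0$ immediate (take $u'=u$, $v'=v$, $w'=w$, yielding $\Lambda$ undesired matches and length exactly $|u|+|v|$). For the inductive step it suffices to establish a single \emph{halving step}: given any self-matching $\bar w$ between substrings $\bar u,\bar v$ of $s$ with $\bar\Lambda$ undesired matches and total length $\bar L = |\bar u|+|\bar v|$, produce a sub-matching $\bar w'$ between substrings $\bar u'\subseteq \bar u$, $\bar v'\subseteq \bar v$ with at most $\bar\Lambda/2$ undesired matches and $|\bar u'|+|\bar v'|\le \bar L/2$. Iterating this $i$ times on $w$ yields a sub-matching with undesired count at most $\Lambda/2^i$ and length at most $(|u|+|v|)/2^i$, and the length bound is exactly the density condition $\frac{\Lambda'}{|u'|+|v'|}\ge \frac{\Lambda}{|u|+|v|}$ rewritten with $\Lambda'=\Lambda/2^i$.

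To prove the halving step, list the matches of $\bar w$ as $(i_1,j_1),\ldots,(i_k,j_k)$ in monotone order and set $U_s\in\{0,1\}$ to indicate whether match $s$ is undesired in $\bar w$, so $\bar\Lambda = \sum_s U_s$. If $\bar\Lambda=0$ I take the empty sub-matching. Otherwise let $m^*$ be the smallest index for which $\sum_{s=1}^{m^*+1}U_s > \bar\Lambda/2$; by minimality $U_{m^*+1}=1$ and $\sum_{s\le m^*}U_s = \lfloor\bar\Lambda/2\rfloor$. I split $\bar w$ at match $m^*+1$: the prefix keeps matches $1,\ldots,m^*$ viewed as a matching between $\bar u[1,\,i_{m^*+1}-1]$ and $\bar v[1,\,j_{m^*+1}-1]$, and the suffix keeps matches $m^*+1,\ldots,k$ viewed as a matching between $\bar u[i_{m^*+1},|\bar u|]$ and $\bar v[j_{m^*+1},|\bar v|]$. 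The two pairs of substrings have total lengths summing exactly to $\bar L$, so at least one half has total length at most $\bar L/2$.

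It remains to control the undesired counts of the two halves, and this rests on one crucial observation: the first match $(i,j)$ of \emph{any} self-matching can be undesired only if $i\ne j$ and $p_i = p_j$, but these two conditions are mutually exclusive since distinct question marks of $s$ occupy distinct positions. Hence the first match of any sub-matching is \emph{never} undesired. Therefore in the prefix each match retains its undesired status from $\bar w$ (its previous match is unchanged), giving prefix undesired count $\lfloor\bar\Lambda/2\rfloor\le\bar\Lambda/2$. In the suffix, matches $s>m^*+1$ retain their status, but the first match $(i_{m^*+1},j_{m^*+1})$ loses its undesired status despite $U_{m^*+1}=1$, giving suffix undesired count $\bar\Lambda-\lfloor\bar\Lambda/2\rfloor-1=\lceil\bar\Lambda/2\rceil-1\le\bar\Lambda/2$. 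Selecting the half of length at most $\bar L/2$ then completes the halving step. The key subtlety I anticipate is exactly this loss-of-status at the new suffix boundary: without it, splitting at the first match that pushes the prefix count past $\bar\Lambda/2$ would leave the suffix with $\lceil\bar\Lambda/2\rceil$ undesired matches --- too many when $\bar\Lambda$ is odd --- so it is essential that the status change works \emph{in our favor} to enable a clean halving.
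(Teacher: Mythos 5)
Your proof is correct and takes essentially the same route as the paper: bisect the matching at the match where half of the undesired matches have accumulated, keep one half, and iterate $i$ times. The only differences are cosmetic --- you keep the half of smaller total length whereas the paper keeps the half of larger undesired-to-length ratio (equivalent in effect, since both halves carry roughly $\Lambda/2$ undesired matches, and your length criterion directly gives the stated ratio condition because $\Lambda'$ there denotes the target bound $\Lambda/2^{i}$) --- and you explicitly account for the boundary match losing its undesired status after the cut, a point the paper's proof elides but which is needed to keep the count at $\Lambda/2^{i}$ rather than letting ceilings accumulate.
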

We remark that the proof strategy is similar to that of \cite{KZXK18}. 
The difference is that in \cite{KZXK18} the object of interest was the matching size. But here we care about the number of undesired matches in the matching. We cannot just pick out these undesired matches to form a new matching to consider, because the ``undesired'' property depends on other matches  which may not be an undesired matches.
\begin{proof}[Proof of \cref{lem:selfMatchingCut}]
We repeatedly use the following cutting technique on the current matching $w$. We denote ratio $ \gamma = \frac{\Lambda}{l} $ and length $l = |u|+|v|$.

We find the match $(i, j)$ s.t. the number of undesired matches from the first match to $(i, j)$ (include), is $ \Lambda_1 = \lfloor \Lambda/2 \rfloor$.  Thus the number of undesired matches from $(i,j)$ (not include) to the end of $w$ is   $ \Lambda_2 = \Lambda - \lfloor \Lambda/2 \rfloor$. Here $(i,j)$ also divides $w$ into two halves. The first half matching $w_1$ is from the first match to $(i, j)$ (include). The second half from $(i, j)$ (not include) to the last match.
Note that $u, v$ are also divided by $(i, j)$. Let $l_1$ be the length of the summation of the lengths of first halves of $u, v$, including the symbols that $i ,j $ pointed to. For the second halves it's $l_2$, not including the symbols that $i, j$ pointed to. We compare $ \Lambda_1/l_1 $ and $\Lambda_2/l_2$. Pick the larger one, and then consider the corresponding halves of $w$ and $u, v$.

The procedure ends when we get a pair of  strings $u, v$ and a  matching $w$ between them s.t. the number of undesired matches is at most $\Lambda'$. 

Notice that $ \Lambda_1 + \Lambda_2 = \Lambda $, $ l_1 + l_2 = |u| + |v| $. Let $\gamma_1 = \Lambda_1/l_1$, $\gamma_2 = \Lambda_2/l_2$. Then one of $\gamma_1$ and $\gamma_2$ has to be at least $\gamma$. Otherwise, $\Lambda/l = \frac{\Lambda_1+\Lambda_2}{ l_1 + l_2 }$ has to be smaller than $\gamma$, causing a contradiction. So the one we pick has ratio at least $\gamma$. 
Also the number of undesired matches is divided into two halves each time we apply a cut. 

As a result, after doing cutting for $i$ times, we get the desired matching and substrings.
\end{proof}
Next we prove Lemma \ref{lem:syncSepSequenExplicit}.
\begin{proof}[Proof of \ref{lem:syncSepSequenExplicit}]
We use Construction \ref{constr:syncSepSequenExplicit}.

We show that there is a seed $r\in \{0,1\}^{d_g}$ which can let Construction \ref{constr:syncSepSequenExplicit} output the $s = s(r)$, which is a synchronization separator sequence we want.

The proof strategy is to consider a uniform random seed $r$ and show that with some probability  the construction can output a desired $s$.

If there is a   self-matching $w$ having $\Lambda$ undesired matches,
then by Lemma \ref{lem:selfMatchingCut}, there is a  self-matching $w' $ between some $u'$ and $v'$ which are substrings of $s$, s.t. there are $\Lambda' =  \Lambda/2^i = 6 \frac{\log n}{\log \frac{n}{\Lambda}} $ undesired matches in $w'$, for some large enough $i$, where $|u'|+|v'| = l' \leq \frac{12 n}{\Lambda} \frac{\log n }{ \log \frac{n}{\Lambda} } $. 
Also we can think of $w'$ as only consisting of bad matches since good matches cannot be undesired by definition and for any undesired match, its immediate previous match has to be a bad match due to $s[k] > 0, \forall k$.
Also we only need to consider $w'$ of size $\Delta' = 2\Lambda'$ since there is always a size $w'$ matching which can contain the $\Lambda'$ undesired bad matches.
 
On the other hand, for arbitrary $u , v$ in the algorithm, we first consider that $s[1],\ldots, s[n]$ are at uniform random.
The probability that a undesired match $(i, j)$ in $w'$ happens, conditioned on fixing $s[1], \ldots, s\left[\max\left(i , j\right) - 1\right]$ is at most $1/a$,  since we need $s\left[\max\left(i, j\right)\right]$ to take a specific value to make $p_i-p_{i'} = p_j - p_{j'}$.
So the probability that a specific matching   with $ \Lambda' $ undesired matches, happens with probability $\le a^{-\Lambda'}  $. 

Next we consider using $\varepsilon_g$ almost $n$-wise independence.  Each sample point which happens with probability $\rho \ge a^{-\Delta'}$ in the uniform random case instead happens with probability at most $\rho + \varepsilon_g \leq 2\rho$ when $\varepsilon_g  = a^{-\Delta'}$.
Hence the probability that a specific matching with $ \Lambda' $ undesired matches, happens with probability at most  $2a^{-\Lambda'}$.

There are at most $n^4$ such pair of $u, v$. 
For each pair
the number of matchings $w'$ between them is  at most $ {l' \choose \Delta'} = \left( \frac{e l'}{\Delta'} \right)^{\Delta'} \le \left( \frac{e n }{ \Lambda } \right)^{\Delta'}$ since we only consider $w'$ with size $ \Delta'$.
So by a union bound, with probability $ 1-  n^4 \left( \frac{e n }{ \Lambda } \right)^{\Delta'} 2 a^{-\Lambda'}$, there is no size $\Delta'$  matching between any pair of substrings $u,v $ with $l'  $  having $\Lambda'$ undesired matches. 
When this happens, there is no self-matching with $\Lambda$ undesired matches for the   string $s $.
Since $a = \left(\frac{e n}{\Lambda}\right)^c, c\ge 3$, 
the probability is $\ge 1- 2n^4 \left( \frac{en}{\Lambda} \right)^{-\Lambda'} \ge 1 - 1/n$.

We can find a seed $r$ which gives us such a string $s(r)$, in polynomial time. This is because  $|r| = O(\log n)$ and in the above computation  the number of testings is $\poly(n)$. The testings are also efficient.  Thus an exhaustive search for a $(\Lambda, a)$ synchronization separator sequence can be done in polynomial time.   
\end{proof}

\subsection{Deterministic Encoding}

\begin{theorem}

\label{thm:linearcodedetermconstruct}

If there is an explicit      linear $q$-ary $(n_C, m, 2\kappa_C + 1)$ code for Hamming distance, then there is an explicit       linear  $q$-ary$(n, m, 2\kappa+1)$ code for edit distance, where $ n = (a+1)n_C$,  for   $ a =(  \frac{5e n_C}{\kappa_C})^{ 3}$, $ \kappa = 0.01 \kappa_C$.
\end{theorem}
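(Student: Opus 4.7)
The plan is to instantiate Constructions~\ref{constr:renc}--\ref{constr:rdec} verbatim, but with the random sequence $S$ replaced by an explicit $(\Lambda, a)$ synchronization separator sequence $s$ produced by Lemma~\ref{lem:syncSepSequenExplicit}, choosing $\Lambda = c \kappa_C$ for a sufficiently small constant $c$ (say $c = 1/100$). Lemma~\ref{lem:syncSepSequenExplicit} then yields such an $s$ of length $n_C$ with block sizes in $\{1, \ldots, a\}$ for $a = (n_C/\Lambda)^{\Theta(1)} = (n_C/\kappa_C)^{\Theta(1)}$, and the total block length is $n \leq (a+1)n_C = O(a n_C)$ as required. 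Because the encoding merely interleaves fixed all-zero strings of fixed lengths with the coordinates of $y = C(x)$, each coordinate of the codeword is either the constant $0$ or an $\F_q$-linear function of $x$, so the overall encoder is $\F_q$-linear.

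For correctness I will keep the $?$-to-non-zero dynamic program of Construction~\ref{?to1matching} as the decoder. Virtually the entire analysis in Section~\ref{sec:monte-carlo} goes through with $S$ replaced by $s$: Lemma~\ref{lem:DPcorrect} (optimality of the DP), Lemma~\ref{lem:notmatchedones} (at most $3\kappa$ unmatched non-zeros in $z'$), Lemma~\ref{lem:badmatchcount} ($\Delta(w) > (\kappa_C - 6\kappa)/2$ whenever the decoder fails), and Lemma~\ref{lem:costwlarge} ($\cost(w) \geq \cost(\tilde{w}) - \kappa$) are all deterministic statements that never touch the randomness of $S$. The only randomized step I must replace is Lemma~\ref{lem:costlarge}, and the central observation is that this replacement is essentially a restatement of the synchronization separator property.

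Concretely, a bad match $(i,j)$ in the induced self-matching $\tilde{w}$ fails to contribute to $\cost(\tilde w)$ precisely when $p_i - p_{i'} = p_j - p_{j'}$, which is exactly the definition of an undesired match. Since $s$ is a $(\Lambda,a)$ synchronization separator sequence, at most $\Lambda$ matches of $\tilde{w}$ can be undesired, so $\cost(\tilde{w}) \geq \Delta(\tilde{w}) - \Lambda = \Delta(w) - \Lambda$ holds deterministically. Combining with Lemmas~\ref{lem:costwlarge} and~\ref{lem:badmatchcount}, any decoding failure would imply
\[ \cost(w) \;\geq\; \Delta(w) - \Lambda - \kappa \;>\; \tfrac{\kappa_C - 6\kappa}{2} - \Lambda - \kappa \;>\; 2\kappa \]
with $\kappa = 0.01 \kappa_C$ and $\Lambda = \kappa_C/100$, hence $\obj(w) = |w| - \cost(w) < \tau - 2\kappa$. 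But by Lemma~\ref{lem:DPcorrect} and the ideal-matching bound $\obj(w^*) \geq \tau - 2\kappa$ (established inside the proof of Lemma~\ref{lem:notmatchedones}), we have $\obj(w) \geq \obj(w^*) \geq \tau - 2\kappa$, a contradiction. Thus decoding succeeds against any $\kappa$ insdels.

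The only real obstacle is balancing the choice of $\Lambda$: it must be a small enough constant times $\kappa_C$ for the chain above to yield a contradiction, yet the dependence $a = (n_C/\Lambda)^{\Theta(1)}$ in Lemma~\ref{lem:syncSepSequenExplicit} is gentle enough that any $\Lambda = \Theta(\kappa_C)$ gives the required $a = (n_C/\kappa_C)^{\Theta(1)}$. Encoding runs in linear time in the encoding time of $C$, and decoding runs in $\poly(n)$ via the $O(n^4)$ DP plus the decoder of $C$. Combined with the explicitness of $s$ from Lemma~\ref{lem:syncSepSequenExplicit} and of $C$ by hypothesis, this yields the explicit linear insdel code claimed in the theorem.
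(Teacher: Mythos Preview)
Your proposal is correct and essentially identical to the paper's own proof: both replace the random $S$ by an explicit $(\Lambda,a)$ synchronization separator sequence from Lemma~\ref{lem:syncSepSequenExplicit}, reuse the deterministic Lemmas~\ref{lem:DPcorrect}, \ref{lem:notmatchedones}, \ref{lem:badmatchcount}, \ref{lem:costwlarge} verbatim, and substitute the separator property (at most $\Lambda$ undesired matches) for the randomized Lemma~\ref{lem:costlarge} to lower-bound $\cost(\tilde w)\ge \Delta-\Lambda$, then derive the same contradiction with the ideal matching. The only cosmetic differences are your choice $\Lambda=\kappa_C/100$ versus the paper's $\Lambda'=0.2\kappa_C$, and that you should make explicit the bound $|w|\le \tau+\kappa$ when passing from $\cost(w)>2\kappa$ to $\obj(w)<\tau-2\kappa$ (your stronger estimate $\cost(w)>0.45\kappa_C$ more than suffices for this).
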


\begin{proof}

We replace the random sequence $S[1], S[2], \ldots, S[n_C]$ in Construction \ref{constr:renc} by a $(\Lambda', a)$ synchronization separator sequence from Lemma \ref{lem:syncSepSequenExplicit}.
Here we let $\Lambda' =0.2 \kappa_C  $, $ a =  \left( \frac{e n_C}{\Lambda'} \right)^{3}$.

Suppose  $\HD(y, y') > \kappa_C$.
By Lemma \ref{lem:badmatchcount} 
   the number of   bad matches in $w$ is at least $ (\kappa_C - 6\kappa)/2 -\kappa \geq 0.4   \kappa_C  $. 
Recall that $w$ is the matching returned by the matching algorithm during the decoding.

Consider the induced self matching $\tilde{w}$ on $z_?$ from $w$.
There are also  $\geq 0.4 \kappa_C   $   bad matches  in $\tilde{w}$ by the definition of bad matches. 
As $s$ is a $(\Lambda'= 0.2\kappa_C, a)$ synchronization separator sequence, there are only at most $ \Lambda'  $ matches   are undesired.  
The remaining ones, has number at least $ 0.4 \kappa_C - 0.2 \kappa_C  = 0.2\kappa_C$, are desired which means each of them will contribute one to $\cost(\tilde{w})$.
Hence by Lemma \ref{lem:costwlarge}, $\cost(w) = \cost(\tilde{w}) \geq 0.2\kappa_C$.

As a result,  $\obj(w)$ should be at most $ \tau + \kappa - 0.2 \kappa_C = \tau - 0.19\kappa_C$, where $\tau$ is the number of non-zero-symbols in $y$. This is because there are at most $\kappa$ inserted non-zero-symbols contributed to the number of matches and the cost function value is at least $0.2\kappa_C$. 

However, the matching algorithm will return a matching $w$ with maximum $\obj$  by Lemma \ref{lem:DPcorrect}.
Let $w^*$ be the natural ideal matching which matches every non-deleted non-zero-symbols to itself.
Then the target function value of $w$ has to be at least that of $w^*$, which is $|w^*| - \cost(w^*) \geq \tau -\kappa - \kappa = \tau - 2\kappa = \tau-0.02\kappa_C$. The inequality is because there can be at most $\kappa$ non-zero-symbols being deleted and thus $|w^*| \geq \tau - \kappa $; Also each insdel can increase the cost function by at most one, so $\kappa$ insdels make the cost function of $w^*$ to be at most $ \kappa $.

Therefore we reach a contradiction.
Thus $\HD(y, y') \leq \kappa_C$. 
So the decoding can get the correct $x$ by the definition of code $C$.

The efficiency of encoding and decoding follows from Lemma \ref{lem:rtime} and Lemma \ref{lem:syncSepSequenExplicit}.
The parameters directly follow  from the construction.
The linearity follows from the same argument as that of Theorem \ref{thm:linearcoderandconstruct}.
\end{proof}

We state the following theorem  to explicitly describe the error rate and information rate. It immediately follows from Theorem \ref{thm:linearcodedetermconstruct}.
\begin{theorem}

If there is an explicit  linear  code for Hamming distance with block length $n_C$, error rate $\delta_C$, information rate $ \gamma_C $, then there is an explicit   linear code for insdel errors, with block length $   O( n_C/\delta^3_C ) $, insdel error rate $\delta = O(\delta^{4}_C)$, information rate $ O(\delta^{3}_C \cdot \gamma_C)$.
\end{theorem}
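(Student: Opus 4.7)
The plan is to derive the statement as a direct corollary of Theorem~\ref{thm:linearcodedetermconstruct} by a straightforward translation between the absolute parameters $(n_C, m, \kappa_C)$ used there and the rate parameters $(\delta_C, \gamma_C)$ used in the current statement. No new coding-theoretic ideas are needed; only careful bookkeeping of the exponent hidden in the $\Theta(1)$ that appears in the expression $a = (n_C/\kappa_C)^{\Theta(1)}$.

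First, I would set up the correspondence. Given the hypothesized linear Hamming code with block length $n_C$, error rate $\delta_C$, and information rate $\gamma_C$, it corrects $\kappa_C = \delta_C n_C$ Hamming errors and has message length $m = \gamma_C n_C$. Applying Theorem~\ref{thm:linearcodedetermconstruct} to this code produces an explicit linear insdel code of the same message length $m$, insdel-correction budget $\kappa = 0.01 \kappa_C$, and block length $n = O(a n_C)$, where $a = (n_C/\kappa_C)^{\Theta(1)} = \delta_C^{-\Theta(1)}$. Let $c$ be exactly the constant in this $\Theta(1)$, so that $a = \Theta(\delta_C^{-c})$.

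Next I would compute the three claimed parameters. For the block length, $n = O(a n_C) = O(n_C/\delta_C^c)$, as required. For the insdel error rate, $\delta = \kappa/n = 0.01 \delta_C n_C / O(n_C/\delta_C^c) = O(\delta_C^{1+c})$. For the information rate, $m/n = \gamma_C n_C / O(n_C/\delta_C^c) = O(\gamma_C \cdot \delta_C^c)$. Linearity and explicitness are inherited directly from Theorem~\ref{thm:linearcodedetermconstruct}.

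There is no real obstacle: the only step that requires any thought is making sure the constant $c$ is chosen uniformly to absorb the $\Theta(1)$ exponents in both the $n_C/\kappa_C$ ratio and in the resulting rate expressions, and that the same $c$ works for all three parameter claims simultaneously. Since Theorem~\ref{thm:linearcodedetermconstruct} produces a single underlying constant (the one hidden in $(n_C/\kappa_C)^{\Theta(1)}$), a single choice of $c$ suffices. The proof is then essentially a one-line invocation.
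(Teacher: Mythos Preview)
Your proposal is correct and matches the paper's approach exactly: the paper simply states that the theorem ``immediately follows from Theorem~\ref{thm:linearcodedetermconstruct}'' and gives no further detail, so your parameter translation is precisely what is intended. The only minor caveat is that dividing by an $O(\cdot)$ quantity formally yields an $\Omega(\cdot)$ bound, but the paper itself uses $O(\cdot)$ loosely here to mean ``of order $\Theta(\cdot)$'' (as the remark following the theorem confirms), so your computation is in line with the paper's conventions.
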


We remark that if we pick a very good code $C$, with   error rate $\delta_C$, information rate $ \gamma_C  = 1- \Theta(\delta_C)$.
then if $\delta = \kappa/n$ is the error rate for the constructed  code,
then the information rate for it is  $  c_1\delta^{\frac{3}{4}} (1-c_2 \delta^{\frac{1}{4}}) $ for some constant $c, c_1, c_2>1$.
 
By using an explicit asymptotically good code with linear time encoding and decoding  \cite{spielman96}, we can immediately get the following corollary. The overall decoding time is due to Lemma \ref{lem:rtime}.
\begin{corollary}\label{cor:explicit}
There exists an explicit asymptotically good binary  linear code for edit distance, with linear encoding time and decoding time $O(n^4)$, and a polynomial time pre-computation for the encoding matrix.
\end{corollary}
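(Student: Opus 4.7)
The plan is to invoke Theorem~\ref{thm:linearcodedetermconstruct} with the Spielman expander code~\cite{spielman96} as the base $C$ and then check that each of the three claimed properties (asymptotic goodness, linear encoding time, $O(n^4)$ decoding time, and polynomial pre-computation) is preserved.

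First, I would take $C$ to be Spielman's binary linear code of block length $n_C$, constant rate $\gamma_C$, constant relative distance $2\delta_C$, with an encoder and a Hamming-error decoder that both run in time $O(n_C)$. Plugging this $C$ into the explicit construction of Theorem~\ref{thm:linearcodedetermconstruct} yields an explicit binary linear insdel code of block length $n = O(a n_C)$ where $a = (n_C/\kappa_C)^{\Theta(1)} = \Theta(1)$ since $\kappa_C = \Theta(n_C)$. Its message length is $m = \gamma_C n_C = \Theta(n)$ and its insdel-correction radius is $\kappa = 0.01\kappa_C = \Theta(n)$. So both the rate $m/n$ and the fraction $\kappa/n$ of correctable insertions/deletions are bounded away from zero, i.e.\ the code is asymptotically good.

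Next I would verify the claimed running times. The encoding simply runs the $O(n_C)$-time encoder of $C$ and then inserts the fixed synchronization separator sequence (a string of $0$'s of prescribed lengths) between the symbols, which is $O(n)$ total work; hence the encoder is linear time, as recorded in Lemma~\ref{lem:rtime}. For decoding, Lemma~\ref{lem:rtime} bounds the running time by $O(n^4)$ (from the dynamic-programming $?$-to-non-zero matching of Construction~\ref{?to1matching}) plus the decoding time of $C$, which is $O(n_C) = O(n)$ and therefore absorbed into $O(n^4)$. Finally, the pre-computation consists of (i) building the generator matrix of $C$ (polynomial time for Spielman's construction) and (ii) computing an explicit $(\Lambda,a)$ synchronization separator sequence of length $n_C$ with $\Lambda = 0.2\kappa_C$ and $a = \Theta(1)$. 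Lemma~\ref{lem:syncSepSequenExplicit} gives such a sequence in polynomial time (by exhaustive search over the $\poly(n)$-size seed space of an $\varepsilon$-almost $k$-wise independent generator), so the overall pre-computation is polynomial.

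The argument is essentially a direct specialization of the theorems already proved, so there is no real obstacle; the only thing worth being careful about is matching the parameter regimes, namely checking that with a constant-rate, constant-distance base code $C$ we indeed get $a = O(1)$ (so that $n = O(n_C)$ and the encoding stays linear), and that the Spielman decoder's $O(n)$ running time is indeed dominated by the $O(n^4)$ dynamic program so that the overall decoding bound in the statement is correctly reported.
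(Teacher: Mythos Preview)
Your proposal is correct and follows exactly the paper's approach: instantiate Theorem~\ref{thm:linearcodedetermconstruct} with Spielman's linear-time encodable/decodable asymptotically good code~\cite{spielman96}, observe that $a=\Theta(1)$ so the resulting code is asymptotically good, and read off the running times from Lemma~\ref{lem:rtime} and the polynomial pre-computation from Lemma~\ref{lem:syncSepSequenExplicit}. Your write-up is in fact more detailed than the paper's one-line justification, but the content is identical.
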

We remark that the pre-computation is mainly used to compute the synchronization separator sequence.

\section{Explicit Affine Codes}
\label{sec:affine}
In this section, we give an explicit construction of binary affine codes for insdel errors.
We first recall the code construction using synchronization strings \cite{haeupler2017synchronization}.
\begin{theorem}[\cite{haeupler2017synchronization}]
\label{thm:syncstringInsdelCode}
Assume $s \in \Sigma^{n}$ is an $\eta$-synchronization string.
$C $ is an $(n, m, d = 2\epsilon n +1)$ linear code for Hamming distance with efficient decoding.

Then attaching $s$ to every codeword of $C$, symbol by symbol, gives a code which can efficiently correct $\frac{2\epsilon n}{1-\eta}$ insertion/deletion errors.

\end{theorem}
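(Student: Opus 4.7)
The plan is to use the synchronization string $s$ to translate any pattern of insertion/deletion errors into a pattern of errors/erasures on the underlying Hamming code $C$, and then invoke the efficient decoder of $C$. Recall that in the constructed code each position of a codeword $c \in C$ is enlarged to a pair $(c[i], s[i])$, so a received word is a subsequence of such pairs interspersed with inserted pairs.

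First, I would set up an index-recovery step using only the synchronization components. Given the received word after at most $k$ insdels, I would run the standard synchronization-string decoder of \cite{haeupler2017synchronization}: it reads the sequence of received $s$-components and assigns to each received position either an index in $[n]$ or the special erasure symbol $\bot$. The defining property of an $\eta$-synchronization string (namely, that every pair of disjoint substrings has longest common subsequence bounded by $(1-\eta)$ times their combined length) guarantees that, in the guess of indices output by this procedure, the number of positions of the original codeword that end up either erased or assigned a wrong symbol is at most $\tfrac{k}{1-\eta}$ for some constant of proportionality implicit in the definition of $\eta$.

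Next, I would lift this index assignment to a length-$n$ vector $\tilde{c} \in (\Sigma_C \cup \{\bot\})^n$: for each received position tagged with index $i \in [n]$, write its $C$-component into coordinate $i$ of $\tilde{c}$ (using $\bot$ wherever no received symbol was assigned, and breaking ties arbitrarily when collisions occur). By the previous step, the Hamming-type distance between $\tilde{c}$ and the true codeword $c$, counting erasures plus substitutions, is at most $\tfrac{2 \epsilon n}{1-\eta}$ under the theorem's hypothesis $k = \tfrac{2 \epsilon n}{1-\eta}$. Because $C$ has minimum distance $d = 2\epsilon n + 1$, its efficient (error/erasure) decoder recovers $c$, and therefore the original message, from any such $\tilde{c}$.

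The main obstacle, and the heart of the synchronization-string framework, is showing that the index-recovery step of the first paragraph is both efficient and satisfies the stated linear bound on mis-decoded positions. This in turn reduces to the combinatorial LCS property in the definition of an $\eta$-synchronization string and is carried out in \cite{haeupler2017synchronization} via a careful global-matching (or streaming) decoder; the remaining steps are routine once this black box is in place. A secondary point to verify is that the collision/erasure convention in the lift preserves the claimed linear relation between insdel error count and Hamming-type corruptions, which follows from a direct accounting against the guarantees of the index decoder.
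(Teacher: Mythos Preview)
This theorem is quoted from \cite{haeupler2017synchronization} and is not proved in the present paper; it is invoked only as a black box in the proof of Theorem~\ref{Thm:explicitAffineCode}. So there is no ``paper's own proof'' to compare against, and your two-phase outline (recover indices from the $s$-components, lift to a length-$n$ word over $\Sigma_C \cup \{\bot\}$, then run the Hamming decoder of $C$) is exactly the strategy of the cited reference.

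That said, your own bookkeeping does not close. You first assert that $k$ insdels leave at most $k/(1-\eta)$ positions erased or wrong, and then in the next paragraph claim the Hamming-type corruption count is at most $2\epsilon n/(1-\eta)$ when $k = 2\epsilon n/(1-\eta)$; these two statements are inconsistent (the first gives $2\epsilon n/(1-\eta)^{2}$). More importantly, even the smaller of these numbers, $2\epsilon n/(1-\eta)$, already exceeds the $2\epsilon n$ erasure budget of a distance-$(2\epsilon n + 1)$ code, so the sentence ``its efficient (error/erasure) decoder recovers $c$'' does not follow from what you wrote. The indexing guarantee in \cite{haeupler2017synchronization} is finer than a single $k/(1-\eta)$ bound: it accounts separately for how insertions and deletions translate into substitutions versus erasures and controls the resulting \emph{half-error} count (where a substitution costs two and an erasure costs one) against the distance of $C$. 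You should invoke that precise statement rather than the coarse bound you wrote; as stated, the arithmetic in your proposal does not match the conclusion of the theorem.
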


Next we give our construction.
 
\begin{construction}
\label{constr:explicitAffineCode}
Let $s \in \Sigma^{n_0}$ be an $\eta$-synchronization string, where $\eta = 0.01$,  alphabet $\Sigma_s$ has size $ O( \frac{1}{\eta^2} )$ from \cite{CHLSW18}, where each symbol's binary representation has length $l_s$.

Let $C_0 $ be an $(n_0, m_0, d_0 = 2\epsilon n_0 +1)$ $\F_{2^{l_0}}$-linear code for Hamming distance, where   $ l_0  = O(\frac{1}{\epsilon^2}), m_0 = (1-\Theta(\epsilon) ) n_0$. One can use suitable off-the-shelf algebraic-geometric codes for this purpose~\cite{shumetal}.\footnote{The algebraic-geometric construction in fact only requires $l_0=O(\log (1/\epsilon))$ to achieve a relative distance of $\Omega(\epsilon)$ and rate of $1-O(\epsilon)$. However, we will insert buffers between the symbols of this codeword of size $t=O(1/\epsilon)$, and we have to take $l_0 \ge \Omega(t/\epsilon)$ so that the rate loss is only $O(\epsilon)$. If we pick a slightly larger $l_0 = O(\epsilon^{-2} \log (1/\epsilon))$, then we can also take $C_0$ to be an $\F_2$-linear code constructed using expander graphs as in \cite{AEL}. Here $\F_2$-linearity of the code means that the sum of two codewords is also in the code, but the code need not be linear over the extension field $\F_{2^{l_0}}$ (in particular, it need not be closed under multiplication by scalars in $\F_{2^{l_0}}$). One can check that $\F_2$-linearity of $C_0$ suffices for our final binary code to be linear over $\F_2$.}


Let $l = l_s + l_0$.

Let $t = \Theta(\frac{1}{\epsilon}) $.  

Let string $p = 0 \circ \underbrace{1\ldots 1}_{t+1 \text{ number of } 1's} $. Also call this pattern the boundary string.

\medskip
\noindent The encoding  $C:  \mathbb{F}_2^m \xrightarrow[]{} \mathbb{F}_2^n$ 
 is as follows, where $m = m_0 l_0$, $n =  n_0 l ( 1+ \frac{1}{t} ) + n_0(t+2) $.

\smallskip
\noindent
On input $x \in \mathbb{F}_2^{m}$:
\begin{itemize}

    \item View $x$ as in $(\mathbb{F}_2^{l_0})^{m_0}$ by partitioning every $l_0$ bits as an element in $\mathbb{F}_2^{l_0}$.

    \item Compute $  C_0(x)$, and let $y_i, i\in [n_0]$ be the $l_0$ bits corresponding to the  $i$-th coordinate, i.e. think of the $i$-th coordinate as a degree $\leq  l_0$ polynomial with coefficients in $\mathbb{F}_2$, taking   the binary coefficients  of its monomials to be $y_i$. Let $y = y_1 \circ y_2 \circ \cdots \circ y_{n_0}$
    
    \item Let $y'_i = s_i \circ y_i$ where $s_i$ is the binary representation of the $i$-th symbol of $s$; 
    
    \item After every $t$ bits of $y'_i$, insert a $0$-symbol to attain $y''_i$; 
    
    \item Let $z_i = p \circ y''_i $;

    \item The codeword $z = z_1 \circ \cdots \circ z_{n_0} $. 
    
\end{itemize}

\smallskip\noindent
The decoding is as follows.

On input $\tilde{z}$;

\begin{itemize}
  
    \item Start from the beginning of $\tilde{z}$;
    
    \item Locate every appearance of the boundary $p$;
    
    For the $i$-th appearance of $p$,
    \begin{itemize}
        
        \item Take the bits between the $i$-th boundary and the $i+1$-th boundary  to be the  $i$-th block;  
         
        \item Then for this $i$-th block, eliminate every $t+1$-th symbol (suppose to be the inserted $0$) to get $\tilde{y}'_{i} $;
    
    \end{itemize}
    
    \item View $\tilde{y}'$ as s.t.  each symbol $\tilde{y}'_{i} $ is a symbol which is the concatenation of a symbol from $s$ and an element from $ \mathbb{F}_2^{l_0}$;
    
    \item Use the decoding algorithm from Theorem \ref{thm:syncstringInsdelCode} on  $\tilde{y}'$ to get $x$.

\end{itemize}

\end{construction}

\begin{theorem}[Restate of Theorem \ref{Thm:explicitAffineCode}]
For  any $\epsilon > 0$, there  exists  an  explicit  affine   code  over $\mathbb{F}_2$, with rate  $1- \Theta(\epsilon)$, which can be efficiently decoded from any $ O(\epsilon^3  ) $ fraction of insertions and deletions.

\end{theorem}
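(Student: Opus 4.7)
The plan is to verify that Construction~\ref{constr:explicitAffineCode} achieves the four claimed properties: it is affine over $\F_2$, has rate $1 - \Theta(\epsilon)$, corrects an $O(\epsilon^3)$ fraction of insertions and deletions, and admits polynomial-time encoding and decoding. Affineness is immediate from the construction: $C_0$ is $\F_2$-linear (taking either the algebraic-geometric or AEL construction suffices), so each $y_i \in \F_2^{l_0}$ is an $\F_2$-linear function of $x$, and the further steps (prepending the fixed string $s_i$, inserting fixed $0$-bits at fixed positions after every $t$ content bits, and prepending the fixed boundary $p$) only add constant bits at fixed positions, so the overall encoding $x \mapsto z$ is an affine map. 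Efficiency also follows essentially by inspection: $C_0$ has polynomial-time encoding and decoding, the boundary-scanning step runs in linear time, and the inner synchronization-string decoder of Theorem~\ref{thm:syncstringInsdelCode} is polynomial-time.

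For the rate, with $l_s = O(1)$ (since $|\Sigma_s|$ is a constant), $l_0 = \Theta(\epsilon^{-2})$, and $t = \Theta(\epsilon^{-1})$, each block $z_i$ has length $(t+2) + (l_s + l_0)(1 + 1/t) = l_0 + O(\epsilon^{-1}) = l_0 \bigl(1 + O(\epsilon)\bigr)$. The message has length $m = m_0 l_0 = (1 - \Theta(\epsilon)) n_0 l_0$, while the codeword has length $n = n_0 \cdot l_0 \bigl(1 + O(\epsilon)\bigr)$, so the rate is $(1 - \Theta(\epsilon))/(1 + O(\epsilon)) = 1 - \Theta(\epsilon)$, as claimed.

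The heart of the argument is the insdel decoding analysis, which I would split into two steps. Step (i), the block-level claim: each insertion or deletion corrupts at most a constant number of blocks. The key structural fact is that because $y''_i$ contains a $0$ after every $t$ content bits, the content carries no run of $t+1$ consecutive $1$'s, so in an uncorrupted codeword the boundary pattern $p = 0 \circ 1^{t+1}$ occurs exactly once per block, at the expected position. A single insertion or deletion can either (a) destroy one boundary by breaking its $1^{t+1}$ run, causing two adjacent blocks to be merged and read as one, or (b) create at most one spurious boundary (by extending a $1$-run through a deleted intervening $0$, or by inserting a $1$ that turns an already-present $t$-run into a $t+1$-run adjacent to a preceding $0$), splitting one block into two. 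Either way only $O(1)$ blocks near the error are misread at the outer level, so $\kappa$ insdel errors induce at most $O(\kappa)$ outer-level errors on the recovered sequence $\tilde{y}'$, in the form of symbol substitutions, outer insertions, and outer deletions over the alphabet $\Sigma_s \times \F_2^{l_0}$. Step (ii) invokes Theorem~\ref{thm:syncstringInsdelCode}: since $y'$ is the symbol-wise augmentation of $C_0$ (which has relative Hamming distance $2\epsilon$) by a $\eta$-synchronization string, the outer decoder can correct $\tfrac{2\epsilon n_0}{1-\eta} = \Omega(\epsilon n_0)$ insdel errors on $\tilde{y}'$. Hence choosing $\kappa = \Theta(\epsilon n_0)$ suffices, which, since $n = \Theta(n_0 \epsilon^{-2})$, translates to an insdel-error fraction $\kappa/n = \Theta(\epsilon^3)$.

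The main obstacle is nailing down step (i) precisely: one must rule out cascading failures where interleaved insertions in the content and deletions in the boundary jointly mis-identify many blocks. The $0$-padding every $t$ bits is precisely what forbids long $1$-runs inside the content and thereby localizes each error's effect to $O(1)$ adjacent blocks, enabling the clean reduction to Theorem~\ref{thm:syncstringInsdelCode}.
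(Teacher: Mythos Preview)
Your proposal is correct and follows essentially the same approach as the paper: verify affineness directly from the construction, compute the rate from the block sizes, and analyze decoding by arguing that each bit-level insdel induces $O(1)$ outer-level insdel/substitution errors (the paper pins the constant down to $2$) before invoking Theorem~\ref{thm:syncstringInsdelCode}. Your explicit articulation of the structural fact that the $0$-padding forbids any run of $t{+}1$ ones inside the content, and hence localizes each error's effect on the boundary pattern, is exactly the mechanism the paper relies on.
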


\begin{proof}
We show that Construction \ref{constr:explicitAffineCode} gives such a code.

First we prove the correctness of the decoding.
Assume there are $k$ insdel errors. 
Since we insert a $0$-symbol after every $t$ bits of $y_i'$ to attain $y''_i$, there are no appearance of $p$ in $y''_i$.
Consider the adversary's insdels on $z$. 
We claim that  each insdel, in the worst case, can corrupt one of the blocks and at the same time delete or insert a block.  
This is because for each insertion or deletion, it can   corrupt a boundary or the   bits between  it and the next boundary to modify a block. Also at the same time, it may create a new boundary to insert a new block, or may corrupt an existing boundary to  delete a block.  
Note that this will not affect any other block  which has no insdels and is not on the left of corrupted boundary. 
Therefore, for $\tilde{y}'$, it can be viewed as having $2k$ insdels or modifications from $y'$. 
Also recall that $\tilde{y}$ has the structure of the code described in Theorem \ref{thm:syncstringInsdelCode}.
Thus by Theorem \ref{thm:syncstringInsdelCode}, we can correct $\frac{2\epsilon n_0}{1-\eta}  = \Theta(k)$ insdel errors if $\epsilon n_0  = \Theta(k)$.

Next we show the error rate and information rate are as stated.
We take $ \eta = 0.01$, $ l_0 = O(   \epsilon^{-2}) $,  $t= O(  \epsilon^{-1})$. So $l_s = \Theta(\eta^{-2}) = \Theta(1)$.
Recall that 
$$n =  n_0 l (1 +   \frac{1}{t}  ) + n_0 |p|=   n_0 ( l_0 + l_s ) (1 +   \frac{1}{t}  ) + n_0 (t+ 2) .$$
So combining with these parameters we get $k = \Theta(\epsilon n_0) =  O(\epsilon^{3}    n ) $.
The information rate is 
$$ \frac{m_0 l_0}{ n } =   \frac{(1-\Theta(\epsilon)) l_0}{(l_0 + l_s)(1+1/t)+(t+2) } =     \frac{  1-\Theta(\epsilon)  }{ (1+\Theta(\epsilon^2))(1+\Theta(\epsilon)) +\Theta(\epsilon) } \geq 1- \Theta(\epsilon) .$$ 

Now we prove that the code $C$ we constructed is affine.
By definition of affine code, we need to show that the space of the code is a linear subspace $\subseteq \mathbb{F}_2^{n}$, plus a shift $w$ in $\mathbb{F}_2^{n}$.
We let $w $ be the  codeword in $C$, which is the codeword for the zero element of  $ \mathbb{F}_2^{m} $.
Recall that the encoding will first compute $ C_0(x)$ and then partition it into binary bits, i.e. attaining $y(x) $. 
After that the encoding  just inserting some boundaries and  symbols to some specific positions of $y(x)$.
This is doing the same as putting $y$'s coordinates to some specific entries of a vector in $\{0, 1\}^n$ setting other entries to be $0$ and then add $w$ to this vector.
So  we can regard $w$ as the shift, and then to show $C$ is affine, we only need to show that that  $A = \{y(x)  \mid x\in \mathbb{F}_2^m \} \subseteq \{0, 1\}^{n_0 l_0}$ is a linear space.
Since coordinates of $A$ are over $\mathbb{F}_2$, any vector in $A$ times an element in $\mathbb{F}_2$ is still in $A$.
On the other hand, we claim that for two arbitrary vectors $u, v \in A$ corresponding to messages $ x_u, x_v$, it is true that $u \oplus v \in A$.
To see this, recall that $  C_0 $  is a linear code and thus $C_0(x_u) + C_0(x_v) = C_0(x_u + x_v) \in C_0$.
Also for each $i\in [n_0]$   the coordinates addition is done over $\mathbb{F}_2^{l_0}$.
So it is  adding polynomials in $\mathbb{F}_2[x]$ whose degrees are at most $l_0$.
Thus  the binary representation of the $i$-th coordinate of $  C_0(x_u + x_v) = C_0(x_u) + C_0(x_v)$ is
the bit-wise xor over the coordinates' binary representations i.e. $u_i \oplus v_i, i\in [n_0]$, where $u_i, v_i \in \{0, 1\}^{l_0}$ are the binary representation of the  $i$-th coordinates of $C_0(x_u), C_0(x_v) $. 
Therefore $u\oplus v = y(x_u + x_v)$. So $ u\oplus v \in A$.
So $A$ is linear and hence $C$ is affine.
\end{proof}

\section{Explicit Systematic Linear Codes}
\label{sec:systematic-linear}
In this section we construct explicit systematic linear codes for insertions and deletions. Our construction is quite simple, and can be described as follows.

\begin{construction}
\label{con:systematic}
Let $C: \F^m_q \to \F^n_q$ be a $q$-ary linear code for insertions and deletions. The systematic linear code is defined below.

\begin{description}
\item[Encoding:] For any message $x \in \F^m_q$, the encoding function is $\Enc(x)=(x, C(x))$. That is, we simply concatenate $x$ with the codeword $C(x)$.
\item[Decoding:] For any received word $y$, simply remove the first $m$ symbols to get a substring $y'$, and run the decoding function of $C$ on $y'$.
\end{description}
\end{construction}

We have the following lemma.

\begin{lemma}
If $C$ is a code that can correct up to $k$ deletions and insertions, then Construction~\ref{con:systematic} gives a code $\widetilde{C}$ that can also correct up to $k$ deletions and insertions, with codeword length $n+m$.
\end{lemma}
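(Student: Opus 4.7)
The goal is to show that the decoder of $\widetilde{C}$, which simply discards the first $m$ symbols of the received word and runs the decoder of $C$ on the remainder, succeeds whenever at most $k$ insertions/deletions have occurred. The plan is to reduce the situation to an instance of $C$-decoding by showing that the remainder is within edit distance $k$ of $C(x)$, and then invoke the correctness of $C$.

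The first step is to take any received word $\tilde{y}$ with $\ED(\tilde{y}, (x, C(x))) \leq k$ and fix an optimal edit-alignment witnessing this. This alignment naturally partitions $\tilde{y}$ into a prefix $y_1$ aligned with $x$ and a suffix $y_2$ aligned with $C(x)$, with
\[ \ED(x, y_1) + \ED(C(x), y_2) \;\leq\; k . \]
Let $\Delta_1 := |y_1| - m$, so $\Delta_1$ is the net number of insertions minus deletions applied to the prefix $x$; clearly $|\Delta_1| \leq \ED(x, y_1)$.

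The second step is to compare $y_2$ with $y'$, the string obtained by removing the first $m$ symbols of $\tilde{y}$. If $\Delta_1 \geq 0$, then $y' = y_1[m+1, m+\Delta_1] \circ y_2$, which differs from $y_2$ by at most $\Delta_1$ deletions (of the prepended symbols). If $\Delta_1 < 0$, then $y'$ equals $y_2$ with its first $|\Delta_1|$ symbols removed, so again $\ED(y', y_2) \leq |\Delta_1|$. In either case $\ED(y', y_2) \leq |\Delta_1| \leq \ED(x, y_1)$.

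The third step is just the triangle inequality for edit distance:
\[ \ED(y', C(x)) \;\leq\; \ED(y', y_2) + \ED(y_2, C(x)) \;\leq\; \ED(x, y_1) + \ED(C(x), y_2) \;\leq\; k . \]
Since $C$ corrects up to $k$ insdel errors, the $C$-decoder applied to $y'$ returns $x$, which is exactly what $\widetilde{C}$'s decoder outputs. The codeword length claim $n+m$ is immediate from the definition of the encoding. The only mildly delicate point is ensuring that the boundary between the prefix $x$ and the suffix $C(x)$ can indeed be chosen at the alignment level (so that the cost decomposes additively across the two halves); this is standard for edit distance and follows by taking any optimal edit script and cutting it at the moment all symbols of $x$ have been processed.
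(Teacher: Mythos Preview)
Your proof is correct and follows essentially the same approach as the paper: both arguments split an optimal edit script at the boundary between the $x$-prefix and the $C(x)$-suffix, observe that the net length change $|\Delta_1|$ in the prefix (the paper's $|r-t|$) is at most the number of edits spent there, and conclude that $y'$ lies within edit distance $k$ of $C(x)$. The only cosmetic difference is that you phrase the final step via the triangle inequality for edit distance, whereas the paper writes out an explicit edit sequence achieving the same bound $k-r-t+|r-t|\le k$.
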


\begin{proof}
Let $\Enc(x)=(x, C(x))$ and $y$ be a string obtained from $\Enc(x)$ by at most $k$ deletions and insertions. We only need to show that $y'$ can also be obtained from $C(x)$ by at most $k$ deletions and insertions. To see this, suppose that there are $r$ insertions and $t$ deletions in the $x$ part of $\Enc(x)$, that changes $x$ into $x'$, where $r+t \leq k$. Thus the total number of insertions and deletions in the $C(x)$ part is at most $k-r-t$. Suppose this changes $C(x)$ into $C'(x)$ and thus $y=(x', C'(x))$. 

Note that after $r$ insertions and $t$ deletions, the length of $x'$ is $m+r-t$. Thus to change $C(x)$ into $y'$, we can first use the same at most $k-r-t$ insertions and deletions in the $C(x)$ part, and then either insert or delete $|r-t|$ symbols. The total number of insertions and deletions needed is at most 
\[k-r-t+|r-t| \leq k. \]
\end{proof}

This gives the following corollary.

\begin{corollary}
Suppose there exists an explicit $q$-ary linear code with rate $R$ that can correct up to $\delta$ fraction of insertions and deletions, then there exists an explicit $q$-ary systematic linear code with rate $R/(1+R)$ that can correct up to $\delta/(1+R)$ fraction of insertions and deletions.
\end{corollary}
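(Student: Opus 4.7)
The plan is to apply the preceding lemma directly to the given explicit $q$-ary linear code $C : \F_q^m \to \F_q^n$, where by hypothesis $R = m/n$ and $C$ corrects up to $k = \delta n$ insertions and deletions. Construction~\ref{con:systematic} produces a systematic linear code $\widetilde{C} : \F_q^m \to \F_q^{n+m}$, and the lemma guarantees it still corrects $k$ insdels. The only remaining step is to translate the absolute parameters $(m, n, k)$ into the rate/fraction-of-errors parameters of $\widetilde{C}$.

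The rate of $\widetilde{C}$ is
\[ \frac{m}{n+m} \;=\; \frac{m/n}{1 + m/n} \;=\; \frac{R}{1+R}, \]
and the fraction of correctable insdel errors is
\[ \frac{k}{n+m} \;=\; \frac{\delta n}{n + m} \;=\; \frac{\delta}{1 + R}, \]
which are exactly the bounds claimed. Linearity of $\widetilde{C}$ is immediate from linearity of $C$: the encoding map $x \mapsto (x, C(x))$ is a $\F_q$-linear map.

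For the explicitness claim, the encoding of $\widetilde{C}$ simply prepends the raw message to $C(x)$, which adds no computational overhead beyond the explicit encoder of $C$. For decoding a received word $y$ of length (close to) $n+m$, we discard the first $m$ symbols and run the explicit decoder of $C$ on the remaining suffix $y'$; the lemma ensures $y'$ is within insdel distance $k$ of $C(x)$, so this decoder returns $x$ correctly. Both steps run in polynomial time since the decoder of $C$ does.

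There is no real obstacle here: the claim is simply the parameter bookkeeping that follows from the lemma. The one subtle point worth double-checking (already handled in the lemma's proof) is that imbalanced corruption between the prefix and suffix parts of $\Enc(x) = (x, C(x))$ does not spoil correctness: deletions/insertions in the prefix $x$ shift the boundary, but the decoder's fixed choice of ``drop the first $m$ received symbols'' still yields a string within $k$ insdels of $C(x)$, as shown by the triangle-inequality style argument in the lemma.
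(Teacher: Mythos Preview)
Your proposal is correct and matches the paper's (implicit) argument exactly: the corollary is stated without proof in the paper, being just the straightforward parameter translation from the preceding lemma, and your computation of the rate $m/(n+m) = R/(1+R)$ and error fraction $\delta n/(n+m) = \delta/(1+R)$ is precisely what is needed.
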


Combined with Corollary~\ref{cor:explicit} this gives the following theorem.

\begin{theorem}
For any $n$ there exists an explicit systematic linear code $C_n$ with constant rate which can be efficiently encoded in linear time and decoded from $\Theta(n)$ insertions and deletions in time $O(n^4)$. The generator matrix of $C_n$ can be deterministically computed in polynomial time.
\end{theorem}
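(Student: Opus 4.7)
The plan is to combine the machinery developed immediately before the statement. Start with the explicit asymptotically good binary linear insdel code $C$ promised by Corollary~\ref{cor:explicit}, which has constant rate $R > 0$, corrects a constant fraction $\delta > 0$ of insertions and deletions, admits linear-time encoding, $O(n^4)$-time decoding, and whose generator matrix can be deterministically computed in polynomial time. Feed this code into Construction~\ref{con:systematic} to obtain the systematic code $\widetilde{C}$ defined by $\Enc(x) = (x, C(x))$ with decoder that strips off the first $m$ message symbols and invokes the decoder of $C$.

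Next I would verify the three advertised properties for $\widetilde{C}$. Linearity and the systematic form are immediate from the definition $\Enc(x) = (x, C(x))$, since the map $x \mapsto (x, C(x))$ is $\F_2$-linear whenever $C$ is. For the error-correction guarantee, the lemma preceding the theorem says that $\widetilde{C}$ inherits the ability to correct the same absolute number of insdels as $C$; combined with the corollary after it, if $C$ has rate $R$ and corrects a $\delta$-fraction, then $\widetilde{C}$ has rate $R/(1+R)$ and corrects a $\delta/(1+R)$-fraction. Plugging in the constants from Corollary~\ref{cor:explicit}, both quantities remain constants bounded away from $0$, so $\widetilde{C}$ has constant rate and corrects $\Theta(n)$ insdels on codewords of length $N = n + m = \Theta(n)$.

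For the complexity claims, encoding $\widetilde{C}$ costs one copy of $x$ plus one invocation of the encoder of $C$, which is linear in $N$ by Corollary~\ref{cor:explicit}. Decoding first chops off the $m$-symbol prefix in linear time and then runs the decoder of $C$ on the remainder in time $O(n^4) = O(N^4)$. The generator matrix of $\widetilde{C}$ is obtained by prepending an $m \times m$ identity block to the generator matrix of $C$, so it inherits the polynomial-time deterministic pre-computation from Corollary~\ref{cor:explicit}.

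There is really no serious obstacle here; the theorem is a bookkeeping consequence of the preceding corollary once Corollary~\ref{cor:explicit} is in hand. The only small subtlety to double-check is the error-budget accounting in the decoder: when insdels are distributed between the prefix $x$ and the suffix $C(x)$, the decoder still sees a string within edit distance at most $k$ of $C(x)$, which is exactly the content of the lemma before the final corollary. After noting this, the theorem follows immediately.
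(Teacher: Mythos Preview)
Your proposal is correct and follows exactly the paper's approach: the theorem is stated as an immediate consequence of feeding Corollary~\ref{cor:explicit} into Construction~\ref{con:systematic} and invoking the preceding lemma and corollary, with no further argument. Your verification of the rate, error-correction, and complexity bookkeeping matches the paper's (implicit) reasoning precisely.
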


\begin{remark}
Our explicit construction of a systematic linear code for insertions and deletions simply concatenates the message with another linear code, and this decreases the rate of the code. As we showed in Section~\ref{sec:bound}, existentially this loss of rate can be avoided. On the other hand, if one considers a weaker notion of systematic code, where one only requires the message to appear as a subsequence of the codeword, then again any linear code can be made systematic in this sense by doing a basis change in the generator matrix. However, to make it systematic in the standard sense may require a permutation of the symbols, and this can potentially change the distance of the code. We also mention that for non linear/affine systematic codes, the constructions in \cite{KZXK18,Haeupler19} based on document exchange protocols can correct up to $\delta n$ fraction of insertions and deletions with rate $1-O(\delta \log^2(1/\delta))$. By our bounds in Section~\ref{sec:bound}, such codes cannot be linear, but it remains an interesting question to see if any affine systematic code can achieve these parameters.
\end{remark}
 
\section{Conclusion and Open Questions}
\label{sec:conclusion}
This paper gives novel existential upper and lower bounds for linear and affine insdel codes and their rate-distance tradeoffs. We also give two explicit constructions of efficient codes for these settings. 

The linear insdel codes described in this paper are the first non-trivial such codes. Their performance guarantees disprove the claim which suggested that  the simple $k+1$-fold repetition code might essentially be the best linear code to correct $k$-deletions. In contrast, we show that there exist much more rate-efficient linear insdel codes (of rate approaching $1/2$) that can   correct a constant fraction of insdel errors.  Indeed, as a first order approximation, the results of this paper suggest that linear codes might only lose a factor of two in the rate for their structural simplicity. Indeed over large alphabets, they achieve rate approaching the half-Singleton bound, which we also show to be tight. 
Furthermore, affine insdel codes, which have a similarly simple structure, break even this barrier and could potentially be as efficient as fully-general insdel codes. For affine codes, we even have an explicit construction of rate approaching $1$ that can efficiently correct a constant fraction of insdel errors. For linear codes, our construction to correct a constant fraction of insdel errors has rate boounded away from $0$, but falls well short of the existential rate $1/2$ threshold.

\medskip There are numerous intriguing questions in the subject that remain open making this work the beginning of a new line of research aimed toward painting a more complete picture of the power and limitations of linear and affine codes for synchronization errors. Concretely, the following are some of the several interesting questions brought to the fore by our work:
\begin{itemize}
\item Can one find a better distribution of run lengths of $0$'s that we intersperse in our construction that improves the rate of our construction? Could we possibly approach a rate of $1/2$ with this method, or are there any inherent limitations to this approach?



    \item Can one explicitly construct efficient linear insdel codes over large alphabets which get arbitrarily close to the Half-Singleton bound proven in Theorem~\ref{halfsingletonboundupperandlower}?

\item Can one explicitly construct efficient affine codes over large alphabets that approach the Singleton bound similar to the non-affine insdel codes in \cite{haeupler2017synchronization}? A simple candidate would be one which adds a symbol from a synchronization string after every $\frac{1}{\epsilon}$ symbols of the codewords of a good algebraic geometry code. This affects the rate by only a $(1 - \epsilon)$-factor but whether such few synchronization symbols are sufficient to synchronize a string and reduce any insdel errors to the Hamming setting is unclear.

 \item Is there any separation between affine insdel codes and unrestricted general insdel codes?

   \item Obtain better bounds (even non-constructively) on   asymptotic rate-distance trade-offs in the high-rate regime for linear insdel codes over the binary (or fixed $q$-ary) alphabet.
   
   \item What can be said about the zero-rate regime? That is, what fraction of deletions can be corrected by linear (or affine) codes of rate bounded away from $0$? The random coding argument for $q$-ary codes based on the expected length of the longest common subsequence of two random string (and exponential concentration of this value around the expectation) should probably work for random linear codes since we only need pairwise independence (though we have not checked the details). Can one do much better than the random coding bound, as was done in \cite{BGH17} for general codes?
   

    
\end{itemize}

\bibliographystyle{alpha}
\bibliography{ref}

\appendix

\section{Skipped proofs}
\label{app:skipped-proofs}
\subsection{Proof of Claim~\ref{claim:pairwise-ind}}

Assume without loss of generality that $s_1 < s_2 < \cdots < s_t$ and $r_1 < r_2 < \cdots < r_t$. The simplest case is when one of $x$ and $x'$ is $0^m$. Without loss of generality assume $x=0^m$, then $C=0^n$ while $C'$ is now a vector chosen uniformly from $\F^n_q$ (since $x' \neq x$). Thus it is clear that $\Pr[\forall k \in [t], C_{s_k}=C'_{r_k}] \leq q^{-t}$. Now we assume that $x \neq 0^m$ and $x' \neq 0^m$, and show that $\forall k \leq t-1$, we have

\[\Pr[ C_{s_{k+1}}=C'_{r_{k+1}} | \forall \ell \leq k, C_{s_\ell}=C'_{r_\ell}] \leq q^{-1}.\]

Consider the pair $(s_{k+1}, r_{k+1})$. We have two different cases.

\begin{description}
\item [Case 1:] $s_{k+1} \neq r_{k+1}$. Without loss of generality assume $r_{k+1} > s_{k+1}$. Then we can fix all the columns in $G$ with index in $\{s_1, \cdots, s_{k+1}, r_1, \cdots, r_k\}$. This also fixes $\{C_{s_\ell}, C'_{r_\ell}, \ell \leq k\}$ and $C_{s_{k+1}}$. Note that the column in $G$ with index $r_{k+1}$ is still completely uniform, thus $C'_{r_{k+1}}$ is also uniformly chosen from $\F_q$. Hence we have 

\[\Pr[ C_{s_{k+1}}=C'_{r_{k+1}} | \forall \ell \leq k, C_{s_\ell}=C'_{r_\ell}] \leq q^{-1}.\]

\item [Case 2:] $s_{k+1} = r_{k+1}$. We fix all the columns in $G$ with index in $\{s_1, \cdots, s_k,$ $r_1, \cdots, r_k\}$. This also fixes $\{C_{s_\ell}, C'_{r_\ell}, \ell \leq k\}$. Note that the column in $G$ with index $r_{k+1}$ is still completely uniform, and $C_{s_{k+1}}= \langle x, G_{r_{k+1}}\rangle$, $C'_{r_{k+1}}= \langle x', G_{r_{k+1}}\rangle$. Here $G_{r_{k+1}}$ denotes the column in $G$ with index $r_{k+1}$ and $\langle \rangle$ denotes the inner product over $\F_q$. Thus
\[ C_{s_{k+1}}-C'_{r_{k+1}}=\langle x-x', G_{r_{k+1}}\rangle.\]
Since $x-x' \neq 0^m$, we have
\[\Pr[ C_{s_{k+1}}=C'_{r_{k+1}}] =\Pr[ C_{s_{k+1}}-C'_{r_{k+1}}=0] \leq q^{-1}.  \]
\end{description}

\subsection{Proof of Theorem~\ref{thm:existence-systematic}}

Similar to the proof of Theorem~\ref{thm:existence-linear}, we take a random matrix $G \in \F^{m \times n}_q$ where each entry is chosen independently and uniformly from $\F_q$ and consider the encoding $y =x G$, where $x \in \F^m_q$ is the message. Now, we would like the code to have the following two properties.
\begin{description}
\item[Property 1]: For any two different codewords $C, C'$, $\LCS(C, C') < (1-\delta)n$.
\item[Property 2]: The leftmost $m \times m$ submatrix of $G$ has full rank. 
\end{description}
Suppose $G=[M, V]$ is such a matrix where $M$ is the leftmost $m \times m$ submatrix. Change the encoding function to $y=x M^{-1} G=x [I, M^{-1}V]$, this is a systematic linear code. 

Now for any two different messages $x_1 \neq x_2$, we have $\forall i \in \{1, 2\}$, $x_i [I, M^{-1}V] =(x_i M^{-1}) G$. Since $x_1 \neq x_2$, we must have $(x_1 M^{-1}) \neq (x_2 M^{-1})$ and thus 
\[LCS\left((x_1 M^{-1}) G,  (x_2 M^{-1}) G\right) < (1-\delta)n.\] Therefore the code can still correct up to $\delta n$ insertions and deletions.

Note that property 1 is as before and thus it holds with probability $> 1-q^{2m} 2^{2\ent(\delta)n} q^{(\delta-1)n}$. The probability that property 2 holds is 

\[\prod_{i=1}^m (1-1/q^i) > (1-1/q)(1-\sum_{i=2}^m 1/q^i) > (1-1/q)(1-1/(q^2-q))= 1-(q+1)/q^2.\]

Note that this probability is larger than $1/4$ for all $q \geq 2$. Thus to guarantee the existence of such a matrix $G$ we only need $q^{2m} 2^{2\ent(\delta)n} q^{(\delta-1)n} \leq 1/4$. It then suffices to take  $m \log q=\frac{1-\delta}{2}n \log q-\ent(\delta)n-1$. Thus the information rate is 
\[m/n=(1-\delta)/2-\ent(\delta) / \log q-o(1). \]

\subsection{Proof of Lemma~\ref{lem:largedimensionimplieszeros}}
Suppose a vector $v$ is drawn uniformly at random from $A$. Note that each coordinate of $v$ is either always identical to zero or uniformly random and therefore zero with probability at least $1/q$. The expected number of zeros in $v_1$ as well as in $v_2$ is therefore at least $d/q$. Furthermore two coordinates of $v$ are either pairwise independent or always fixed multiples of each other (and therefore either both or neither zero)---we call two such coordinates dependent. 

Consider the maximal sets of dependent coordinates, i.e., the connected components in the graph on all coordinates of $v$ which connects any two coordinates if they are dependent. Note that an always-zero coordinate is pairwise independent from any other coordinate and therefore forms its own component/maximal dependent set. For every maximal set $S$ of  dependent coordinates the restriction of $A$ onto the coordinates in $S$ is a one (or zero) dimensional subspace. It is therefore possible to find a subspace $A' \subseteq A$ of one smaller dimension in which all coordinates in $S$ are always zero and therefore now each form their own maximal dependent set. Furthermore, the remaining maximal dependent sets $S'$ in $A'$ remain the same as for $A$. We use this projection $t/2$ times on the $t/2$ largest maximal dependent sets of coordinates. This results in a subspace $A'' \subset A$ of dimension $t/2$ in which the largest maximal dependent set is of size at most $\frac{2d}{t/2} = \frac{4d}{t}$.

The number of zeros $Z_1$ in the first $d$ coordinates when drawing a random vector from $A''$ can now be seen as a weighted sum $Z_1 = \sum_S w_S X_S$ of pairwise independent Bernoulli variables $X_S$, one for each maximal dependent set $S$. Here the weight $w_S$ equals the number of coordinates in $S$ that are among the first $d$ coordinates so that $\sum_S w_S = d$. Also $\P[X_S = 1]=\frac{1}{q}$ unless $S$ consists of a single always-zero coordinate in which case $\P[X_S = 1]=1$. Thanks to our construction of $A''$ each weight $w_S$ is at most $\frac{4d}{t}$. Overall we have \[\mathbb{E}[Z_1] = \sum_S w_S \mathbb{E}[X_S] \geq \frac{1}{q}\sum_S w_S = \frac{d}{q}\]
and 
\begin{align*}
\mathrm{Var}[Z_1] = \sum_{S} \mathrm{Var}[w_s X_S] & = \sum_{S} w_S^2 \mathrm{Var}[X_S] \\
& \leq (\max_{S} w_S)(\sum_{S} w_S) \cdot \frac{1}{q} \\
&\leq \frac{4d}{t} \cdot d \cdot\frac{1}{q} \\
& = \frac{4d^2}{qt} \\
& = \left(\frac{2d}{q}\right)^2  \cdot \frac{q}{t}.
\end{align*}
Applying Chebychev's inequality gives
\[\mathrm{Pr}\left[Z_1 \leq \mathbb{E}[Z_1]- 2 \sqrt{\mathrm{Var}[Z_1]} \right] \leq \frac{1}{4} \]

The exact same argument and equation also holds for the number $Z_2$ of zeros in the last $d$ coordinates of a random vector from $A''$. Therefore, the probability to draw a non-zero vector from $A''$ with at least $\mathbb{E}[Z_1]- 2 \sqrt{\mathrm{Var}[Z_1]} \geq \frac{d}{q} - 2 \left(\frac{2d}{q}\right)  \cdot \sqrt{\frac{q}{t}} \ge\frac{d}{q} (1 - 4 \sqrt{\frac{q}{t}})$ zeros in both halves is at least $1 - 2\cdot\frac{1}{4} - q^{-t/2} > 0$. (The $q^{-t/2}$ term is to account for the possibility of sampling the $0$ vector from $A''$.) This implies the existence of such a vector in $A'' \subset A$.

\end{document}